\documentclass[
preprint,
aps,
floatfix,
nofootinbib,
superscriptaddress,10pt]{revtex4-1}
\pdfoutput=1

\usepackage{tikz}
\usepackage{graphicx}
\usepackage{dcolumn}
\usepackage{bm}
\usepackage{amssymb} 

\usepackage{xcolor} 
\usepackage{amsthm} 
\usepackage{tensor} 
\usepackage{amsmath}
\usepackage{amssymb}
\usepackage{upgreek}
\usepackage[bottom, stable]{footmisc} 
\usepackage{verbatim}
\usepackage{mathtools}
\usepackage{enumitem}
\usepackage[mathscr]{euscript}
\usepackage[auto]{contour}
\usepackage{bbold}
\usepackage[colorlinks, allcolors=dblue!70!black, linktocpage]{hyperref}
\usepackage{cleveref}
\usepackage{longtable,array}

\definecolor{linkblue}{rgb}{0.1,0.3,.7}
\definecolor{forestgreen}{rgb}{0.13, 0.55, 0.13}
\definecolor{lava}{rgb}{0.81, 0.06, 0.13}
\definecolor{dblue}{rgb}{0.55,0.25,4.0} 

\newcommand{\nulleq}{~\widehat{=}~} 

\newtheorem{theorem}{Theorem}

\newtheorem{definition}{Definition}
\newtheorem{corollary}{Corollary}

\newtheorem{lemma}{Lemma}

\let\oldc\c \let\oldd\d \let\oldi\i

\newcommand{\beq}{\begin{equation}}
\newcommand{\eeq}{\end{equation}}
\newcommand{\bea}{\begin{align}}
\newcommand{\eea}{\end{align}}
\newcommand{\bos}{\boldsymbol}
\newcommand{\bom}{\boldsymbol{\omega}}
\newcommand{\boep}{\boldsymbol{\epsilon}}
\newcommand{\veps}{\varepsilon}

\newcommand{\boj}{\bos{\mathit{J}}}

\newcommand{\ed}{\mathrm{d}}
\newcommand{\tom}{\mathtt{\Omega}}
\newcommand{\boq}{\boldsymbol{Q}}
\newcommand{\bob}{\boldsymbol{B}}

\newcommand{\boe}{\boldsymbol{E}}
\newcommand{\boc}{\boldsymbol{\mathscr{C}}}
\newcommand{\bots}{\boldsymbol{\mathtt{S}}}
\newcommand{\bog}{\boldsymbol{\mathtt{e}}_\mathfrak{G}}
\newcommand{\bth}{\boldsymbol{\theta}}
\newcommand{\bTh}{\boldsymbol{\Theta}}

\newcommand{\bmq}{\bar{\mathsf{q}}}
\newcommand{\ul}{\underline}
\renewcommand{\i}{\mathbin{\lrcorner}}

\def\cB{{\cal{B}}}
\def\cC{{\cal{C}}}

\def\cL{{\cal{L}}}
\def\cH{{\cal{H}}}
\def\cP{{\cal{P}}}
\def\cV{{\cal{V}}}
\def\cM{{\cal{M}}}
\def\cN{{\cal{N}}}
\def\cS{{\cal{S}}}
\def\cE{{\cal{E}}}

\def\cZ{{\cal{Z}}}
\def\cT{{\cal{T}}}
\def\cU{{\cal{U}}}
\def\cF{{\cal{F}}}
\def\cQ{{\cal{Q}}}
\def\cR{{\cal{R}}}
\def\cW{{\cal{W}}}
\def\sq{{\mathscr{Q}}}
\def\sf{{\mathscr{F}}}
\def\fs{{\mathfrak{s}}}
\def\fa{{\mathfrak{a}}}

\def\fg{{\mathfrak{g}}}
\def\fq{{\mathfrak{q}}}
\def\ft{{\mathfrak{t}}}
\def\fw{{\mathfrak{w}}}

\def\sfa{{\mathsf{a}}}
\def\sfL{{\mathsf{L}}}

\def\d{\delta}
\def\ud{\updelta}

\def\c{\chi}
\def\z{\zeta}
\def\p{\phi}
\def\a{\alpha}
\def\k{\kappa}

\newcommand{\vs}{\varsigma}

\newcommand{\dg}{\delta g}
\def\P{\Phi}
\def\S{\Sigma}

\def\vr{\varrho}
\def\pa{\partial}

\theoremstyle{remark}
\newtheorem*{remark}{Remark}

\allowdisplaybreaks

\begin{document}


\title{Dynamical entropy and thermodynamics of causal diamonds}

\author{Avik Chakraborty}
\email{avikchakraborty589@gmail.com}
 
\author{Pratik K. Das}
\email{dask.pratik@gmail.com}
 
\author{Manavendra Mahato}%
 \email{manav@iiti.ac.in}
\affiliation{ Department of Physics,\\
 Indian Institute of Technology Indore, \\
 Simrol, Khandwa Road, 453552, Indore, India
}%


\begin{abstract}
\vspace{.3cm}
We study the dynamical entropy of causal diamonds using Wald-Zoupas charges and fluxes on their null boundaries. For a chosen boundary symplectic potential, the boost charge defines an entropy made of an area term plus an expansion correction, which reduces to one quarter of the area at the edge on the diamond. In Einstein gravity, we find an exact balance law relating the entropy on any two cuts of the null boundary to a signed flux, so the entropy decreases only when this flux is non-negative. We obtain a linearised physical process first law without assuming a background Killing symmetry, both in the presence and absence of external matter; and extend it for semiclassical gravity based on the improved quantum null energy condition. We also examine causal diamonds in maximally symmetric spacetimes and in Einstein-scalar gravity.
\end{abstract}

\maketitle



\section{Introduction}
\label{sec:intro}

The laws of black hole mechanics \cite{Bardeen:1973gs}, together with the Bekenstein entropy \cite{Bekenstein:1973ur, Bekenstein:1974ax} and Hawking radiation \cite{Hawking:1975vcx}, suggest a close relation between gravitational dynamics, geometry and entropy. The same
relation appears for cosmological horizons \cite{Gibbons:1977mu} and, more generally, for local causal horizons, from which the Einstein equation can be recovered as an equation of state \cite{Jacobson:1995ab} or as a condition of entanglement equilibrium \cite{Jacobson:2015hqa}. In a diffeomorphism invariant theory of gravity, the covariant phase space formalism \cite{Zuckerman:1986vzu, Crnkovic:1986ex, Crnkovic:1987tz,Lee:1990nz} gives this relation a nice interpretation: the entropy entering the first law of a stationary black hole is the Noether charge of the horizon-generating Killing field \cite{Wald:1993nt, Iyer:1994ys}. Extending this description to
evolving boundaries requires an entropy assigned to their individual cross-sections and a balance law describing its change. Laws of this kind have been formulated for trapped surfaces and dynamical horizons \cite{Hayward:1993wb, Ashtekar:2004cn}.A linearised version of the second law has been derived for higher curvature theories \cite{Wall:2015raa}, and a second law extending beyond linear order has been formulated for effective field theories of gravity \cite{Hollands:2022fkn}. It remains to understand how much of gravitational thermodynamics survives for a finite spacetime
region whose boundary need not admit any Killing symmetry.

We explore this question in the case of causal diamonds. A finite causal diamond is the region between two timelike separated points, bounded by portions of their future and past light cones \cite{Gibbons:2007nm}. Such regions, also called double cones, are the basic local domains of quantum field theory \cite{Dappiaggi:2010iq, Witten:2019qhl}. For a conformal field theory in flat space, the vacuum modular flow of a ball is the geometric flow of the conformal Killing field that preserves its diamond \cite{Casini:2011kv}. We consider diamonds whose null boundaries $\cN^-$ and $\cN^+$ meet at a smooth spacelike codimension-two edge $\cB$. The construction does not require an asymptotic region. For spherical diamonds in maximally symmetric spacetimes, Jacobson and Visser derived a comparison first law and a Smarr relation using the conformal Killing field that preserves the diamond \cite{Jacobson:2018ahi}. Their first law relates variations of
the edge area, the volume of the maximal spatial slice, the matter Hamiltonian, and the cosmological constant, and its thermodynamic interpretation assigns a negative temperature to the diamond \cite{Jacobson:2018ahi, JacobsonVisser:2019negative}. Related thermodynamic descriptions of light cones and stretched light cones are given in \cite{DeLorenzo:2017tgx, Barbosa:2026bmv, Svesko:2020yxo, Parikh:2017aas}. These results concern equilibrium configurations or particular slices. They motivate a description that applies to arbitrary cross-sections of the null boundary and keeps track of the flux through it.

A related development concerns the entropy of dynamical black holes. Hollands, Wald and Zhang constructed an entropy that satisfies a first law on arbitrary horizon cross-sections for nonstationary perturbations of a stationary black hole \cite{Hollands:2024vbe}. In Einstein gravity, this entropy adds an expansion correction to the area. Subsequently, it was interpreted as an improved Noether charge, and the associated physical process law was derived from the linearised Raychaudhuri equation \cite{Visser:2024pwz}. The construction has also been applied to perturbations of cosmological horizons \cite{Zhao:2025zny}, and related approaches to entropy balance on null hypersurfaces have been developed in \cite{Hopfmuller:2018fni, Rignon-Bret:2023fjq, Shajiee:2026coz}. These results make it useful to examine explicitly how such an entropy depends on the boundary symmetry, the boundary symplectic potential, and the background geometry, particularly when the background itself is not stationary.

Our starting point is the Wald-Zoupas prescription for localised charges and corresponding fluxes \cite{Wald:1999wa}. Its application to general null boundaries, with non-vanishing expansion and shear, was developed in \cite{Chandrasekaran:2018aop}. It fits into a broader analysis of gravitational subsystems, edge modes, and boundary symplectic structures \cite{Donnelly:2016auv, Speranza:2017gxd, Harlow:2019yfa} and of the phase space of null surfaces \cite{Hopfmuller:2016scf, Chandrasekaran:2020wwn}. For causal diamonds, the boundary symmetry algebra, the matching conditions across the edge and the associated conservation laws were established in \cite{Chandrasekaran:2019ewn}. In particular, the charge of the
constant boost at the edge is proportional to its area. We use this framework to study the entropy of cuts of the future boundary and the balance law between them. Because the background boundary is allowed to be nonstationary, the gravitational flux and its variations must be retained throughout.

For the boundary boost $\xi^a=\kappa_\beta v\ell^a$, where $v$ is an affine parameter vanishing at $\cB$ and $\kappa_\beta>0$, a constant, the Wald-Zoupas charge on a cross-section $\cC$ in Einstein gravity is given by
\begin{align}
 \sq_\xi[\cC]
 &=-\frac{\kappa_\beta}{2\pi}\cS[\cC],
 \nonumber\\
 \text{where} \quad
 \cS[\cC]
 &=\frac14\int_{\cC}\boep^+_{d-2}\,(1-v\vartheta).
\end{align}
Here $\ell^a$ is the future directed null generator, $\vartheta$ its expansion, and $\boep^+_{d-2}$ the positive area form in the orientation adopted below. At the edge, the expansion term vanishes and $\cS[\cB]=A(\cB)/4$, in agreement with \cite{Chandrasekaran:2019ewn,Jacobson:2018ahi}. Away from the edge, the entropy also depends on how the cross-sectional area changes along the generators, and it has the same functional form as the dynamical black hole entropy of \cite{Hollands:2024vbe}.

Before perturbing, we establish an exact balance law in Einstein
gravity. We denote the square root of the determinant of the codimension-two metric (of the codimension-two surface $\cC$) by $\sfa$, and call it the area density, as integrating over it gives the area of the codimension-two surface. The expansion term removes the first derivative of the area density $\sfa$: one has
$\partial_v [\sfa(1-v\vartheta)]=-v\,\partial_v^2\sfa$, so the entropy responds only to the focusing of the generators. Combined with the Raychaudhuri equation, this gives, for arbitrary and not necessarily uniform cuts $\cC_1$ and $\cC_2$ with $\cC_2$ to the future of $\cC_1$,
\beq
 -\frac{\kappa_\beta}{2\pi}\big(\cS[\cC_2]-\cS[\cC_1]\big)
 =\Delta\cE+\sf_\xi[\cN^+_{12}]
 =\int_{\cN^+_{12}}\varphi\,\boep^+_{d-1},
\eeq
where $\Delta\cE$ is the signed matter flux, $\sf_\xi$ the
gravitational Wald-Zoupas flux, and $\varphi$ a scalar flux density relative to the positive null measure. Consequently, the entropy is non-increasing between two cuts, and under every deformation of a cut supported in an open region, if and only if the corresponding flux is non-negative. The expansion makes $\varphi$ positive, while shear and matter obeying the null energy condition make it negative. The required positivity is therefore a genuine restriction rather than a consequence of the null energy condition, and in three dimensions it cannot hold for such matter at all. Nevertheless, in four or higher dimensions, the entropy always decreases overall, from $A(\cB)/4$ at the edge to zero at a regular corner.

For perturbations sourced by external matter, using the covariant phase space identity, we obtain the physical process version of the first law,
\beq
 -\frac{\kappa_\beta}{2\pi}\Delta\d\cS
 =\Delta\d\cE+\d\sf_\xi[\cN^+_{12}],
\eeq
where $\Delta$ compares two cuts and $\d$ denotes a perturbation on a fixed boundary structure and boost normalisation. This law holds for any metric theory in which the charge-entropy relation is available. The matter term is a signed flux defined with the same orientation as the charge balance law. Hence, non-negativity of the combined flux is equivalent to $\Delta\d\cS\le0$, with a corresponding local statement. In contrast with the stationary black hole setting
\cite{Hollands:2024vbe, Visser:2024pwz}, the boundary boost is
generally not a Killing field of the background, so the gravitational flux contributes already at first order, even in vacuum. At second order, the balance depends on the variation of the symplectic current as a background-dependent bilinear map and on the second order change of the metric along the chosen family of solutions. The reduction to the modified canonical energy of \cite{Hollands:2012sf, Hollands:2024vbe} requires the generator to be a Killing vector.

We also examine a semiclassical extension. The generalised second law \cite{Bekenstein:1973ur, Bekenstein:1974ax} has been established for semiclassical causal horizons \cite{Wall:2011gsl}. In the dynamical entropy framework, the quantum null energy condition \cite{Bousso:2015mna, Bousso:2015wca, Balakrishnan:2017bjg} combines with the physical process law to give a generalised second law involving a modified von Neumann entropy \cite{Hollands:2024vbe}. Since the null boundary of a causal diamond has a non-zero classical expansion, we instead use the improved quantum null energy condition (INEC) \cite{Ben-Dayan:2023inz, BenDayan:2026inec}. Under the assumptions needed for its perturbative and integrated versions, an expansion dependent matter entropy can be combined with the gravitational entropy in a semiclassical balance law. We keep the remainder that measures the departure from saturation of the assumed inequality, so that the condition for the reduction of the combined entropy is stated in terms of an effective flux that includes this remainder.

For spherical diamonds in maximally symmetric spacetimes, we compare this null boundary construction with the conformal Killing description of \cite{Jacobson:2018ahi}. The conformal Killing field $\zeta^a$ and the boundary boost $\xi^a$ have different profiles along the null generators, though they coincide at the edge. Keeping this distinction explicit, we derive the comparison identity and the Smarr relation on a generic spatial slice, including the contributions that vanish on the maximal slice. Finally, we extend the null boundary analysis to minimally coupled Einstein-scalar gravity \cite{Faulkner:2010fh, Anabalon:2023efw}. With the boundary potential chosen here, the scalar contributes to the symplectic flux while the entropy retains its geometrical form; its evolution depends on the scalar through the coupled field equations, and the scalar, like ordinary matter, acts to increase the entropy.

The paper is organised as follows. In \cref{sec:background}, we review the necessary background: covariant phase space formalism, Wald-Zoupas prescription, null boundary symmetries, and Gaussian null coordinates. In \cref{sec:dyn_entropy_cd} we obtain the local charges for causal diamonds and identify the dynamical entropy. Then, we derive the exact entropy balance in Einstein gravity, the physical process law, and the associated local and integrated entropy inequalities, including first and second order vacuum perturbations in  \cref{sec:physical_process_1st_law}.
In \cref{sec:generalise_second_law_cd}, we study the semiclassical
extension of the entropy inequality based on the INEC. We discuss the notion of dynamical entropy and thermodynamics of spherical
diamonds in maximally symmetric spacetimes in \cref{sec:CD_MSS}. Finally, in \cref{sec:EH-scalar_gravity}, we explore the null surface, local charges, fluxes, balance laws and thermodynamics in Einstein-scalar gravity. We conclude in \cref{sec:conclusion}. The appendices contain a review of the equilibrium thermodynamics of diamonds (\cref{sec:cd_mss}), an independent derivation from the Raychaudhuri equation (\cref{sec:dyn_entropy_raychaudhuri}), explicit examples (\cref{app:wz-examples}), the black hole entropy inequality using INEC (\cref{sec:bh_gsl}), and the symmetries of a null surface in Einstein-scalar gravity (\cref{sec:ES-null-symmetry}).

\subsection*{Notation and conventions}
We use $G=\hbar=c=k_B=1$ unless stated otherwise, metric signature $(-,+,\ldots,+)$ and the curvature convention
$[\nabla_a,\nabla_b]\cV^c=\cR^c{}_{dab}\cV^d$. The conventions below will be employed.
 \begin{itemize}
\item Any bold font Roman/Greek letter (e.g. $\bos{K}, \bos{\a}$) denotes a differential form in spacetime manifold $\cM$. 

\item  Tensors on the spacetime $\cM$ will be denoted by lowercase Roman abstract indices $a,\, b,\, c$, etc. from the first part of the alphabet.

\item Tensors on any hypersurface will be denoted by the lowercase Roman abstract indices $i,\, j,\, k$, etc. from the second part of the alphabet.

\item Tensors on any cross-section of a hypersurface will be denoted by the uppercase Roman indices $A,\, B,\, C,$ etc.
\end{itemize}
The main symbols are listed below.

\vspace{.3cm}
\begingroup
\small
\setlength{\parindent}{0pt}
\setlength{\parskip}{0pt}
\newcommand{\glossaryentry}[2]{%
  \par\noindent
  \parbox[t]{0.27\linewidth}{\raggedright\strut #1\strut}%
  \hfill
  \parbox[t]{0.69\linewidth}{\raggedright\strut #2\strut}%
  \par\addvspace{0.5ex}%
}
\noindent\rule{\linewidth}{0.4pt}\par
\glossaryentry{\textbf{Symbol}}{\textbf{Meaning}}
\noindent\rule{\linewidth}{0.4pt}\par\addvspace{0.5ex}

\glossaryentry{$\mathscr{D}_{p^-}^{p^+}$, $p^\pm$, $\cB$}
{Causal diamond, its corners, and its edge $\cB=\cN^+\cap\cN^-$.}

\glossaryentry{$\cN^\pm$, $\cC$, $\cN^+_{12}$}
{Future and past null boundaries; a cut; the segment of $\cN^+$
between two cuts.}

\glossaryentry{$\ell^a$, $n^a$, $v$}
{Future directed affine generators of $\cN^+$ and $\cN^-$, with
$\ell^an_a=-1$; affine parameter on $\cN^+$, with $v=0$ at $\cB$.}

\glossaryentry{$q_{AB}$, $\sfa$}
{Induced metric on a cut and its area density $\sqrt{\det q_{AB}}$.}

\glossaryentry{$\vartheta$, $\sigma_{AB}$, $c_d$}
{Expansion and shear of $\ell^a$; $c_d=(d-3)/(d-2)$.}

\glossaryentry{$\fs$}
{Boundary structure $[\ell^a,\k,\hat\ell_a]$, with $\k$ the
non-affinity.}

\glossaryentry{$\xi^a$, $\kappa_\beta$}
{Boundary boost, $\xi^a=\kappa_\beta v\,\ell^a$ on $\cN^+$, with
constant $\kappa_\beta=-\beta>0$.}

\glossaryentry{$\bth$, $\bom$, $\boq_\xi$}
{Symplectic potential, symplectic current and Noether charge.}

\glossaryentry{$\bTh$, $\bos{\vr}$}
{Wald-Zoupas boundary potential, $\bTh=\ul{\bth}-\delta\bos{\vr}$.}

\glossaryentry{$\sq_\xi[\cC]$, $\sf_\xi[\cN_{12}]$}
{Wald-Zoupas charge of a cut and flux through a segment.}

\glossaryentry{$\cS[\cC]$, $\bots$}
{Dynamical entropy
$\frac14\int_\cC\boep^+_{d-2}(1-v\vartheta)$ and its $(d-2)$-form.}

\glossaryentry{$T_{(\beta)}$}
{Diamond temperature $-\kappa_\beta/(2\pi)$, with
$\sq_\xi=T_{(\beta)}\cS$.}

\glossaryentry{$\bos e$, $\Delta\cE$, $\Delta\cE_{\rm out}$}
{Signed matter flux form $-\xi^bT_b{}^c\epsilon_c$, its integral, and
the outward flux $-\Delta\cE$.}

\glossaryentry{$\varphi$, $\varphi_\delta$}
{Signed flux densities of $\bos e+\bos{\sf}_\xi$ and of its variation,
relative to $\boep^+_{d-1}$.}

\glossaryentry{$\ud$, $\bog$}
{Derivative along a one-parameter family of metrics; second order
gravitational flux form.}

\glossaryentry{$\zeta^a$, $\alpha$, $\kappa$}
{Diamond-preserving conformal Killing vector, its conformal factor,
and its surface gravity.}

\glossaryentry{$\Sigma_t$, $K$, $V_\zeta$}
{Spatial slices of a spherical diamond; mean curvature of $\cB$;
conformal Killing volume.}

\glossaryentry{$\cS_{vN}$, $\cS_q$, $\mathsf R$}
{Von Neumann entropy, modified matter entropy, and remainder of the
quantum energy bound.}

\glossaryentry{$\Phi$, $U(\Phi)$}
{Minimally coupled scalar field and its potential.}

\noindent\rule{\linewidth}{0.4pt}\par
\endgroup

\section{Background}\label{sec:background}
This section briefly reviews the geometric and covariant phase space ingredients necessary for the subsequent analysis. We begin by reviewing the covariant phase space formalism and the Wald–Zoupas prescription for defining localised charges and fluxes in the presence of null boundaries. We then summarise the intrinsic and extrinsic geometry of a generic null hypersurface, together with the corresponding boundary symmetry algebra and the associated charges. Finally, we introduce Gaussian null coordinates, which furnish a convenient local parametrisation of a null boundary and will be employed throughout the analysis of causal diamonds. Readers already familiar with these constructions may proceed directly to  \cref{sec:dyn_entropy_cd}.

\subsection{Covariant phase space and Wald-Zoupas prescription} \label{sec:cps_review}
The covariant phase space formalism \cite{Zuckerman:1986vzu, Crnkovic:1986ex, Crnkovic:1987tz, Lee:1990nz, Iyer:1994ys, Wald:1993nt} is a framework for formulating Hamiltonian mechanics in which the phase space is defined as the space of solutions to the equations of motion, rather than being constructed from generalised coordinates and their conjugate momenta. This formalism is manifestly covariant, hence suitable for analysing the phase space and associated symplectic structures of diffeomorphism invariant (and gauge invariant) theories. Here, we will briefly review the Iyer-Wald method \cite{Iyer:1994ys} for constructing the symplectic form, Noether charge, and global charges. Then, we will proceed to the Wald-Zoupas prescription \cite{Wald:1999wa} to define local charges and their associated fluxes.

In general, the canonical phase space is considered to be a manifold $\mathsf{P}$ with a symplectic $2$-form $\Omega$, and the spacetime is a $d$-dimensional manifold $\cM$. Then, from the symplectic form, one can define the Hamiltonian. In the covariant phase space (CPS) formalism, one first considers the space of all possible allowed field configurations, the (field) configuration space $\cF$, or the field space in short. Then, the subspace of the fields that satisfy the equations of motion gives the corresponding covariant phase space $\cP$.\footnote{More precisely, this subspace may have a degenerate symplectic form, called the \emph{pre-symplectic form}. In literature, this subspace is referred to as the \emph{pre-phase space} \cite{Harlow:2019yfa} or the \emph{solution space} \cite{Margalef-Bentabol:2020teu}. Degeneracy of the (pre)symplectic form indicates the presence of gauge symmetries. Quotienting the corresponding gauge group from the pre-phase space yields the covariant phase space with a non-degenerate symplectic structure. In the following, we will be ignorant about the quotient process and consider the phase space same as the solution space (which is true in the absence of gauge symmetries).} The construction is as follows. One begins with the action for a general theory 
\begin{equation}
	S[\phi] = \int_{\cM} \bos{\sfL}(\phi),
\end{equation}
where, $\bos{\sfL}$ is the Lagrangian $d$-form for any $d$-dimensional theory, and $\phi$ denotes all the dynamical fields collectively. Variation of this action leads to
\begin{equation}
	\delta \bos{\sfL}[\phi] = \bos{E}_{\phi}  \delta \phi + \mathrm{d} \bth (\p; \d\p),
\end{equation}
where $\delta \phi$ represents a generic field perturbation and provides the basis of (tangent space of) the configuration space $\cF$. So, `$\mathrm{d}$' denotes the exterior derivative over the spacetime $\mathcal{M}$ and `$\delta$' represents the exterior derivative over the field space $\cF$.\footnote{We will consider the field independent diffeomorphisms throughout the work, i.e., $\d\xi=0$ and therefore $\d(\cL_\xi \p) = \cL_\xi \d\p$.} The equations of motion are given by the set of equations $\bos{E}_{\phi} = 0$. The quantity $\bth$ is called the \emph{symplectic potential}. It is a $(d-1)$-form in spacetime $\mathcal{M}$ and one form on the phase space, i.e., $\bth$ is a $(d-1;1)$-form.\footnote{In this notation, a $(p;q)$-form means it is a $p$-form in spacetime and $q$-form in field space.} One can now introduce the \emph{Noether current} $(d-1)$-form,
\begin{equation}\label{noethercurrent}
	\bos{J}_{\xi} \equiv \bth (\p; \cL_\xi\p) - \xi\i \bos{\sfL} 
\end{equation}
with $\xi^a$ being the diffeomorphism generator, $\mathcal{L}_\xi$ is the Lie derivative w.r.t it and `$\i$' denotes the interior product (i.e., contraction of $\xi^a$ with the first index of the form $\bos{\sfL}$). A simple check shows that this current is a closed form in spacetime on-shell, and locally, it can be made an exact form (by Poincaré's lemma), which leads to the definition of the \emph{Noether charge}, 
\begin{equation}\label{noethercharge}
	\bos{J}_{\xi} \approx \mathrm{d} \bos{ Q}_{\xi}.
\end{equation}
Here, `$\approx$' denotes the equality is on-shell. Apart from the Noether current, one can define another current $(d-1)$-form called \emph{(pre)symplectic current} as
\begin{equation}\label{sympcurrent}
	\bom(\p; \d_{1} \p, \d_{2} \p) = \delta_{1} \bth(\p; \d_2\p) - \delta_{2} \bth(\p; \d_1\p).
\end{equation}
Here, $\delta_{1} \phi$ and $\delta_{2} \phi$ are two arbitrary field perturbations independent of each other. It is evident that $\bom$ is a $2$-form in the field space, or more compactly a $(d-1;2)$-form. By construction, $\bom$ is closed in field space ($\d\bom=0$). It is also closed in spacetime, $\ed\bom=0$, when the background satisfies the equations of motion and the perturbations satisfy the linearised equations; by Poincar\'e's lemma it is then locally exact. The symplectic current is used to construct the symplectic form by integrating it over a codimension-one spacelike (Cauchy) surface $\Sigma$,
\begin{equation}
	\Omega (\p; \d_{1} \p, \d_{2} \p) = \int_{\Sigma} \bom(\p; \d_{1} \p, \d_{2} \p).
\end{equation}
After quotienting by its degeneracy directions, $\Omega$ is non-degenerate; it is independent of the choice of $\Sigma$ provided no symplectic flux leaves through the boundary of the region considered. If $\mathcal{F}$ is the set of all possible field configurations, i.e., the field configuration space, then the covariant phase space $\cP\subset \cF$ is the submanifold of $\cF$ consisting of fields that satisfy the equations of motion and has the symplectic structure $\Omega$.

 For diffeomorphism invariant theories, covariance tells that $\delta_\xi \phi = \mathcal{L}_\xi \phi$, where $\xi$ is a diffeomorphism generator. So the (pre)-symplectic current becomes,
\begin{equation}\label{symp_current_diffs}
	\bom(\p; \d \p, \cL_\xi \p) = \delta \bth(\p; \cL_\xi\p)- \mathcal{L}_\xi \bth (\p; \d\p).
\end{equation}
Now, the symplectic current can also be expressed in terms of the Noether charge as,
\beq
 \bom (\p; \d \p, \cL_\xi \p) = \ed\left( \delta \boq_\xi - \xi\i\bth (\p; \d\p)\right).
\eeq
The \emph{global charge} associated with the diffeomorphism generator $\xi^a$ can then be written as,
\begin{equation}\label{hamiltonianIW}
	\delta \cQ_\xi = \Omega(\p; \d \p, \cL_\xi  \p) = \int_{\pa\Sigma} \left( \delta \boq_\xi - \xi\i\bth(\p; \d\p) \right).
\end{equation}
In many earlier literatures, the global charge has been denoted $H_\xi$ and called the Hamiltonian. When the diffeomorphism generator $\xi^a$ is timelike, the above charge is indeed the Hamiltonian.
In general, the global charge $\cQ_\xi$ may not exist. For two independent perturbations $\delta_1$ and $\delta_2$, the integrability condition is given by
\beq
 \delta_1 \delta_2 \cQ_\xi - \delta_2 \delta_1 \cQ_\xi =0\,.
\eeq
But using the definition of global charges, one gets
\begin{align}\label{eq:global_charge_nonintegrability}
    \delta_1 \delta_2 \cQ_\xi - \delta_2 \delta_1 \cQ_\xi &= \int_\cC \delta_1 \left( \delta_2 \boq_\xi - \xi\i\bth(\p;\delta_2\p) \right)-\int_\cC \delta_2 \left( \delta_1 \boq_\xi - \xi\i\bth(\p;\delta_1\p) \right) = -\int_\cC \xi\i\bom(\p;\delta_1\p,\delta_2\p) \,.
\end{align}
In general, `$\xi\i\bom$' does not vanish, implying that the charge may not be integrable. The above relation holds for all $\delta_1 g,\delta_2 g\in \cT_g\cP$, with $\cC=\pa\Sigma$. 

 To perform the Wald-Zoupas (WZ) procedure \cite{Wald:1999wa}, one needs to take the pullback of the symplectic current to a hypersurface $\cN$,
 \beq
 \Pi^* \bom (\p; \delta_1 \p, \delta_2\p) = \bom \left(\, \underline{\p}; \underline{\delta_1 \p}, \underline{\delta_2 \p} \,\right) = \underline{\bom} \,, \qquad  \forall\, \p \in \cP \nonumber
 \eeq
where $\p$ is any dynamical field,  and $\underline{\p}$'s are the fields on $\cN$ induced by $\p\in\cP$. We will use the shorthand notation $\underline{\bom}$ to denote the pullback of the symplectic current to the null hypersurface. One can construct a (pre)-symplectic potential $\bTh(\underline{\p};\underline{\delta\p})$ on $\cN$ locally and covariantly\footnote{$\bTh$ depends on the dynamical fields $\p$, its variation $\delta\p$, the boundary $\cN$, and an universal structure on $\cN$ denoted by $\fs$ (we will discuss this later), $ \bTh = \bTh (\p, \delta\p, \fs, \cN)$. Consider the diffeomorphism of the spacetime: $ \psi: \cM \rightarrow \cM $. Locality and covariance require that for any such diffeomorphism, its pullback to $\cN$ satisfies the following
\beq
 \psi^* \bTh (\p, \delta\p, \fs, \cN) = \bTh (\psi^*\p, \psi^*\delta\p, \psi^*\fs, \psi^{-1}(\cN)). \nonumber
\eeq
As we will be focused on the case when the boundary and its universal structure are preserved, the above condition reduces to
\beq
 \psi^* \bTh (\p, \delta\p, \fs, \cN) = \bTh (\psi^*\p, \psi^*\delta\p,\fs, \cN). \nonumber
\eeq
} such that
\beq
 \underline{\bom} = \delta_1 \bTh (\underline{\p};\underline{\delta_2\p}) - \delta_2 \bTh (\underline{\p};\underline{\delta_1\p})\,.
\eeq
Now, one can add the term $\xi\i\bTh$ to the integrand on the
right hand side of \cref{hamiltonianIW} to define the variation of a
\emph{localised charge},
\beq\label{variation_local_charges}
 \delta \sq_\xi[\cC]
 =
 \int_\cC \delta\bos{\sq}_\xi
 =
 \int_\cC
 \left(
  \delta\boq_\xi
  -\xi\i\bth
  +\xi\i\bTh
 \right).
\eeq
Here, the bulk forms are understood to be pulled back to the
boundary. It is easy to check that
\begin{align*}
     \delta_1 \delta_2 \sq_\xi - \delta_2 \delta_1 \sq_\xi  = 0 \,. 
\end{align*}
 As the integrability condition is satisfied, it is possible to define the charge $\sq_\xi[\cC]$ up to integration constants on the covariant phase space $\cP$. One can fix the integration constant by demanding \cite{Chandrasekaran:2018aop, Wald:1999wa}
\beq
 \sq_\xi[\cC] \Big|_{\p=\p_0} = 0 \qquad \forall \,\xi^a \in \mathsf{G} \;\text{and}\; \forall \,\cC\,.
\eeq
Here $\mathsf{G}$ denotes the group of symmetries that preserve $\cN$, and $\p_0$ is some reference solution (e.g., solution in a stationary background). Still, the charge $\sq_\xi$ is not unique as the boundary symplectic potential is ambiguous up to
\beq
 \bTh (\underline{\p};\underline{\delta\p}) \rightarrow \bTh(\underline{\p};\underline{\delta\p}) + \delta \bos{Z}(\underline{\p})\,,
\eeq
where $\bos{Z}$ is a $(d-1)$-form on $\cN$ locally constructed from the dynamical fields and the boundary structure $\fs$. Therefore, additional condition(s) need to be imposed to uniquely fix $\bTh$ and, hence, the charge $\sq_\xi$. One such convenient condition is
\beq\label{stationary_boundary_symplectic_potential}
 \bTh (\underline{\p};\underline{\delta\p}) = 0 \qquad \forall\,\underline{\p}=\p_0  \in \cP\,.
\eeq
And this choice ensures that $\delta\bos{Z}$ vanishes for all $\delta\p\in\cP$ whenever the solution is stationary \cite{Wald:1999wa}.
If we choose a portion of $\cN$ bounded by the cross-sections
$\cC_1$ and $\cC_2$, with $\cC_2$ to the future of $\cC_1$,
we orient this portion, denoted by $\cN_{12}$, such that
\beq
 \partial\cN_{12}=\cC_2\cup(-\cC_1).
\eeq
The cross-sections carry the area orientation specified below.
Then, the difference between the charges of these cross-sections
is defined by
\beq
 \Delta\delta\sq_\xi[\cN_{12}] =
 \delta\sq_\xi[\cC_2]-\delta\sq_\xi[\cC_1].
\eeq
For fields and perturbations satisfying the equations of motion
and their linearisation, respectively, Stokes' theorem gives
\begin{align*}
 \Delta\delta\sq_\xi[\cN_{12}]
 &=
 \int_{\cN_{12}}
 \ed\left(
  \delta\boq_\xi-\xi\i\bth+\xi\i\bTh
 \right)
 \\
 &=
 \int_{\cN_{12}}
 \left[
  \ul{\bom}(\p;\delta\p,\cL_\xi\p)
  +\ed(\xi\i\bTh)
 \right].
\end{align*}
Using Cartan's formula, one gets $\ed(\xi\i\bTh)=\cL_\xi\bTh$, since the intrinsic exterior derivative of $\bTh$ vanishes on $\cN$. Substituting this in the above expression yields
\begin{align*}
 \Delta\delta\sq_\xi[\cN_{12}]
 &=
 \int_{\cN_{12}}
 \left[
  \ul{\bom}(\p;\delta\p,\cL_\xi\p)
  +\cL_\xi\bTh(\p;\delta\p)
 \right]
 \\
 &=
 \int_{\cN_{12}}
 \delta\!\left[\bTh(\p;\cL_\xi\p)\right].
\end{align*}
The last equality follows from the defining relation for the boundary symplectic potential, its covariance, and $\delta\xi^a=0$. The above quantity measures the variation of the flux through $\cN_{12}$. Choosing the boundary symplectic potential to vanish on stationary solutions ensures that the flux also vanishes on those solutions. Therefore, one can identify the flux of the WZ charges as
\beq\label{sympectic_flux}
 \sf_\xi[\cN_{12}]
 =
 \int_{\cN_{12}}\bos{\sf}_\xi,
 \qquad
 \bos{\sf}_\xi=\bTh(\p;\cL_\xi\p).
\eeq
Its variation satisfies
\beq
 \delta\sf_\xi[\cN_{12}]
 =
 \int_{\cN_{12}}\ul{\bom}(\p;\delta\p,\cL_\xi\p)
 +
 \int_{\partial\cN_{12}}\xi\i\bTh(\p;\delta\p).
\eeq
With a compatible choice of the reference charges, the
conservation (flux balance) law is
\beq
 \sf_\xi[\cN_{12}]
 =
 \Delta\sq_\xi[\cN_{12}]
 =
 \sq_\xi[\cC_2]-\sq_\xi[\cC_1].
\eeq

\subsection{Null surface geometry, symmetries and charges} \label{sec:null_surface_review}
Here, we will briefly discuss the structure of a null hypersurface, its symmetries and the corresponding conserved charges associated with the symmetries. A nice and detailed review of null hypersurfaces can be found in \cite{Gourgoulhon:2005ng}. We will mainly follow \cite{Chandrasekaran:2018aop, Chandrasekaran:2019ewn} but with a different convention. 

 We consider the spacetime manifold $\cM$ to be equipped with the metric $g_{ab}$ and a null hypersurface $\cN$ with topology $\cZ \times \mathbb{R}$. In this section, we will consider that spacetime has four dimensions. Generalisation to higher dimensions is straightforward. Let $\ell_a$ be a choice of future directed null normal to $\cN$. This 1-form is associated with the null vector\footnote{One can define a null surface by demanding that any vector field $\ell^a \in \cT(\cM)$ that is normal to $\cN$ to be null w.r.t the metric:
 \begin{equation*}
     g_{ab} \ell^a \ell^b = \ell^a \ell_a = 0\,.
 \end{equation*}
 These normal vectors are both tangent and orthogonal to the null surface.} $\ell^a$ via the standard metric duality
 \beq\label{normal_covector}
 \ell_a \nulleq g_{ab} \ell^a \,,
 \eeq
where `$\nulleq$' indicates that the equality holds only on the null hypersurface. The choice of $\ell_a$ is not unique, but can be rescaled as
 \beq\label{rescaling}
 \ell_a \rightarrow e^{\rho} \ell_a\, ,
 \eeq
where $\rho$ is some scalar on $\cN$. One can define non-affinity $\k$, a function on $\cN$ as follows:
 \beq\label{non-affinity}
 \ell^a \nabla_a \ell^b \nulleq \k \, \ell^b .
 \eeq
Under the rescaling \eqref{rescaling}, the non-affinity transforms as
\beq
 \k \rightarrow e^{\rho} \left( \k \,+ \cL_{\ell} \rho \right)\, . 
\eeq 
Next, we define the pullback as a map
\begin{align}
    \Pi : \cT^*(\cM) &\rightarrow \cT^*(\cN)  \\
     \eta_a  &\mapsto \Pi^{a}_i\, \eta_a \,,\nonumber
\end{align}
where $\eta_a$ is some one-form on the manifold. Notice that, by definition, the pullback of $\ell_a$ vanishes,
\beq
 \Pi^a_i \ell_a = 0\,.
\eeq
Using the pullback, one can define the induced metric on the null hypersurface as,
\beq
 q_{ij} = \Pi^a_i \,\Pi^b_j \,g_{ab}\,.
\eeq
The induced metric is degenerate with signature $(0,+,+)$. As $\ell_a$ is normal to $\cN$, we have
\beq
 q_{ij} \ell^j = 0\,.
\eeq
It implies that the induced metric can be regarded as a tensor on $\cZ$ (specifically, it lies in the subspace of the cotangent space $\cT_p^* (\cN)$). Therefore, the metric can be written as $q_{AB}$, and its inverse as $q^{AB}$. Similarly, one can define the second fundamental form (or, the extrinsic curvature) of $\cN$ as follows
\beq\label{2nd_fund_form}
 K_{ij} = \Pi^a_i \, \Pi^b_j \, \nabla_a \ell_b = \dfrac{1}{2} \cL_\ell q_{ij}\,.
\eeq
It has the property $K_{[ij]}=0$, and is normal to $\ell^i$: $\ell^i K_{ij} = 0$. This implies that the second fundamental form also resides on $\cZ$, and can be rewritten as $K_{AB}$. It can be decomposed in the standard way as
\beq
 K_{ij} = \dfrac{1}{2} \vartheta q_{ij} + \sigma_{ij} ~ \implies ~ K_{AB} = \dfrac{1}{2} \vartheta q_{AB} + \sigma_{AB},
\eeq
where $\vartheta$ is the expansion of the surface and $\sigma_{AB}$ is the shear which is traceless: $q^{AB}\sigma_{AB} = 0$. Now, we need to introduce another important quantity, called the \emph{Weingarten map}, which is defined as\footnote{Notice that we cannot write the Weingarten map as $\c_i^{\ j}= \nabla_i \ell^j$, since there is no notion of unique covariant derivative present on $\cN$ because of the degeneracy of the induced metric $q_{ij}$.}
\beq\label{eq:Weingarten_map}
 \c_i^{\ j} =  \Pi_i^a \nabla_a \ell^j\,.
\eeq
Sometimes it is also called the \emph{shape operator}. It is related to the second fundamental form as follows
\beq
 K_{ij} = \c_i^{\ k} q_{jk}\,.
\eeq
Another important relation of the Weingarten map is
\beq\label{weingarten_non-affinity}
\c_i^{\ j}\ell^i=\k\,\ell^j\,.
\eeq
As we will be working on the null hypersurface, its volume form
is also required to be defined. The volume form of the spacetime
$\epsilon_{abcd}$ determines the volume form of $\cN$.
We consider the $3$-forms $\bar{\epsilon}_{abc}$ satisfying
\beq\label{volume_form_relation}
 -4\bar{\epsilon}_{[abc}\ell_{d]}
 \nulleq \epsilon_{abcd}.
\eeq
Notice that these $3$-forms are not unique. However, their
pullback defines a unique volume form on $\cN$,
\beq
 \epsilon_{ijk}
 =
 \Pi_i^a\Pi_j^b\Pi_k^c\bar{\epsilon}_{abc}.
\eeq
For an auxiliary null vector $n^a$ satisfying
$n^a\ell_a=-1$,\footnote{In the orthonormal basis formalism, one chooses an auxiliary vector field $n^a$ on $\cN$ such that
\beq\label{n_a_conditions}
 n^a n_a =0; \qquad n_a \ell^a =-1\,. \nonumber
\eeq
The pullback of the $1$-form $n_a$ yields the $1$-form $n_i$ on $\cN$,
\beq
n_i = \Pi_i^a n_a \nonumber
\eeq
which is used in the text. As mentioned in the main text, we have freedom to choose $n_i$, but along with conditions \eqref{n_a_conditions} and $n_i \ell^i=-1$, the auxiliary normal $n_a$ is uniquely fixed.} this is equivalently
\beq
 \epsilon_{ijk}
 =
 -\Pi_i^a\Pi_j^b\Pi_k^c n^d\epsilon_{dabc}.
\eeq
We define the area form of a cross-section by
\beq
 \epsilon_{ij}=\epsilon_{ijk}\ell^k.
\eeq
Thus, the orientation of $\cN$ is compatible with the
future directed generator $\ell^a$ and the chosen area
orientation of its cross-sections.
For a vector $\cV^i$ tangent to $\cN$, restriction to a
cross-section gives
\beq
 \left.\left(\epsilon_{ijk}\cV^k\right)\right|_\cC
 = -\epsilon_{ij}n_k\cV^k,
\eeq
where the pullback $n_i$ is chosen to be normal to $\cC$
and satisfies $n_i\ell^i=-1$.
In the case of a stationary null hypersurface (for which the expansion and shear of the surface vanish), the extrinsic curvature vanishes: $K_{ij}=0$. Then one can define the \emph{rotation $1$-form} as
\beq\label{rotation_1-form}
 w_i = - \c_i^{\ j} n_j\,.
\eeq
 It is evident that the rotation $1$-form depends on the choice of $n_i$, except for stationary surfaces. From \cref{weingarten_non-affinity}, it follows that $ w_i \ell^i =\k$.

\subsubsection{Symmetries of \texorpdfstring{$\cN$}{TEXT}}
The symmetry of the null hypersurface can be interpreted as a \emph{diffeomorphism} which preserves the structure of $\cN$ as follows
\begin{align*}
    \tilde\psi: \cN &\rightarrow \cN \\
     \ell^i &\mapsto e^\beta \ell^i \\
     \k &\mapsto e^\beta (\k+\cL_\ell \beta)\,,
\end{align*}
where $\beta$ is some smooth function on $\cN$. If $\xi^a$ generates a one-parameter family of such diffeomorphisms, then to linear order in $\beta$ these conditions read
\beq
\begin{aligned}
     \cL_\xi \ell^i &= \beta \ell^i \,;\\
     \cL_\xi \k &= \beta \k+\cL_\ell \beta\,.
\end{aligned}
\eeq
\\
\paragraph{Boundary structure}
Here we will briefly discuss the geometric structure of the hypersurface $\cN$, which was called the \emph{boundary structure} in \cite{Chandrasekaran:2018aop}. One considers the triples $(\ell^a,\k,\hat{\ell}_a)$ of fields on the hypersurface, where $\ell^a$ is a smooth, nowhere vanishing vector field, $\k$ is a smooth function on $\cN$, and $\hat{\ell}_a$ is a choice of normal $1$-form (covector)
to $\cN$.\footnote{Earlier $\hat{\ell}_a$ was denoted as $\ell_a$. Here, a hat $(\,\widehat{.}\,)$ has been put to emphasise that the boundary structure need not have a metric. In that case, $\ell^a$ and $\hat{\ell}_a$ are independent of each other.} Also, the following condition needs to be satisfied
\beq
 \ell^a \hat{\ell}_a \nulleq 0\,.
\eeq
One can define two such triples $(\ell^a,\k,\hat{\ell}_a)$ and $(\ell'^a,\k',\hat{\ell}'_a)$ to be equivalent if they are related by rescaling as
\begin{align*}
    \ell'^a &\nulleq e^\gamma \ell^a \\
    \k' &\nulleq e^\gamma (\k + \cL_\ell \gamma)\\
    \hat{\ell}'_a &\nulleq e^\gamma \hat{\ell}_a \, 
\end{align*}
for some smooth function $\gamma$ on $\cN$. Then, the equivalence class associated with $(\ell^a,\k,\hat{\ell}_a)$ is denoted by 
\beq
 \fs = \left[ \ell^a,\k,\hat{\ell}_a \right] \,. \nonumber
\eeq
A choice of the equivalence class is the desired boundary structure on $\cN$. A metric $g_{ab}$ on $\cM$ for which the boundary (hypersurface) $\cN$ is null, determines a unique boundary structure $\fs$. To realise it, recall that for a choice of the normal covector $\hat{\ell}_a$, we can find the normal vector $\ell^a$ via the metric duality \eqref{normal_covector}, and the function $\k$ is defined by the \cref{non-affinity}. So, we end up with a triple $(\ell^a,\k,\hat{\ell}_a)$ from the equivalence class $\fs$.\\

\paragraph{Covariant phase space of boundary structure}
Let $\cN$ be a fixed hypersurface in $\cM$, equipped with the boundary structure $\fs=[\ell^a,\k,\hat{\ell}_a]$. One defines the associated field configuration space $\cF_\fs$ as the set of smooth Lorentzian metrics $g_{ab}$ on $\cM$ satisfying
\begin{equation}
 \hat{\ell}_a\nulleq g_{ab}\ell^b,
 \qquad
 \ell^a\nabla_a\ell^b\nulleq\k\ell^b,
 \label{eq:boundary-compatible-metrics}
\end{equation}
where $\nabla_a$ is the Levi-Civita connection of $g_{ab}$. These are kinematical conditions; the metrics in $\cF_\fs$ are not required to satisfy the equations of motion.

In this construction, the boundary structure is specified before the metric. In particular, $\ell^a$ and $\hat{\ell}_a$ are initially prescribed independently, and the first condition in \cref{eq:boundary-compatible-metrics} requires an admissible metric to relate them by metric duality. Since $\hat{\ell}_a$ is normal to $\cN$ and $\ell^a\hat{\ell}_a\nulleq0$, this condition gives
\begin{equation}
 g_{ab}\ell^a\ell^b \nulleq\ell^a\hat{\ell}_a \nulleq0.
\end{equation}
Thus $\cN$ is null with respect to every metric in $\cF_\fs$, and $\ell^a$ is both tangent and normal to it. The second condition fixes the non-affinity of its generators to the prescribed function $\k$.

Fixing $\fs$ does not fix the intrinsic metric $q_{AB}$ on the cross-sections or the expansion and shear of the null generators. These remain field-dependent quantities and may vary within $\cF_\fs$. The construction therefore, accommodates dynamical null boundaries while providing a common boundary structure against which their variations
and symmetries can be defined.

Let $\cF_0$ denote the larger space of smooth Lorentzian metrics for which $\cN$ is null, without prescribing its induced boundary structure. Then $\cF_\fs\subset\cF_0$.
Choosing $\fs$ restricts the allowed boundary data and fixes part of the diffeomorphism freedom. It should not be identified with quotienting by all diffeomorphisms.
In particular, diffeomorphism redundancy is not a property unique to $\cF_0$; residual diffeomorphisms may also act within $\cF_\fs$.
The corresponding solution space is obtained by imposing the gravitational field equations:
\begin{equation}
 \cP_\fs := \left\{  g_{ab}\in\cF_\fs: \boe^{ab}(g)=0 \right\}
 \subset\cF_\fs.
 \label{eq:boundary-covariant-phase-space}
\end{equation}
We refer to $\cP_\fs$ as the covariant phase space, with the understanding that it is the space of solutions before quotienting by
degeneracies of the presymplectic form.

A variation $h_{ab}=\delta g_{ab}$ is a tangent vector to $\cF_\fs$, rather than another element of $\cF_\fs$. It is obtained by differentiating a one-parameter family of admissible metrics
while holding $\cN$ and a representative of $\fs$ fixed. Consequently, it satisfies the linearised versions of \cref{eq:boundary-compatible-metrics}. If the family also lies in $\cP_\fs$, then $h_{ab}$
additionally satisfies the linearised field equations,
\begin{equation}
 \delta\boe^{ab}(g;h)=0.
\end{equation}

\subsubsection{Charges and fluxes at \texorpdfstring{$\cN$}{TEXT}}\label{sec:charges_N}

In this part, we will discuss the construction of (local) charges and their fluxes on a null hypersurface using the Wald-Zoupas (WZ) prescription \cite{Wald:1999wa, Chandrasekaran:2018aop}. To apply it, one needs to consider a linearised perturbation in the metric,
\beq\label{metric_perturbation}
 g_{ab} \rightarrow g_{ab} + \underbrace{h_{ab}}_{\delta g_{ab}}.
\eeq
As the WZ prescription uses the CPS formalism to construct charges, one needs to assume $g_{ab}\in \cF_\fs$ and $h_{ab}\in \cT_g\cF_\fs$ such that the symmetries corresponding to these charges preserve the null hypersurface. This in turn implies the relations \eqref{normal_covector} and \eqref{non-affinity} satisfied by the metric. The perturbation $h_{ab}$ also needs to satisfy the following conditions
\begin{subequations}\label{perturbation_conditions}
\beq
 h_{ab} \ell^b \nulleq 0,
\eeq
and
\beq
 \nabla_c (h_{ab}\ell^a \ell^b) \nulleq 0.
\eeq
\end{subequations}
These conditions can be derived starting from the relations mentioned for metric and varying them. Any linearised metric perturbation that satisfies the above two conditions works well for the WZ prescription. One such choice of the variation is
\beq
 h_{ab} = \cL_\xi g_{ab},
\eeq
where $\xi^a$ is a representative of the symmetry generators of $\cN$. The conditions \eqref{perturbation_conditions} imply that one can restrict $h^{ab}$ to the null surface as an intrinsic tensor $h^{ij}$ and construct
\beq
 h_i^{\ j}=q_{ik} h^{jk}\,;\qquad h_{ij}= q_{ik} q_{jm} h^{km}= \Pi_i^a \Pi_j^b h_{ab}\,.
\eeq
They satisfy $h_{ij}\ell^i \nulleq 0 \nulleq h_i^{\ j} \ell^i$. So, one can define a rank-$1$ tensor as $h_a := h_a^{\ b} \hat{\ell}_b$ which vanishes on $\cN$. It follows that there exists a $1$-form $\Gamma_a$ on the null surface satisfying
\beq
 \nabla_{[a}h_{b]} \nulleq \hat{\ell}_{[a}\Gamma_{b]}\,.
\eeq
Notice that $\Gamma_a$ is independent of the background metric, though it depends on $h_a^{\ b}$ and its derivatives. $\Gamma_a$ can be uniquely determined (up to a transformation along $\hat{\ell}_a$) by imposing $\nabla_{[a}\hat{\ell}_{b]} = 0$. Nevertheless, pullback of the 1-form is unique: $\Gamma_i = \Pi_i^a\Gamma_a$ and is invariant under the rescaling \eqref{rescaling}. The pullback also satisfies
\beq
 \Gamma_i \ell^i = 0\,.
\eeq
Having most of the ingredients required for the WZ prescription as discussed above, it is a good point to look at different charges on the null hypersurface.\\

\paragraph{Noether charge}
Here, we will review the Noether charge for general relativity at a null boundary. In four spacetime dimensions, we have the Lagrangian $4$-form \cite{Iyer:1994ys},
\begin{subequations}
\beq
 \sfL_{abcd} = \dfrac{1}{16\pi} \epsilon_{abcd} \cR,
\eeq
the symplectic potential $3$-form
\beq\label{symplectic_potential}
 \theta_{abc} = \dfrac{1}{16\pi} \epsilon_{abc}^{\ ~\ d} \left( g^{ef}\nabla_d h_{ef} - \nabla^e h_{de} \right),
\eeq
and the Noether charge $2$-form
\beq
Q_{\xi_{ab}} = -  \dfrac{1}{16\pi} \epsilon_{abcd} \nabla^c \xi^d ,
\eeq
\end{subequations}
 where $\cR$ is the Ricci scalar. Now, as we are interested in the charges at $\cN$, we need to restrict to $\cN$. Therefore, the pullback of the Noether charge form gives
\begin{align*}
 Q_{\xi_{ij}}
 &=  \dfrac{1}{16\pi}\Pi_i^a\Pi_j^b
 \left(\bar{\epsilon}_{abc}\hat{\ell}_d
  -\bar{\epsilon}_{dab}\hat{\ell}_c
  +\bar{\epsilon}_{cda}\hat{\ell}_b
  -\bar{\epsilon}_{bcd}\hat{\ell}_a \right)\nabla^c\xi^d
 \\
 &= \dfrac{1}{16\pi}\Pi_i^a\Pi_j^b \bar{\epsilon}_{abc}\bmq^c.
\end{align*}
where $\bmq^c = 2 \hat{\ell}_d \nabla^{[c}\xi^{d]}$. By construction $\bmq^a$ satisfies $\hat{\ell}_a \bmq^a = 0$, which implies that $\bmq^a$ resides on $\cN$. Therefore, one can rewrite the Noether charge 2-form on $\cN$ as 
\beq
 Q_{\xi_{ij}} = \dfrac{1}{16\pi}\epsilon_{ijk}\bmq^k.
\eeq
To find an explicit expression for $\bos{Q}_\xi$, we need to rewrite $\bmq$. One can rewrite this quantity as
\begin{align*}
 \bmq^c &= \hat{\ell}_d  \left( \nabla^c \xi^d -\nabla^d \xi^c \right)  \\
  &= 2\,\beta \ell^c - 2\,\xi^b \nabla_b \ell^c .
\end{align*}
In the above derivation, the linearised rescaling, the definition of the Lie derivative, and metric duality have been used. Using the expression in the Noether charge form yields
\beq
 Q_{\xi_{ij}} = \dfrac{1}{16\pi}\epsilon_{ijk}
 \left( 2\beta\ell^k-2\xi^m\nabla_m\ell^k \right).
\eeq
Recall that the Weingarten map was defined as $\c_i^{\ j} =  \Pi_i^a \nabla_a \ell^j$. Substituting it in the above expression, one gets
\beq
 Q_{\xi_{ij}} = -\dfrac{1}{8\pi}\epsilon_{ijk}
 \left[\xi^m\c_m^{\ k}-\beta(\xi^i)\ell^k \right].
\eeq
Integrating it over a cross-section $\cC$ of $\cN$ yields the Noether charge
\begin{align}\label{noether_charge}
    Q_\xi = \int_\cC \bos{Q}_\xi &= \dfrac{1}{8\pi} \int_\cC \epsilon_{ij} n_k \left[ \xi^m \c_m^{\ k} - \beta(\xi^i) \ell^k \right] \nonumber \\
    &= \dfrac{1}{8\pi} \int_\cC \epsilon_{ij} \left[ \xi^m n_k \c_m^{\ k} + \beta(\xi^i) \right]\,.
\end{align}
As we have the Noether charge, we can look into its variation under the perturbation \eqref{metric_perturbation}. Using the relations $h = g^{ab}h_{ab}=q^{AB} \delta q_{AB}$ and $\delta \epsilon_{ijk}=h\, \epsilon_{ijk}/2 $, one finds
\beq
 \delta Q_{\xi_{ij}} = -\dfrac{1}{16\pi}\epsilon_{ijk}
 \left(  h\,\xi^m\c_m^{\ k} -\beta h\,\ell^k+2\xi^m\delta\c_m^{\ k}
 \right).
\eeq
To compute the variation of the Weingarten map, recall the definition of it or the extrinsic curvature $K_a^{\ b}=\nabla_a \ell^b$. Variation of this yields
\beq
\delta K_a^{\ b} = \dfrac{1}{2} \left( \nabla_a h_c^{\ b} + \nabla_c h_a^{\ b} - \nabla^b h_{ac} \right) \ell^c . \nonumber
\eeq
Now, by taking the pullback of the above quantity and after some algebraic manipulations, one gets
\beq
 2\Pi_i^a\delta K_a^{\ b}
 =
 \Pi_i^a
 \left(
  \nabla_a h_c^{\ b}
  +\nabla_c h_a^{\ b}
  -\nabla^b h_{ac}
 \right)\ell^c.
\eeq
As the above quantity is orthogonal to $\hat{\ell}_a$, it is intrinsic to $\cN$. Using the definition \eqref{eq:Weingarten_map}, one finds the variation of the Weingarten map as follows
\beq\label{variation_weingarten_map}
 \delta \c_i^{\ j} = - \dfrac{1}{2}\Gamma_i\ell^j + \dfrac{1}{2}\cL_\ell h_i^{\ j} +  h_i^{\ k}\c_k^{\ j} -  h_k^{\ j} \c_i^{\ k} .
\eeq
Substituting it in the variation of the Noether charge yields
\beq\label{variation_noether-charge}
 \delta Q_{\xi_{ij}} = -\dfrac{1}{16\pi}\epsilon_{ijk}
 \Big( h\,\xi^m\c_m^{\ k} -\beta h\,\ell^k
 -\xi^m\Gamma_m\ell^k +\xi^m\cL_\ell h_m^{\ k}
 +2\xi^m h_m^{\ n}\c_n^{\ k} -2\xi^m h_n^{\ k}\c_m^{\ n} \Big).
\eeq
\\

\paragraph{Global charges}
As discussed in \cref{sec:cps_review}, the global charge is defined as \eqref{hamiltonianIW},
\begin{align*}
\delta \cQ_\xi = \int_{\pa \S} \left( \delta \bos{Q}_\xi - \xi\i\bth \right)\,,
\end{align*}
or, in terms of components as
\beq\label{variation_global_charge}
 \delta \cQ_{\xi_{ab}} = \delta Q_{\xi_{ab}} - \xi^c \theta_{cab}\,.
\eeq
Now, using the definition \eqref{symplectic_potential} and the Weingarten map, one can rewrite the symplectic potential as
\beq
 \theta_{ijk} = -\dfrac{1}{16\pi}\epsilon_{ijk}
\left( \cL_\ell h+h_m^{\ n}\c_n^{\ m} \right).
\eeq
Pullback of the global charge to $\cN$, and substituting the above expression in the variation of $\cQ$, yields
\beq
 \delta\cQ_{\xi_{ij}}
 = -\dfrac{1}{16\pi}\epsilon_{ijk}
 \Big( h\,\xi^m\c_m^{\ k} -\beta h\,\ell^k -\xi^m\Gamma_m\ell^k
 +\xi^m\cL_\ell h_m^{\ k} +2\xi^m h_m^{\ n}\c_n^{\ k}
 -2\xi^m h_n^{\ k}\c_m^{\ n} -\xi^k\cL_\ell h 
 -\xi^k h_m^{\ n}\c_n^{\ m} \Big).
\eeq
where the variation of the Noether charge \eqref{variation_noether-charge} has been used. Although we have a variation of the global charge, it does not guarantee that the charge itself is defined. We can define the charge when the above expression is a total variation, and in general, it is not true (see  \cref{eq:global_charge_nonintegrability}). For a detailed discussion on this, the reader may refer to \cite{Chandrasekaran:2018aop, Wald:1999wa}.\\

\paragraph{Local charges}
To define an integrable charge, one needs to apply the WZ
procedure \cite{Wald:1999wa} as discussed in
\cref{sec:cps_review}, which requires a boundary symplectic
potential satisfying
\eqref{stationary_boundary_symplectic_potential}.
One useful choice is \cite{Chandrasekaran:2018aop}
\beq
 \bTh=\ul{\bth}-\delta\bos{\vr},
\eeq
for a form $\bos{\vr}$ on $\cN$.
In components, this becomes
$\Theta_{ijk}=\theta_{ijk}-\delta\vr_{ijk}$.
Substituting this in the definition of the WZ charges yields
\begin{align}
 \delta\bos{\sq}_\xi
 &=
 \delta\left(\boq_\xi-\xi\i\bos{\vr}\right)
 \nonumber\\
 \implies\qquad
 \sq_{\xi_{ij}}
 &=
 Q_{\xi_{ij}}-\xi^k\vr_{ijk}.
\end{align}
With our volume-form convention, one can choose
\beq
 \vr_{ijk}
 =
 -\dfrac{1}{8\pi}\vartheta\epsilon_{ijk}.
\eeq
Its variation gives
\beq
 \delta\vr_{ijk}
 =
 -\dfrac{1}{16\pi}
 \left(
  \cL_\ell h+h\vartheta
 \right)\epsilon_{ijk},
\eeq
where the definition of the expansion
$\vartheta=\frac12 q^{AB}\cL_\ell q_{AB}$ has been used.
Thus, the boundary symplectic potential becomes
\beq\label{boundary_symplectic_potential}
 \Theta_{ijk}
 =
 -\dfrac{1}{16\pi}\epsilon_{ijk}
 \left(
  h_m^{\ n}\c_n^{\ m}-h\vartheta
 \right).
\eeq
Now, it is straightforward to find an expression for the
localised charge $2$-form,
\beq\label{WZ_charge_1}
 \sq_{\xi_{ij}}
 =
 -\dfrac{1}{8\pi}\epsilon_{ijk}
 \left(
  \xi^m\c_m^{\ k}
  -\vartheta\xi^k
  -\beta\ell^k
 \right).
\eeq
Using $\cL_\xi\ell^a=\beta\ell^a$, the above charge can be rewritten as
\beq
 \sq_{\xi_{ij}}
 =
 -\dfrac{1}{8\pi}\epsilon_{ijk}
 \left(
  \ell^m\nabla_m\xi^k-\vartheta\xi^k
 \right).
\eeq
Upon restriction to a cross-section, the relation between
the volume and area forms gives
\beq
 \left.\sq_{\xi_{ij}}\right|_\cC
 =
 \dfrac{1}{8\pi}\epsilon_{ij}n_k
 \left(
  \ell^m\nabla_m\xi^k-\vartheta\xi^k
 \right).
\eeq
Therefore, the localised charge form on a cross-section is
\beq\label{WZ_charge_2}
 \bos{\sq}_\xi
 =
 \dfrac{1}{8\pi}\boep_2
 \left(
  n_b\ell^a\nabla_a\xi^b
  -\vartheta\xi^a n_a
 \right),
\eeq
where $\boep_2$ is the area form of the cross-section.
For the explicit charge and boundary potential above, the
on-shell balance law associated with \cref{sympectic_flux} is
\beq\label{eq:coserv_law_form}
 \bos{\sf}_\xi
 =
 \ed\bos{\sq}_\xi
 =
 \bTh(g;\cL_\xi g).
\eeq
Using \eqref{boundary_symplectic_potential}, the flux is
\begin{align*}
 \sf_{\xi_{ijk}}
 &=
 \Theta_{ijk}(g;\cL_\xi g)
 \\
 &=
 -\dfrac{1}{16\pi}\epsilon_{ijk}
 \left.
 \left(
  h_m^{\ n}\c_n^{\ m}-h\vartheta
 \right)
 \right|_{h_{ab}=\cL_\xi g_{ab}}
 \\
 &=
 -\dfrac{1}{8\pi}\epsilon_{ijk}
 \nabla_a\xi_b
 \left[
  \nabla^{(a}\ell^{b)}-g^{ab}\vartheta
 \right].
\end{align*}

\vspace{.5cm}

\subparagraph{Stationary regions:} For stationary regions, the charges become simpler. By construction, the flux vanishes at a stationary region and the WZ charge becomes
\begin{align*}
    \sq_\xi [\cC] &= \dfrac{1}{8\pi} \int_\cC \epsilon_{ij}n_k \left( \xi^m \c_m^{\ k}  - \beta \ell^k \right) \\
    &= \dfrac{1}{8\pi} \int_\cC \epsilon_{ij}  \left( \xi^m \c_m^{\ k} n_k  + \beta  \right) .
\end{align*}
Using the definition of the rotation $1$-form as in \eqref{rotation_1-form} in the above yields the localised charges for a stationary cross-section
\beq
 \sq_\xi [\cC] = \dfrac{1}{8\pi} \int_\cC \epsilon_{ij} \left( \beta - w_k \xi^k \right).
\eeq

\subsection{Gaussian null coordinates} \label{sec:GNC}
While studying or dealing with null surfaces, it is convenient to use coordinates adapted to the null surface $\cN$. One such set of coordinates is called the Gaussian null coordinates (GNC). Here we will discuss the null surface structure, its symmetries and charges in the GNC. We will follow the convention of \cite{Chandrasekaran:2019ewn}. As done earlier, we will consider spacetime to be of $4$ dimensions, and the null hypersurface to have the topology $\cZ \times \mathbb{R}$. The base space $\cZ$ (here $\mathbb{S}^2$) is the space of null generators $\ell^a$. Then, the null surface is generated by the geodesics of $\ell^a$. We will follow the index convention as described in \cref{sec:null_surface_review}.

As we are interested in the coordinate system, which is well adapted to $\cN$, we will choose the null vectors $\ell^a$ to be affinely parametrised, i.e.,
\beq
 \ell^a \nabla_a \ell^b = 0 \,,
\eeq
with $v$ being the affine parameter. So, it satisfies
\beq
 \ell^a\nabla_a v \nulleq 1\,. \nonumber
\eeq 
Let $\cC (\cong\mathbb{S}^2)$ be a cross-section of $\cN$ such that $v|_\cC =0$. If the coordinates of $\cC$ are denoted by $x^A$, one can extend $x^A$ to all $\cN$ by parallel transport along the null generators 
\beq
 \ell^a\nabla_a x^A \nulleq 0\,. \nonumber
\eeq
Therefore, the coordinates $x^A$ along with the affine parameter $v$ define a coordinate system $(v, x^A)$ on $\cN$. To define a coordinate in a neighbourhood of $\cN$, consider a function $u$ such that the null hypersurface corresponds to $u=0$. Then, one can define a normal $\ell_a$ s.t. 
\beq
 \ell_a \equiv - du\,,
\eeq
and a transverse vector field $n^a$ s.t.
\beq
 n^a \equiv \pa_u
\eeq
w.r.t $\cN$. To fix the coordinates away from the null surface, one can choose $u$ s.t. $n^a$ is a null vector field and is affinely parametrised,
\beq
\begin{split}
    n^a n_a =0 \,;\\
    n^a\nabla_a n^b = 0\,.
\end{split}
\eeq
Now, one can extend the coordinates $(v,x^A)$ away from the null surface by parallel transport along $n^a$,
\beq
 n^a \nabla_a x^A =0\,;\qquad  n^a \nabla_a v =0 \,.~ (trivially) \nonumber
\eeq
Then $(u,v,x^A)$ defines a Gaussian null coordinate system in a neighbourhood of the null hypersurface $\cN$. To obtain an explicit expression of the metric in the GNC, one needs to use the above conditions so that coordinates are well adapted to $\cN$. From the definition, it follows that in the GNC, the metric satisfies \cite{Madler:2018bmu,Chandrasekaran:2019ewn}
\begin{subequations}
\beq
      g_{uu}=g_{uA}=0\,; \qquad  g_{vA} \nulleq g_{vv} \nulleq \pa_u g_{vv} \nulleq 0\,.
\eeq
and\footnote{$\ell^a n_a = -1$ implies $g_{vu} \ell^v = -1$.
One can choose the normalisation $\ell^v = 1$, which gives
$g_{vu}=g_{uv}=-1\,.$
One could have chosen a different normalisation like $\ell^v = - \frac{1}{\eta}$, then the metric component would have been $g_{uv}=\eta$. This kind of normalisation was considered in \cite{Adami:2021nnf}.}
\beq
g_{uv}=-1\,.
\eeq
\end{subequations}
Combining all the above conditions, one gets the metric
\beq
 ds^2 = -2\,du\, dv + g_{vv}\, dv^2 + 2g_{vA}\, dv\, dx^A + g_{AB}\, dx^A dx^B . \nonumber
\eeq
Furthermore, one can write the two-dimensional $g_{AB}$ on $\cZ$ as $q_{AB}$ (as we have seen in \cref{sec:null_surface_review}),
\begin{align*}
g_{AB} = q_{AB}\,; \qquad g_{vA} = -q_{AB}W^B .
\end{align*}
By using the ansatz $g_{vv}=-W + q_{AB} W^A W^B$, we obtain the metric adapted to $\cN$ in GNC as
\beq\label{metric_GNC}
 ds^2 = -W dv^2 - 2du dv + q_{AB} \left( dx^A - W^A dv \right) \left( dx^B - W^B dv \right).
\eeq
Here, $W$, $W^A$ and $q_{AB}$ are functions of $(u,v,x^A)$ and tensors on $\cZ$. As mentioned earlier, $q_{AB}$ is the (Riemannian) metric on $\cZ$. The conditions on the different components of the metric provide constraints on $W$ and $W^A$ as follows
\beq
\begin{split}
   W^A &\nulleq 0; \\
   W  \nulleq &\pa_u W\nulleq 0\,.
\end{split}
\eeq
We choose the spacetime orientation such that
\beq
 \boep_4
 =
 -\sqrt{\det q_{AB}}\,
 \ed u\wedge\ed v\wedge\ed x^1\wedge\ed x^2.
\eeq
On $\cN$, where $\ell^a=\partial_v$ and
$n^a=\partial_u$, our volume and area forms are therefore
\beq
 \boep_3=-n\i\boep_4=\ed v\wedge\boep_2,
 \qquad
 \boep_2=\ell\i\boep_3
 =\sqrt{\det q_{AB}}\,\ed x^1\wedge\ed x^2.
\eeq
Thus, a segment between $v=v_1$ and $v=v_2$, with
$v_2>v_1$, has oriented boundary
$\cC(v_2)\cup[-\cC(v_1)]$.
\\

\paragraph{Null surface generator}
Once we have the metric adapted to the null hypersurface, the next step is to find the null vector field $\ell^a$ which generates $\cN$. One can express the vector field in the GNC as
\beq
 \ell =\ell^a \pa_a  = \ell^u \pa_u +  \ell^v \pa_v + \ell^A \pa_A \,. \nonumber
\eeq
We have already chosen $\ell^v =1$. Also, recall that $\ell_a = -du$, which implies $\ell_A = 0$ and $\ell^A = W^A$. To find the component $\ell^u$, we can use the null nature of the vector field,
\begin{align*}
    \ell^a\ell_a = g_{ab}\ell^a \ell^b &= 2g_{uv} \ell^u \ell^v + g_{vv} \ell^v \ell^v +2 g_{vA} \ell^v \ell^A + g_{AB} \ell^A \ell^B \\
    &=-2 \ell^u + (-W+q_{AB}W^A W^B) + 2(-q_{AB}W^B) W^A + q_{AB} W^A W^B\\
    &= -2 \ell^u - W \,.
\end{align*}
Using $\ell^a\ell_a=0$, one finds
\beq
 \ell^u = - \dfrac{W}{2}\,. \nonumber
\eeq
Therefore, the generator of the null hypersurface is given by,
\beq
 \ell = \pa_v - \dfrac{1}{2} W \pa_u + W^A \pa_A \,.
\eeq
\\

\paragraph{Symmetry generator}
Let $\xi^a$ be a vector field that generates infinitesimal coordinate transformations corresponding to the symmetry algebra of $\cN$ that preserves the GNC metric \eqref{metric_GNC}. So, one can write the vector field $\xi^a$ as
\beq
 \xi^a = \xi^a (u,v,x^A). \nonumber
\eeq
As this vector field corresponds to infinitesimal transformations, it can be expanded in GNC around the null surface $u=0$ as
\beq
 \xi = (f_0 + u f_1 )\, \pa_v + (\beta_0 + u\beta_1)\, \pa_u + (X^A_0 + u X_1^A )\, \pa_A +\mathcal{O}(u^2)
\eeq
 up to first order in $u$.To preserve the location $u=0$ of $\cN$, $\xi^a$ should be tangent to $\cN$, which implies
\[ \xi^u = 0 \quad \Rightarrow \quad \beta_0 = 0 . \]
As a symmetry generator, $\xi$ should preserve the metric \eqref{metric_GNC}. For that we need to look at $\cL_\xi g_{ab}$ satisfying the following conditions
\begin{enumerate}[label=(\roman*)]
\item $\cL_\xi g_{uu}=0$ implies $f_1 = 0$.

\item $\cL_\xi g_{uv}=0$ fixes $\beta_1 = - \pa_v f_0$.

\item $\cL_\xi g_{uA}=0$ gives $q_{AB} X^B_1 =  \pa_A f_0$.

\item $\cL_\xi g_{vA}=\mathcal{O}(u)$ restricts $X^B_0= X^B_0(x^A) = X^A$.
This implies that $X^A$ is a vector on $\cZ$.
\item $\cL_\xi g_{vv}=\mathcal{O}(u^2)$ provides the relation $\pa_v \beta_1 =0= \pa_v^2 f_0$.
Therefore, $\beta_1$ is independent of $v$. Let $\beta_1= \beta (=-\pa_v f_0)$. So we can find the function $f_0$ as follows
\begin{align*}
f_0 = \vs - v \beta \,.
\end{align*}
Similarly, one gets
\beq
 X^A_1 = q^{AB}\pa_B (\vs - v\beta)\,. \nonumber
\eeq
\end{enumerate}
Combining all these, the symmetry generator is given by,
 \beq\label{eq:symm_generator_GNC}
\xi = ( \vs - v \beta)\, \pa_v + u\beta\, \pa_u + \left[ X^A + u q^{AB} \pa_B (\varsigma - v \beta) \right]\pa_A\,.
 \eeq
Here, $\vs,\beta$ are functions on $\cZ$: $\vs= \vs(x^A), \beta=\beta(x^A)$; and $X^A$ is a vector field on $\cZ$. On the null hypersurface $\cN$ ($u=0$), the above expression reduces to
\beq
 \xi =( \vs - v \beta)\, \pa_v + X^A \pa_A\,.
\eeq
Therefore, on $\cN$, the symmetry generator $\xi^a$ is characterised by $(\vs, \beta, X^A)$. Two such given generators $\xi_1^a=(\vs_1, \beta_1, X_1^A)$ and $\xi^a_2 = (\vs_2, \beta_2, X_2^A)$, satisfy the  following algebra
\beq
 \left[ (\vs_1, \beta_1, X_1^A), (\vs_2, \beta_2, X_2^A) \right] = (\vs, \beta, X^A),
\eeq
where $[^.,^.]$ denotes the Lie bracket, and the quantities on the right hand side are given by
\begin{align*}
   \vs &= - \vs_1 \beta_2 + \vs_2 \beta_1 + X^A_1\pa_A \vs_2 -X^A_2\pa_A \vs_1 \,;  \\
   \beta &= X^A_1\pa_A \beta_2 - X^A_2\pa_A \beta_1 \,;\\
   X^A &= [X_1 , X_2]^A = X^B_1 \pa_B X_2^A - X^B_2 \pa_B X_1^A.
\end{align*}
Now, we can look at the different symmetries of the symmetry algebra $\fg$.
\begin{itemize}
    \item $X^A = 0$ implies the symmetries of the form $(\vs,\beta,0): \xi=(\vs - v\beta)\pa_v$, which corresponds to translation along the $v$ direction but with angle $(x^A)$ dependency. These kinds of translations are called \emph{supertranslations}.

  \item $\vs=\beta=0$ corresponds to symmetries of the form $(0,0,X^A): \xi= X^A \pa_A$. These transformations are called \emph{superrotations} and form the Lie algebra $\mathfrak{diff}(\cZ)$.

    \item $\beta=X^A=0$ correspond to $(\vs,0,0): \xi = \vs \pa_v$ which are angle dependent translations along the affine parameter $v$, and called \emph{affine supertranslations} \cite{Chandrasekaran:2018aop}. These also form an abelian Lie ideal $\ft_0 \subset \fg$.

    \item The quotient $\fw \cong \ft/\ft_0$ forms the symmetry $(0,\beta,0): \xi = -v \beta\pa_v$. These transformations are referred to as \emph{Killing supertranslations} \cite{Chandrasekaran:2018aop} or \emph{Weyl rescalings} \cite{Freidel:2021fxf}.
    \end{itemize}
Therefore, the symmetry algebra $\fg$ of any null surface $\cN$ has the following structure
\beq
 \fg \cong \, \mathfrak{diff}(\cZ) \ltimes \big( \fw \ltimes \ft_0 \big).
\eeq
Similarly, the corresponding symmetry group can be constructed, consisting of supertranslations $\cT$ and diffeomorphisms $\mathsf{Diff}(\cZ)$. The supertranslations $\cT$ has two subgroups: the normal subgroup corresponding to affine supertranslations $\cT_0$ and the non-normal subgroup corresponding to killing supertranslations (Weyl rescaling) $\cW$. Therefore, the symmetry group structure of any null surface is given by
\beq
  \mathsf{G} \cong \, \mathsf{Diff}(\cZ) \ltimes \big( \cW \ltimes \cT_0 \big).
\eeq
\section{Dynamical entropy of causal diamond}\label{sec:dyn_entropy_cd}
In this section, first we will briefly discuss the structure of a causal diamond and the conservation laws associated with it, following \cite{Chandrasekaran:2019ewn}, and then extend this idea to obtain the notion of dynamical entropy for causal diamonds.
 We denote the spacetime manifold $\cM$ armed with a metric $g_{ab}$ as we have done earlier. One can start by considering all the causal paths in this spacetime from a point $p^-$ to a point $p^+$ in its causal future, i.e., inside or on the future light cone of $p^-$. The set of all such causal paths defines the \emph{causal diamond} (or double cone) \cite{Dappiaggi:2010iq, Witten:2019qhl} denoted by $\mathscr{D}_{p^-}^{p^+}$ and is given by
 \beq
 \mathscr{D}_{p^-}^{p^+} = I^+ (p^-) \cap I^- (p^+) \subset \cM,
 \eeq
where $I^\pm$ denotes the chronological future (past) of the point $p^-$( $p^+$). Following \cite{Chandrasekaran:2019ewn}, we consider the causal diamond which has smooth null boundaries $\cN^\pm$ emanating from $p^\pm$ that intersect at the smooth codimension-2 \emph{bifurcation surface (edge)} $\cB$. A schematic diagram of a causal diamond is shown in \cref{fig:causal_diamond1}.
\begin{figure}[htbp]
\centering

\tikzset{every picture/.style={line width=0.75pt}} 
\begin{tikzpicture}[x=0.7pt,y=0.7pt,yscale=-.85,xscale=.85]

\draw   (210.11,139.87) -- (296.12,33.55) -- (444.7,170.69) -- (358.69,277.01) -- cycle ;
\draw    (210.11,139.88) .. controls (231,162.81) and (290,145.62) .. (323,173.75) .. controls (356,201.87) and (418,180) .. (444.7,170.68) ;
\draw [color={rgb, 255:red, 74; green, 74; blue, 74 }  ,draw opacity=1 ]   (444.7,170.68) .. controls (418,143.28) and (407,176.09) .. (363,162.81) .. controls (319,149.53) and (232,116.72) .. (210.11,139.88) ;
\draw [color={rgb, 255:red, 74; green, 74; blue, 74 }  ,draw opacity=1 ] [dash pattern={on 3.75pt off 3pt on 7.5pt off 1.5pt}]  (296.1,35.47) -- (420,178.53) ;
\draw [color={rgb, 255:red, 74; green, 74; blue, 74 }  ,draw opacity=1 ] [dash pattern={on 3.75pt off 3pt on 7.5pt off 1.5pt}]  (296.1,35.47) -- (394,184.7) ;
\draw [color={rgb, 255:red, 74; green, 74; blue, 74 }  ,draw opacity=1 ] [dash pattern={on 3.75pt off 3pt on 7.5pt off 1.5pt}]  (296.1,35.47) -- (362,186) ;
\draw [color={rgb, 255:red, 74; green, 74; blue, 74 }  ,draw opacity=1 ] [dash pattern={on 3.75pt off 3pt on 7.5pt off 1.5pt}]  (296.1,35.47) -- (311,166.19) ;
\draw [color={rgb, 255:red, 74; green, 74; blue, 74 }  ,draw opacity=1 ] [dash pattern={on 3.75pt off 3pt on 7.5pt off 1.5pt}]  (296.1,35.47) -- (271,156.93) ;
\draw [color={rgb, 255:red, 74; green, 74; blue, 74 }  ,draw opacity=1 ] [dash pattern={on 3.75pt off 3pt on 7.5pt off 1.5pt}]  (296.1,35.47) -- (238,152) ;
\draw [color={rgb, 255:red, 74; green, 74; blue, 74}  ,draw opacity=1 ] [dash pattern={on 3.75pt off 3pt on 7.5pt off 1.5pt}]  (420,178.1) -- (358.7,274.99) ;
\draw [color={rgb,255:red, 74; green, 74; blue, 74 }  ,draw opacity=1 ] [dash pattern={on 3.75pt off 3pt on 7.5pt off 1.5pt}]  (394,184.18) -- (358.7,274.99) ;
\draw [color={rgb, 255:red, 74; green, 74; blue, 74}  ,draw opacity=1 ] [dash pattern={on 3.75pt off 3pt on 7.5pt off 1.5pt}]  (362,185.45) -- (358.69,275) ;
\draw [color={rgb, 255:red, 74; green, 74; blue, 74 }  ,draw opacity=1 ] [dash pattern={on 3.75pt off 3pt on 7.5pt off 1.5pt}]  (311,165.96) -- (358.7,274.99) ;
\draw [color={rgb, 255:red, 74; green, 74; blue, 74 }  ,draw opacity=1 ] [dash pattern={on 3.75pt off 3pt on 7.5pt off 1.5pt}]  (271,156.85) -- (358.69,275) ;
\draw [color={rgb, 255:red, 74; green, 74; blue, 74}  ,draw opacity=1 ] [dash pattern={on 3.75pt off 3pt on 7.5pt off 1.5pt}]  (238,152) -- (358.69,275) ;
\draw    (444.7,170.69) -- (408.47,137.35) ;
\draw [shift={(407,136)}, rotate = 42.62] [color={rgb, 255:red, 0; green, 0; blue, 0 }  ][line width=0.75]    (10.93,-3.29) .. controls (6.95,-1.4) and (3.31,-0.3) .. (0,0) .. controls (3.31,0.3) and (6.95,1.4) .. (10.93,3.29)   ;
\draw    (444.7,170.69) -- (460.85,147.64) ;
\draw [shift={(462,146)}, rotate = 125.03] [color={rgb, 255:red, 0; green, 0; blue, 0 }  ][line width=0.75]    (10.93,-3.29) .. controls (6.95,-1.4) and (3.31,-0.3) .. (0,0) .. controls (3.31,0.3) and (6.95,1.4) .. (10.93,3.29)   ;

\draw (286,15) node [anchor=north west][inner sep=0.75pt]   [align=left] {$p^+$};
\draw (351,280) node [anchor=north west][inner sep=0.75pt]   [align=left] {$p^-$};
\draw (190,135) node [anchor=north west][inner sep=0.75pt]   [align=left] {$\cB$};
\draw (381,88) node [anchor=north west][inner sep=0.75pt]   [align=left] {$\cN^+$};
\draw (402,230) node [anchor=north west][inner sep=0.75pt]   [align=left] {$\cN^-$};
\draw (423,129) node [anchor=north west][inner sep=0.75pt]   [align=left] {$\ell^a$};
\draw (464,149) node [anchor=north west][inner sep=0.75pt]   [align=left] {$n^a$};

\end{tikzpicture}
\caption{\textit{\small Causal diamond $\mathscr{D}_{p^-}^{p^+}$ with null boundary $\cN=\cN^+ \cup \cN^-$ and bifurcation surface $\cB$. The affine null generators of $\cN^\pm$ are respectively $\ell^a$ and $n^a$.}}\label{fig:causal_diamond1}
\end{figure}
In GNC, the coordinates of different parts of the causal diamond are given by
\beq
\begin{split}
p^+ \equiv (u=0, v =\tilde V(x^A))\,; \qquad  p^- \equiv (u=\tilde U(x^A), v = 0)\,;\\
\cN^+ \equiv (u=0, v>0)\,; \quad \cN^- \equiv (u<0,v=0)\,; \quad \cB \equiv (u=0,v=0)\,.
\end{split}
\eeq
Unlike a single null hypersurface, the boundary of a causal diamond has a distinguished cut, the edge $\cB$ shared by both null boundaries $\cN^\pm$.
There exist unique future directed null vector fields $\ell^a$ and $n^a$ tangent to $\cN^+$ and $\cN^-$ respectively satisfying $\ell^a n_a |_\cB =-1$. One can extend these vector fields to $\cN^\pm$ such that they become the affine null generators on $\cN^\pm$, and satisfy $\ell^a n_a |_\cN =-1$. Since there are two null surfaces for a causal diamond, one obtains two sets of GNCs: $(u_+, v_+,x^A_+)$ associated to $\cN^+$ and $(u_-,v_-,x^A_-)$ associated to $\cN^-$. Both of these coincide at $\cB$, and we assume the spacetime metric $g_{ab}$ is smooth throughout the causal diamond. To fix the position of the bifurcation surface as above, one can choose
\beq
 \ell^a \big|_{\cN^+} \equiv \pa_v = \pa_{v_+}\big|_\cB = \pa_{v_-}\big|_\cB \,; \qquad 
 n^a|_{\cN^-} \equiv \pa_u = \pa_{u_+}\big|_\cB = \pa_{u_-}\big|_\cB\,. 
\eeq
We orient both null boundaries in the direction of increasing affine
parameter, with $\ell^a=\partial_v$ on $\cN^+$ and $n^a=\partial_u$
on $\cN^-$. Keeping the spacetime orientation fixed, we define the
volume $3$-forms by
\begin{subequations}
\beq
 \label{eq:volume_form}
 \epsilon^+_{abc}=-n^d\epsilon_{dabc},
 \qquad
 \epsilon^-_{abc}=+\ell^d\epsilon_{dabc}.
\eeq
The area $2$-forms on their cross-sections are
\beq
 \label{eq:area_form}
 \begin{aligned}
  \epsilon^+_{ab}
  &=\ell^c\epsilon^+_{cab}
   =-\ell^c n^d\epsilon_{dcab},\\
  \epsilon^-_{ab}
  &=n^c\epsilon^-_{cab}
   =n^c\ell^d\epsilon_{dcab}.
 \end{aligned}
\eeq
These area forms agree at the bifurcation surface:
\beq
 \epsilon^+_{ab}\big|_\cB
 =
 \epsilon^-_{ab}\big|_\cB.
\eeq
\end{subequations}
Thus, in coordinates transported along the null generators,
\beq
 \boep^+_3=\ed v\wedge\boep^+_2,
 \qquad
 \boep^-_3=\ed u\wedge\boep^-_2.
\eeq
The null volume forms have the opposite signs to those defined in
\cite{Chandrasekaran:2019ewn}, while the cut area forms are unchanged. For a portion of either boundary bounded by an earlier cross-section $\cC_1$ and a later cross-section $\cC_2$, our integration orientation gives
\beq
 \partial\cN^\pm_{12}=\cC_2\cup(-\cC_1).
\eeq
In particular,
\beq
 \partial\cN^+=p^+\cup(-\cB),
 \qquad
 \partial\cN^-=\cB\cup(-p^-),
\eeq
where the contributions at the corners $p^\pm$ are understood as limits. The conventions of \cref{sec:null_surface_review} apply
directly to $\cN^+$ through the identification
\beq
 (\ell^a,n^a,\boep_3,\boep_2)\to (\ell^a,n^a,\boep^+_3,\boep^+_2).
\eeq
On $\cN^-$, the generator and auxiliary normal are exchanged. To retain the common area orientation at $\cB$, the corresponding replacement in the formulas of \cref{sec:null_surface_review} is
\beq
(\ell^a,n^a,\boep_3,\boep_2)\to (n^a,\ell^a,-\boep^-_3,-\boep^-_2).
\eeq

Using the Wald-Zoupas procedure, one can define localised
charges on cross-sections of the boundary of a causal diamond.
For symmetry generators matched smoothly across $\cB$, the
charges evaluated from $\cN^+$ and $\cN^-$ agree at $\cB$.
Under the regularity assumptions of
\cite{Chandrasekaran:2019ewn}, the charges vanish in the limits
to the corners $p^\pm$. With our orientation conventions, the fluxes through the two null boundaries are
\beq
\begin{aligned}
 \sf_\xi[\cN^+]
 &=
 \sq_\xi[p^+]-\sq_\xi[\cB],
 \\
 \sf_\xi[\cN^-]
 &=
 \sq_\xi[\cB]-\sq_\xi[p^-].
\end{aligned}
\eeq
Here, the charges at $p^\pm$ denote limits of charges on cross-sections approaching the corners. Since these limits vanish, one obtains
\beq
 \sf_\xi[\cN^+]=-\sf_\xi[\cN^-].
\eeq
The relative minus sign follows from the orientations of the two null segments and the matching of their charges at $\cB$. This is the conservation law of \cite{Chandrasekaran:2019ewn}, expressed in our convention. There is one such relation for each matched boundary symmetry.

Now, we will implement the idea of localised charges (see \cref{sec:charges_N}) for an arbitrary slice of a causal diamond. The charge form is defined as in  \cref{WZ_charge_2},
\beq
\bos{\sq}_\xi = \dfrac{1}{8\pi} \boep_2  \left( n_b\ell^a \nabla_a \xi^b - \vartheta \xi^a n_a \right). \nonumber
\eeq
Integrating it on a cross-section of a null hypersurface yields the WZ localised charge. As discussed earlier, a causal diamond has two null boundaries $\cN^\pm$. We will focus on an arbitrary cross-section $\cC^+$ of the future null boundary $\cN^+$ as shown in the \cref{fig:cd_arb_slice}. We can rewrite the WZ charge in $\cN^+$ as
\beq
 \sq_\xi [\cC^+]= \dfrac{1}{8\pi} \int_{\cC^+} \boep^+_2  \left( n_b\ell^a \nabla_a \xi^b - \vartheta \xi^a n_a \right),
\eeq
where $\boep^+_2$ is the area $2$-form. For a cross-section $\cC^-$ of $\cN^-$, the corresponding expression is given by
\beq
 \sq_\xi[\cC^-]
 =
 -\dfrac{1}{8\pi}
 \int_{\cC^-}\boep^-_2
 \left(
  \ell_b n^a\nabla_a\xi^b
  -\vartheta_-\xi^a\ell_a
 \right),
\eeq
where $\vartheta_-$ is the expansion of the affine generator
$n^a$ of $\cN^-$. The relative sign follows from the
replacement rule relating the conventions of
\cref{sec:null_surface_review} to those on $\cN^-$.

 Now, consider the portion of the null boundary $\cN^+$ between the cross-section $\cC^+$ and
the bifurcation surface $\cB$. Denoting that region $\Delta\cN^+$, we can write the flux corresponding to the WZ charges as
\beq\label{eeq:conserv_arb_slice}
 \sf_\xi [\Delta\cN^+] = \sq_\xi [\cC^+] - \sq_\xi [\cB].
\eeq
\begin{figure}[htbp]
    \centering

\tikzset{every picture/.style={line width=0.75pt}} 

\begin{tikzpicture}[x=0.75pt,y=0.75pt,yscale=-1,xscale=1]

\draw   (204.56,99.5) -- (268.51,26.76) -- (379,120.57) -- (315.05,193.31) -- cycle ;
\draw    (204.56,99.5) .. controls (220.1,115.19) and (263.97,103.43) .. (288.51,122.67) .. controls (313.04,141.91) and (359.14,126.95) .. (379,120.58) ;
\draw [color={rgb, 255:red, 74; green, 74; blue, 74 }  ,draw opacity=1 ]   (379,120.58) .. controls (359.14,101.83) and (350.97,124.28) .. (318.25,115.19) .. controls (285.53,106.1) and (220.84,83.65) .. (204.56,99.5) ;
\draw [color={rgb, 255:red, 74; green, 74; blue, 74 }  ,draw opacity=1 ] [dash pattern={on 3.75pt off 3pt on 7.5pt off 1.5pt}]  (326.43,88.64) -- (360.63,125.95) ;
\draw [color={rgb, 255:red, 74; green, 74; blue, 74 }  ,draw opacity=1 ] [dash pattern={on 3.75pt off 3pt on 7.5pt off 1.5pt}]  (310.81,88.64) -- (341.3,130.17) ;
\draw [color={rgb, 255:red, 74; green, 74; blue, 74 }  ,draw opacity=1 ] [dash pattern={on 3.75pt off 3pt on 7.5pt off 1.5pt}]  (296.69,86.58) -- (319.74,132.42) ;
\draw [color={rgb, 255:red, 74; green, 74; blue, 74 }  ,draw opacity=1 ] [dash pattern={on 3.75pt off 3pt on 7.5pt off 1.5pt}]  (272.15,81.8) -- (266.2,111.9) ;
\draw [color={rgb, 255:red, 74; green, 74; blue, 74}  ,draw opacity=1 ] [dash pattern={on 3.75pt off 3pt on 7.5pt off 1.5pt}]  (257.28,79.06) -- (245.38,109.85) ;
\draw [color={rgb, 255:red, 74; green, 74; blue, 74 }  ,draw opacity=1 ] [dash pattern={on 3.75pt off 3pt on 7.5pt off 1.5pt}]  (242.41,77.69) -- (225.31,107.79) ;
\draw [color={rgb, 255:red, 74; green, 74; blue, 74 }  ,draw opacity=1 ] [dash pattern={on 3.75pt off 3pt on 7.5pt off 1.5pt}]  (234.97,74.39) -- (212.66,105.06) ;
\draw [color={rgb, 255:red, 74; green, 74; blue, 74 }  ,draw opacity=1 ] [dash pattern={on 3.75pt off 3pt on 7.5pt off 1.5pt}]  (284.05,83.16) -- (288.51,122.67) ;
\draw [color={rgb, 255:red, 74; green, 74; blue, 74 }  ,draw opacity=1 ] [dash pattern={on 3.75pt off 3pt on 7.5pt off 1.5pt}]  (333.12,86.48) -- (371.04,122.16) ;
\draw    (230.51,69.48) .. controls (236.46,82.48) and (253.56,74.27) .. (284.05,83.16) .. controls (314.53,92.06) and (332.38,88.64) .. (334.61,83.16) ;
\draw    (230.51,69.48) .. controls (248.36,62.64) and (288.51,71.53) .. (299.66,79.74) .. controls (310.81,87.95) and (324.94,77.69) .. (334.61,83.16) ;

\draw (259.1,11.58) node [anchor=north west][inner sep=0.75pt]  [font=\small] [align=left] {$p^+$};
\draw (307.69,192.88) node [anchor=north west][inner sep=0.75pt]  [font=\small] [align=left] {$p^-$};
\draw (188.16,93.78) node [anchor=north west][inner sep=0.75pt]  [font=\small] [align=left] {$\cB$};
\draw (360.66,88.2) node [anchor=north west][inner sep=0.75pt]  [font=\small] [align=left] {$\Delta\cN^+$};
\draw (345.36,158.67) node [anchor=north west][inner sep=0.75pt]  [font=\small] [align=left] {$\cN^-$};
\draw (210.47,56.73) node [anchor=north west][inner sep=0.75pt]  [font=\small] [align=left] {$\cC^+$};

\end{tikzpicture}
    \caption{\textit{\small An arbitrary slice of the causal diamond $\mathscr{D}_{p^-}^{p^+}$ that intersects the null boundary $\cN^+$ at $\cC^+$. The portion of $\cN^+$ bounded by $\cC^+$ and $\cB$ is $\Delta\cN^+$ shown with grey dashed lines.}}
    \label{fig:cd_arb_slice}
\end{figure}
To find the explicit expressions of the charges and the
fluxes, we will use the GNC introduced in \cref{sec:GNC}.
In GNC, a symmetry generator of the future null boundary
of a causal diamond is given by
\beq
 \xi\nulleq -v\beta\,\pa_v+X^A\pa_A.
\eeq
For a causal diamond, the affine-supertranslation term
$\varsigma$ in \cref{eq:symm_generator_GNC} is absent because
the symmetry must preserve the bifurcation surface
\cite{Chandrasekaran:2019ewn}. For the thermodynamic interpretation, we restrict to the constant boost, with $X^A=0$ and $\beta$ constant, and choose
\beq
 \xi^a=-v\beta\ell^a
       =\kappa_\beta v\ell^a,
 \qquad
 \kappa_\beta=-\beta>0.
\eeq
The condition $\beta<0$ ensures that $\xi^a$ is
future directed on $\cN^+$, where $v>0$.
Since $\ell^a$ is affinely parametrised,
\beq
 \xi^a\nabla_a\xi^b
 =
 (-v\beta)\pa_v(-v\beta)\ell^b
 =
 -\beta\,\xi^b
 =
 \kappa_\beta\xi^b.
\eeq
The smoothly matched boost on $\cN^-$ is $\xi^a=\beta u n^a=-\kappa_\beta u n^a$. At the bifurcation surface, the charge is
\beq
 \sq_\beta[\cB]
 =
 \dfrac{\beta}{8\pi}\int_\cB\boep^+_2
 =
 -\dfrac{\kappa_\beta}{2\pi}\,
  \dfrac{A(\cB)}{4}.
\eeq
If one identifies $\kappa_\beta$ as the surface gravity, the charge can be interpreted as energy with the area term being the entropy of the causal diamond \cite{Chandrasekaran:2019ewn},
\beq
 \sq_\beta[\cB] = T_{(\beta)}\cS[\cB],
 \qquad
 T_{(\beta)}=-\dfrac{\kappa_\beta}{2\pi},
 \qquad
 \cS[\cB]=\dfrac{A(\cB)}{4}.
\eeq
The sign of $\beta$ is fixed by requiring $\xi^a$ to be future directed on $\cN^+$, as noted above, so that $\kappa_\beta>0$ and $T_{(\beta)}<0$. The negative sign is a property of the charge normalisation rather than of the diamond: the Rindler wedge of appendix \ref{app:wz-rindler} has the same relation $\sq_\xi=-\kappa_\beta\cS/(2\pi)$, yet its perturbed entropy increases under the null energy condition. What distinguishes a finite diamond is the nonzero background expansion of $\cN^+$, which enters the flux and decides the direction in which $\cS$ evolves. Thus, the temperature entering this relation is negative.
We extend the above relation to a cross-section $\cC^+$:
\begin{align}\label{eq:WZ_charge_dyn_entropy}
 \sq_\xi[\cC^+]
 &=
 \dfrac{1}{8\pi}
 \int_{\cC^+}\boep^+_2
 \left(
  n_b\ell^a\nabla_a\xi^b
  -\vartheta\xi^a n_a
 \right)
 \nonumber\\
 &=
 \dfrac{\beta}{8\pi}
 \int_{\cC^+}\boep^+_2(1-v\vartheta)
 \nonumber\\
 &=
 -\dfrac{\kappa_\beta}{2\pi}
 \left[
  \dfrac{A(\cC^+)}{4}
  -\dfrac14\int_{\cC^+}\boep^+_2\,v\vartheta
 \right].
\end{align}
The quantity in brackets has the same functional form as the dynamical black hole entropy in general relativity introduced in \cite{Hollands:2024vbe}. Borrowing this terminology, we identify the \emph{dynamical entropy of a causal diamond} by
\beq\label{eq:dyn_entropy_cd}
 \cS[\cC^+]
 =\dfrac14\int_{\cC^+}\boep^+_2(1-v\vartheta)
 = \dfrac{A(\cC^+)}{4} -\dfrac14\int_{\cC^+}\boep^+_2\,v\vartheta.
\eeq
For a related Noether charge construction on evolving null surfaces, see \cite{Ciambelli:2023mir, Rignon-Bret:2023fjq, Shajiee:2026coz}. The cut $\cC^+$ need not be a surface of constant $v$: since $\boep^+_2$ has no $\ed v$ component, its pullback to a cut $v=v_\cC(x^A)$ is $\sqrt{q}\big|_{v=v_\cC(x)}\,\ed^2x$. When $\cC^+$ coincides with $\cB$, the entropy reduces to $A(\cB)/4$, consistently with
\cite{Chandrasekaran:2019ewn,Jacobson:2018ahi}.
Now, we compute the flux through $\Delta\cN^+$. For a boundary symmetry $\xi^a=f\ell^a$, our volume-form convention gives
\beq
 \sf_\xi[\Delta\cN^+] = -\dfrac{1}{8\pi}
 \int_{\Delta\cN^+}\boep^+_3\,f
 \left( \sigma_{AB}\sigma^{AB}-\dfrac12\vartheta^2 \right).
\eeq
For the constant boost, $f=-v\beta=\kappa_\beta v$,
and hence
\beq
 \sf_\xi[\Delta\cN^+]
 = \dfrac{\kappa_\beta}{8\pi}
 \int_{\Delta\cN^+}\boep^+_3\,v
 \left(  \dfrac12\vartheta^2-\sigma_{AB}\sigma^{AB} \right).
\eeq
Substituting this and \cref{eq:WZ_charge_dyn_entropy} in the vacuum flux balance \cref{eeq:conserv_arb_slice}, we obtain
\beq\label{eq:cd_vacuum_entropy_balance}
\int_{\Delta\cN^+}\boep^+_3\,v
 \left( \dfrac12\vartheta^2-\sigma_{AB}\sigma^{AB} \right)
= A(\cB) - \left[ A(\cC^+)-\int_{\cC^+}\boep^+_2\,v\vartheta \right]
= 4\left(\cS[\cB]-\cS[\cC^+]\right).
\eeq
For a non-expanding, shear-free boundary, the flux vanishes and the entropy is independent of the cross-section. For a shear-free vacuum boundary, the integrand on the left hand side is non-negative, so the entropy decreases towards the future. In general, however, the expansion and shear contributions have opposite signs, and this balance law does not by itself determine the ordering of the entropies of two cross-sections. The general form of this balance, including matter, arbitrary cuts and arbitrary dimension, and the resulting local characterisation of entropy decrease, are given in \cref{sec:exact_balance}.

\section{Physical process first law \texorpdfstring{$\And$}{} second law}\label{sec:physical_process_1st_law}
The comparison first law of causal diamond has been well established in the literature, but the physical process version of the first law is still underdeveloped. There have been a few discussions of it in particular settings \cite{Jacobson:2018ahi}, in the context of light-cone thermodynamics \cite{DeLorenzo:2017tgx, Barbosa:2026bmv} and stretched light-cone thermodynamics \cite{Svesko:2020yxo, Parikh:2017aas}, but a concrete formula is still missing. Here, we will try to provide a local in time physical process version of the first law for causal diamonds inspired by the similar law for black holes \cite{Hollands:2024vbe, Visser:2024pwz}. Although we have discussed the causal diamonds in four-dimensional spacetimes throughout most of the paper, in the following, we will consider a general $d$-dimensional spacetime. In contrast to black hole thermodynamics, it is also possible to derive the first law of thermodynamics for a physical process in the first order perturbation, even when no matter is present. 

We begin with an exact, non-perturbative balance in Einstein gravity. It fixes the orientation used throughout and identifies which terms raise or lower the entropy. We then turn to the physical process law for a general theory.

\subsection{Exact entropy balance in Einstein gravity}
\label{sec:exact_balance}

Throughout this subsection we make the following assumptions.
\begin{enumerate}[label=(A\arabic*)]
 \item \label{ass:regular}
 $(\cM,g_{ab})$ is a $d$-dimensional spacetime, $d\ge3$, containing
 a causal diamond whose future boundary contains a smooth portion
 $\cN^+_{\rm reg}\subset\cN^+$. Its generators are affinely
 parametrised by $v$, with $v=0$ at $\cB$, and have no caustics
 in the region considered, so that the area density
 $\sfa=\sqrt{\det q_{AB}}$ is positive there.
 \item \label{ass:cuts}
 Cuts are written as $\cC=\{v=v_\cC(x^A)\}$, where $x^A$ label the
 generators. We write $\cC_1\preceq\cC_2$ if
 $v_{\cC_1}(x)\le v_{\cC_2}(x)$ for all $x$. The region
 $\cN^+_{12}$ between them is oriented by
 $\boep^+_{d-1}=\ed v\wedge\boep^+_{d-2}$, so that
 $\partial\cN^+_{12}=\cC_2\cup(-\cC_1)$, as shown in \cref{fig:cd_arb_slice}.
 \item \label{ass:einstein}
 Einstein's equation $G_{ab}+\Lambda g_{ab}=8\pi T_{ab}$ holds
 with an arbitrary stress tensor $T_{ab}$.
\end{enumerate}
Since $\boep^+_{d-2}=\sfa\,\ed^{d-2}x$ has no $\ed v$ component, the
entropy \cref{eq:dyn_entropy_cd} of an arbitrary, not necessarily
uniform cut is
$\cS[\cC]=\frac14\int\ed^{d-2}x\,s\big(v_\cC(x),x\big)$, with the
entropy density
\beq\label{eq:entropy_density}
 s:=\sfa\,(1-v\vartheta)=\sfa-v\,\partial_v\sfa ,
\eeq
with $\vartheta=\partial_v\ln\sfa$.
\begin{definition}[Signed flux density]
\label{def:signed_flux_density}
On $\cN^+$, define the scalar
\beq\label{eq:signed_flux_density}
 \varphi
 :=
 -\kappa_\beta v\left[
 T_{ab}\ell^a\ell^b
 +\frac{1}{8\pi}\left(\sigma_{AB}\sigma^{AB}-c_d\,\vartheta^2\right)
 \right],
 \qquad
 c_d:=\frac{d-3}{d-2}.
\eeq
The matter flux form $\bos e=-\xi^bT_b{}^c\epsilon_c$, with
$\epsilon_c\equiv\epsilon_{cb_2\cdots b_d}$, which is the volume form (suppressing the other indices) and the
gravitational WZ flux form then satisfy $\bos e+\bos{\sf}_\xi=\varphi\,\boep^+_{d-1}$. Accordingly,
\beq\label{eq:signed_total_flux}
 \Delta\cE+\sf_\xi[\cN^+_{12}]
 =\int_{\cN^+_{12}}\varphi\,\boep^+_{d-1},
 \qquad
 \Delta\cE:=\int_{\cN^+_{12}}\bos e .
\eeq
\end{definition}
Since $\varphi$ is a scalar density relative to the positive measure
$\ed v\wedge\boep^+_{d-2}$, statements such as $\varphi\ge0$ are free of
any further orientation convention.

\begin{lemma}[Kinematic identity]
\label{lem:kinematic_identity}
For any smooth positive $\sfa(v,x)$, the density
\cref{eq:entropy_density} satisfies
\beq
 \partial_v s=-v\,\partial_v^2\sfa .
\eeq
\end{lemma}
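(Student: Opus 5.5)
The plan is to differentiate the second expression for the density in \cref{eq:entropy_density}, namely $s=\sfa-v\,\partial_v\sfa$, directly with respect to $v$ at fixed generator label $x^A$. The first step is to confirm that the two forms in \cref{eq:entropy_density} agree. Along each generator the expansion of the affine null generator is $\vartheta=\tfrac12 q^{AB}\partial_v q_{AB}=\partial_v\ln\sfa$. This follows from Jacobi's formula $\partial_v\det q=\det q\,q^{AB}\partial_v q_{AB}$, together with $\ell^a=\partial_v$ on $\cN^+$ and the fact that $x^A$ is transported along the generators. Hence $\sfa\,\vartheta=\partial_v\sfa$, and so $\sfa(1-v\vartheta)=\sfa-v\,\partial_v\sfa$. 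Positivity and smoothness of $\sfa$, assumed in \ref{ass:regular}, guarantee that $\ln\sfa$ is well defined and that every derivative below exists.

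The second step is the product rule:
\beq
 \partial_v s=\partial_v\sfa-\partial_v\sfa-v\,\partial_v^2\sfa=-v\,\partial_v^2\sfa .
\eeq
In this expression the first-derivative terms cancel identically. No field equation is used, which is why the lemma is called kinematic.

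I do not expect a genuine obstacle. The only point worth stating explicitly is that the derivative is taken along generators at fixed $x^A$, not along some other transverse foliation. Because the cut $\cC=\{v=v_\cC(x)\}$ enters only through evaluating $s$ at $v_\cC(x)$, the identity holds pointwise on each generator. It therefore applies equally to non-uniform cuts, as needed for the later balance law obtained by integrating $\partial_v s$ between $v_{\cC_1}(x)$ and $v_{\cC_2}(x)$.
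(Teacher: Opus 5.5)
Your proof is correct and is the same one-line product-rule computation the paper gives, $\partial_v(\sfa-v\,\partial_v\sfa)=\partial_v\sfa-\partial_v\sfa-v\,\partial_v^2\sfa=-v\,\partial_v^2\sfa$. Your extra check that $\sfa(1-v\vartheta)=\sfa-v\,\partial_v\sfa$ via $\vartheta=\partial_v\ln\sfa$ (Jacobi's formula) is consistent but not needed, since the paper builds this relation into the definition \cref{eq:entropy_density}.
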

\begin{proof}
$\partial_v s=\partial_v(\sfa-v\partial_v\sfa)
=\partial_v\sfa-\partial_v\sfa-v\partial_v^2\sfa=-v\,\partial_v^2\sfa$.
\end{proof}
The expansion correction in $\cS$ is precisely what removes the
first derivative of the area. The entropy therefore responds only to
the acceleration of the area, which the Raychaudhuri equation relates
to the fluxes.

\begin{theorem}[Exact entropy balance on $\cN^+$]
\label{thm:exact_balance}
Under \ref{ass:regular}--\ref{ass:einstein}, for any cuts
$\cC_1\preceq\cC_2$ of $\cN^+_{\rm reg}$,
\beq\label{eq:exact_balance}
 -\frac{\kappa_\beta}{2\pi}\Big(\cS[\cC_2]-\cS[\cC_1]\Big)
 =\Delta\cE+\sf_\xi[\cN^+_{12}]
 =\int_{\cN^+_{12}}\varphi\,\boep^+_{d-1}.
\eeq
Equivalently, along every generator,
\beq\label{eq:exact_balance_local}
 \partial_v s=-\frac{8\pi}{\kappa_\beta}\,\sfa\,\varphi
 =v\,\sfa\left(\sigma_{AB}\sigma^{AB}-c_d\vartheta^2
 +8\pi T_{ab}\ell^a\ell^b\right).
\eeq
\end{theorem}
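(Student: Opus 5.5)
The plan is to prove the local identity \cref{eq:exact_balance_local} first and then obtain \cref{eq:exact_balance} by integrating along the generators.

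\emph{Local identity.} By \cref{lem:kinematic_identity}, $\partial_v s=-v\,\partial_v^2\sfa$, so only the second derivative of the area density along an affine generator is needed. Since $\vartheta=\partial_v\ln\sfa$, we have
\begin{equation*}
\partial_v^2\sfa=\sfa\big(\partial_v\vartheta+\vartheta^2\big).
\end{equation*}
The affinely parametrised Raychaudhuri equation for the twist-free congruence generating $\cN^+$ is
\begin{equation*}
\partial_v\vartheta=-\tfrac{1}{d-2}\vartheta^2-\sigma_{AB}\sigma^{AB}-R_{ab}\ell^a\ell^b .
\end{equation*}
Under \ref{ass:einstein}, $R_{ab}\ell^a\ell^b=8\pi T_{ab}\ell^a\ell^b$, because both the $\Lambda g_{ab}$ term and the Ricci-scalar term drop out when contracted with the null vector $\ell^a$. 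Combining these,
\begin{equation*}
\partial_v^2\sfa=\sfa\big(c_d\vartheta^2-\sigma_{AB}\sigma^{AB}-8\pi T_{ab}\ell^a\ell^b\big),
\end{equation*}
since $1-\tfrac{1}{d-2}=c_d$. Multiplying by $-v$ gives the second equality of \cref{eq:exact_balance_local}. Comparing with \cref{def:signed_flux_density}, $-\tfrac{8\pi}{\kappa_\beta}\sfa\varphi$ expands to exactly the same expression, which gives the first equality. Assumption \ref{ass:regular} guarantees that $\sfa>0$ and that everything is smooth on $\cN^+_{\rm reg}$, so all these steps are pointwise valid.

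\emph{Integration.} Using the entropy density of \cref{eq:entropy_density}, write $\cS[\cC_i]=\tfrac14\int \ed^{d-2}x\, s(v_{\cC_i}(x),x)$. By \ref{ass:cuts}, $\cN^+_{12}$ is the union over generators $x$ of the intervals $v_{\cC_1}(x)\le v\le v_{\cC_2}(x)$, and its orientation form is $\boep^+_{d-1}=\sfa\,\ed v\wedge\ed^{d-2}x$, which is positive in these coordinates. The fundamental theorem of calculus along each generator, followed by Fubini, gives
\begin{equation*}
\cS[\cC_2]-\cS[\cC_1]=\tfrac14\int_{\cN^+_{12}}\ed v\,\ed^{d-2}x\;\partial_v s=-\tfrac{2\pi}{\kappa_\beta}\int_{\cN^+_{12}}\varphi\,\boep^+_{d-1}.
\end{equation*}
Multiplying by $-\kappa_\beta/(2\pi)$ yields the second equality in \cref{eq:exact_balance}. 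Uniform cuts are not needed, because the integration is done generator by generator.

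\emph{Identification with matter and WZ fluxes.} It remains to justify $\bos e+\bos{\sf}_\xi=\varphi\,\boep^+_{d-1}$, which \cref{def:signed_flux_density} asserts; the first equality of \cref{eq:exact_balance} then follows from \cref{eq:signed_total_flux}. I expect this check to be the main obstacle, because it is pure sign and orientation bookkeeping. For the matter part, pull back $-\xi^bT_b{}^c\epsilon_c$ to $\cN^+$, with $\xi^a=\kappa_\beta v\ell^a$. Only the component of $\epsilon_c$ along $\ell_c$ survives, and \cref{eq:volume_form} converts it into $\boep^+_{d-1}$, giving $-\kappa_\beta v\,T_{ab}\ell^a\ell^b\,\boep^+_{d-1}$. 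The delicate point is the sign from $\ell_c=-\ed u$ together with $\epsilon^+=-n\i\boep$. For the gravitational part, repeat the four-dimensional computation of $\bTh(g;\cL_\xi g)$ in $d$ dimensions with $f=\kappa_\beta v$. This should give $-\tfrac{\kappa_\beta v}{8\pi}\big(\sigma_{AB}\sigma^{AB}-c_d\vartheta^2\big)\boep^+_{d-1}$, where the trace part produces $\vartheta^2\big(\tfrac1{d-2}-1\big)$ and hence $c_d$ in place of $\tfrac12$. Consistency with the $d=4$ flux and the vacuum balance \cref{eq:cd_vacuum_entropy_balance} derived earlier serves as a check on these signs.
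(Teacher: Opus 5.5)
Your proposal is correct and follows the paper's proof essentially step by step. The Raychaudhuri equation gives $\partial_v^2\sfa=\sfa(c_d\vartheta^2-\sigma_{AB}\sigma^{AB}-8\pi T_{ab}\ell^a\ell^b)$, \cref{lem:kinematic_identity} turns this into the local identity, and integrating generator by generator with $\sfa\,\ed v\,\ed^{d-2}x=\boep^+_{d-1}$ yields the integrated balance. Your extra sign check of $\bos e+\bos{\sf}_\xi=\varphi\,\boep^+_{d-1}$ uses $\underline{\epsilon_c}=\ell_c\boep^+_{d-1}$ and the $d$-dimensional flux form. The paper does not prove this identity in the theorem but builds it into \cref{def:signed_flux_density}, and your expected expressions agree with it.
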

\begin{proof}
From $\partial_v\sfa=\sfa\vartheta$ and the Raychaudhuri equation
\cref{eq:ray-focusing},
\beq
 \partial_v^2\sfa
 =\sfa\left(\partial_v\vartheta+\vartheta^2\right)
 =\sfa\left(c_d\vartheta^2-\sigma_{AB}\sigma^{AB}
 -8\pi T_{ab}\ell^a\ell^b\right).
\eeq
\Cref{lem:kinematic_identity} then gives \cref{eq:exact_balance_local}.
Integrating along each generator from $v_{\cC_1}(x)$ to $v_{\cC_2}(x)$,
and then over the generators,
\beq
 \cS[\cC_2]-\cS[\cC_1]
 =\frac14\int\ed^{d-2}x\int_{v_{\cC_1}(x)}^{v_{\cC_2}(x)}
 \partial_v s\,\ed v
 =-\frac{2\pi}{\kappa_\beta}\int_{\cN^+_{12}}
 \varphi\,\sfa\,\ed v\,\ed^{d-2}x .
\eeq
Finally, $\sfa\,\ed v\,\ed^{d-2}x=\boep^+_{d-1}$ with the orientation
as defined in \ref{ass:cuts}.
\end{proof}
For uniform cuts, \cref{eq:exact_balance} reduces to
\cref{eq:ray-exact-balance}, and in four-dimensional vacuum it
reduces to \cref{eq:cd_vacuum_entropy_balance}.

\begin{corollary}[Second law on $\cN^+$]
\label{cor:exact_integrated}
For cuts $\cC_1\preceq\cC_2$ of $\cN^+_{\rm reg}$,
\beq
 \Delta\cE+\sf_\xi[\cN^+_{12}]\ge0
 \quad\Longleftrightarrow\quad
 \cS[\cC_2]\le\cS[\cC_1],
\eeq
with strict inequality on one side if and only if it holds on the
other.
\end{corollary}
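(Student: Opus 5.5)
This is an immediate consequence of \cref{thm:exact_balance}. The plan is to read the exact balance \cref{eq:exact_balance} as an identity between two real numbers with a fixed negative proportionality constant, and to use that the sign of a product is determined by the signs of its factors.

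First, I fix the sign of the prefactor. By construction $\kappa_\beta=-\beta>0$, which was required so that $\xi^a$ is future directed on $\cN^+$. Hence $-\kappa_\beta/(2\pi)$ is strictly negative.

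Next, I set $\Delta\cS:=\cS[\cC_2]-\cS[\cC_1]$ and $F:=\Delta\cE+\sf_\xi[\cN^+_{12}]$. Both are finite real numbers, because the cuts lie in $\cN^+_{\rm reg}$, where by \ref{ass:regular} the area density $\sfa$ is smooth and positive and the integrand $\varphi\,\boep^+_{d-1}$ is smooth over a compact region. \Cref{thm:exact_balance} then gives $F=-\frac{\kappa_\beta}{2\pi}\Delta\cS$. Multiplying by the positive number $2\pi/\kappa_\beta$ yields $\Delta\cS=-\frac{2\pi}{\kappa_\beta}F$. So $F\ge0$ holds exactly when $\Delta\cS\le0$, that is, when $\cS[\cC_2]\le\cS[\cC_1]$. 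Likewise $F>0$ holds exactly when $\Delta\cS<0$, and $F=0$ exactly when $\Delta\cS=0$. This gives both the non-strict equivalence and the statement that strict inequality on one side holds if and only if it holds on the other.

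There is no real obstacle. The only point needing care is consistency of orientation: the flux in $F$ must be integrated with the orientation $\boep^+_{d-1}=\ed v\wedge\boep^+_{d-2}$ fixed in \ref{ass:cuts}, which is exactly the orientation used in \cref{thm:exact_balance}. Since $\cC_1\preceq\cC_2$, the region $\cN^+_{12}$ is swept forward in $v$ along every generator, so no sign flips occur.
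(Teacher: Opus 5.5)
Your proposal is correct and matches the paper's proof: the corollary is read off directly from the exact balance \cref{eq:exact_balance} by dividing through by the strictly positive constant $\kappa_\beta/(2\pi)$. That division preserves both the weak and the strict inequalities, which gives the equivalence together with the statement about strictness.
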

\begin{proof}
Follows directly from  \cref{eq:exact_balance} as $\kappa_\beta>0$.
\end{proof}

\begin{corollary}[Local second law]
\label{cor:exact_local}
Let $\cU\subset\cN^+_{\rm reg}$ be an open region. The following are equivalent:
\begin{enumerate}[label=(\roman*)]
 \item $\varphi\ge0$ on $\cU$, that is,
 $8\pi T_{ab}\ell^a\ell^b+\sigma_{AB}\sigma^{AB}\le c_d\,\vartheta^2$
 at every point of $\cU$;
 \item $\cS[\cC']\le\cS[\cC]$ for every pair of cuts
 $\cC\preceq\cC'$ that coincide outside $\cU$.
\end{enumerate}
In particular, if $\varphi\ge0$ on $\cN^+_{12}$, then $\cS$ is
non-increasing along every monotone family of cuts inside
$\cN^+_{12}$, whether uniform or not, and not just between its two end cuts.
\end{corollary}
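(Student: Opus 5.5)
The plan is to work entirely with the local form \cref{eq:exact_balance_local} of \Cref{thm:exact_balance}, which says $\partial_v s=-\frac{8\pi}{\kappa_\beta}\sfa\,\varphi$ pointwise on $\cN^+_{\rm reg}$. Since $\kappa_\beta>0$ and $\sfa>0$ by \ref{ass:regular}, the sign of $\partial_v s$ at each point is exactly opposite to the sign of $\varphi$. For $v>0$ the definition \cref{eq:signed_flux_density} also shows that $\varphi\ge0$ is equivalent to $8\pi T_{ab}\ell^a\ell^b+\sigma_{AB}\sigma^{AB}\le c_d\vartheta^2$. This settles the reformulation inside (i). The only caveat is points with $v=0$, i.e.\ on $\cB$; these do not lie in the open set $\cU\subset\cN^+$, where $v>0$.

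For (i)$\Rightarrow$(ii), take cuts $\cC\preceq\cC'$ that coincide outside $\cU$. The region between them consists of the segments $v_\cC(x)\le v\le v_{\cC'}(x)$. These segments are nontrivial only for generators $x$ with $v_\cC(x)<v_{\cC'}(x)$, and there the whole segment (apart possibly from its endpoints) lies in $\cU$. I need to check this is what ``coincide outside $\cU$'' should mean. If the intended reading only constrains the cuts, I will interpret the statement as requiring the interpolating region $\cN^+_{12}$ to lie in $\cU$, or in its closure, where $\varphi\ge0$ still holds by continuity. Then \cref{eq:exact_balance} gives $\cS[\cC']-\cS[\cC]=-\frac{2\pi}{\kappa_\beta}\int\varphi\,\boep^+_{d-1}\le0$.

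For (ii)$\Rightarrow$(i), I will argue by contradiction. Suppose $\varphi(p)<0$ at some $p=(v_0,x_0)\in\cU$. By continuity, $\varphi<0$ on a coordinate box $(v_0-\epsilon,v_0+\epsilon)\times B_\delta(x_0)\subset\cU$. Choose a cut $\cC$ passing through $v=v_0-\epsilon/2$ near $x_0$, and a deformation $\cC'$ with $v_{\cC'}=v_\cC+\epsilon\chi(x)$, where $\chi$ is a bump function supported in $B_\delta(x_0)$ with $0\le\chi\le1/2$. The harder part is making $\cC$ coincide with a genuine cut outside $\cU$ while still passing through the box. For that I will use any smooth cut function equal to $v_0-\epsilon/2$ on $B_\delta(x_0)$, which can be arranged by a partition of unity in $x$. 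Then $\cC\preceq\cC'$, the two cuts differ only over the box, and \cref{eq:exact_balance} gives $\cS[\cC']-\cS[\cC]>0$, contradicting (ii).

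The final claim follows by applying the (i)$\Rightarrow$(ii) argument, or directly \cref{cor:exact_integrated} restricted to subregions, to each consecutive pair in a monotone family inside $\cN^+_{12}$. For every such pair the intermediate region lies in $\cN^+_{12}$, where $\varphi\ge0$.
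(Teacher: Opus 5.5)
Your proposal is correct and follows the paper's proof essentially step by step. For (i)$\Rightarrow$(ii) you apply the exact balance \cref{eq:exact_balance} to the region swept out between the two cuts, and for (ii)$\Rightarrow$(i) you argue by contradiction, pushing a cut that is flat over a small box where $\varphi<0$ forward by a bump function so that \cref{eq:exact_balance} forces a strict entropy increase. Your caveat about ``coincide outside $\cU$'' is well placed: the paper simply asserts that the swept region lies in $\cU$ up to a null set, which is exactly the reading you make explicit, and for a $\cU$ with a hole along the generators the literal reading would not suffice.
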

\begin{proof}
(i)$\Rightarrow$(ii):\\ Up to a set of measure zero, the region between
$\cC$ and $\cC'$ lies in $\cU$; applying \cref{thm:exact_balance} gives the reuslt.

(ii)$\Rightarrow$(i):\\
Suppose $p=(v_0,x_0)\in \cU$, where $x=(x^A)$ are coordinates on the space of generators and $B_\eta(x_0)=\{x:\,|x-x_0|<\eta\}$ is an open ball.
By continuity there are $\veps,\eta>0$ with $\varphi<0$ on the box
$[v_0,v_0+\veps]\times B_\eta(x_0)\subset \cU$. Let $\cC$ be a cut
with $v_\cC(x)=v_0$ for $x\in B_\eta(x_0)$, and let $\cC'$ be $\cC$
pushed to the future by a bump function $0\le b(x)\le\veps$, not
identically zero and supported in $B_\eta(x_0)$.
Then \cref{eq:exact_balance} gives
$\cS[\cC']-\cS[\cC]
=-\frac{2\pi}{\kappa_\beta}\int_{\rm bump}\varphi\,\boep^+_{d-1}>0$,
contradicting (ii).
\end{proof}

\begin{corollary}[Global decrease between the edge and the corner]
\label{cor:global_decrease}
Let $d\ge4$, and suppose that the generators of $\cN^+$ remain smooth
from $\cB$ up to a regular future corner $p^+$, with no caustics before
$p^+$. Then $\cS[\cC]\to0$ as $\cC\to p^+$, and
\beq\label{eq:global_decrease}
 \Delta\cE+\sf_\xi[\cN^+]
 =\frac{\kappa_\beta}{2\pi}\,\frac{A(\cB)}{4}>0,
 \qquad
 \cS[\cB]=\frac{A(\cB)}{4}>\cS[p^+]=0,
\eeq
irrespective of the matter content. The entropy therefore always nonincreasing along $\cN^+$. Monotonicity on intermediate cuts requires the
hypothesis of \cref{cor:exact_local}.
\end{corollary}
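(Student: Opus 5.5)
The plan is to apply \cref{thm:exact_balance} with $\cC_1=\cB$ and a family of cuts $\cC_2\to p^+$. The content then reduces to two limits: $\cS[\cC]\to0$ as $\cC\to p^+$, and $\cS[\cB]=A(\cB)/4$. The second is immediate from \cref{eq:dyn_entropy_cd}, since $v=0$ on $\cB$. Given both, \cref{eq:exact_balance} yields
\beq
 \Delta\cE+\sf_\xi[\cN^+]=-\frac{\kappa_\beta}{2\pi}\big(0-\cS[\cB]\big)=\frac{\kappa_\beta}{2\pi}\frac{A(\cB)}{4}>0 .
\eeq
Positivity holds because $\kappa_\beta>0$ and $A(\cB)>0$, and this is independent of $T_{ab}$, since the matter content only redistributes $\varphi$ along $\cN^+$.

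The main step is the corner limit. Near a regular vertex $p^+$ the generators of $\cN^+$ converge to a point at affine parameter $v=\tilde V(x)$. I would use the smooth-cone expansion of the area density near the tip: with $\lambda:=\tilde V(x)-v\to0^+$,
\beq
 \sfa=\lambda^{d-2}\,\mu(x)\big(1+O(\lambda)\big),
\eeq
where $\mu$ is a positive smooth density on the sphere of directions. This expansion holds because, near a regular point, the light cone is approximated by the flat one, whose $(d-2)$-sphere cross-sections have radius proportional to $\lambda$.

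From this expansion,
\beq
 \partial_v\sfa=-(d-2)\lambda^{d-3}\mu\,(1+O(\lambda)),
\eeq
so the entropy density satisfies
\beq
 s=\sfa-v\partial_v\sfa=\lambda^{d-2}\mu+(d-2)\,\tilde V\,\lambda^{d-3}\mu+\dots
\eeq
For $d\ge4$ the leading power $\lambda^{d-3}$ vanishes as $\lambda\to0$, so $s\to0$ uniformly in $x$. Integrating $\frac14\int\ed^{d-2}x\,s$ over the compact space of generators then gives $\cS[\cC]\to0$ for any family of cuts shrinking to $p^+$. This is exactly where $d\ge4$ enters. In $d=3$ the term $(d-2)\tilde V\mu$ survives, so the limit equals $\frac14\int\tilde V\mu\,\ed x\neq0$ and the statement would fail.

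I expect the obstacle to be justifying the expansion uniformly in $x$ and controlling the error terms in $\partial_v\sfa$. I would argue via normal coordinates at $p^+$: the past light cone of $p^+$ is generated by null geodesics from $p^+$, and the Jacobi fields are $J=\lambda\,E+O(\lambda^2)$. By the no-caustics assumption, this description holds on all of $\cN^+$ up to $p^+$, with smooth dependence on the direction.

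Finally, I would read the phrase ``always nonincreasing'' in the statement as the net decrease $\cS[\cB]>\cS[p^+]$. Pointwise monotonicity along intermediate cuts is deferred to \cref{cor:exact_local}, as the statement itself notes.
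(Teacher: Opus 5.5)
Your proposal is correct and follows essentially the same route as the paper: it expands the area density near the regular corner in normal coordinates at $p^+$, deduces $s=\mathcal O(\lambda^{d-3})\to0$ for $d\ge4$, and then applies \cref{eq:exact_balance} with $\cC_1=\cB$ and $\cC_2\to p^+$. Your weaker $\mathcal O(\lambda)$ error term suffices, while the paper has $\mathcal O(\lambda^2)$ because the Jacobi fields from the vertex have no $\lambda^2$ term. Your $d=3$ remark is consistent with the constant entropy the paper finds for spherical diamonds in \cref{sec:CD_MSS}.
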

\begin{proof}
In Riemann normal coordinates at $p^+$, let
$\lambda=\tilde V(x)-v$ be the affine distance to the corner along each
generator. Then
$\sfa=c(x)\lambda^{d-2}\big(1+\mathcal O(\lambda^2)\big)$ with
$c(x)>0$, and
$\partial_v\sfa=-(d-2)c(x)\lambda^{d-3}\big(1+\mathcal O(\lambda^2)\big)$.
Hence $s=\sfa-v\partial_v\sfa=\mathcal O(\lambda^{d-3})\to0$ for
$d\ge4$. Applying \cref{eq:exact_balance} with $\cC_1=\cB$, where
$v=0$ and $s=\sfa$, and $\cC_2\to p^+$ gives the result, which is equivalent to the statement that the WZ charges vanish at the corners
\cite{Chandrasekaran:2019ewn}.
\end{proof}

On the past boundary $\cN^-$, we take the future directed affine generator $n^a=\partial_u$, where $u\in[\tilde U,0)$ and $u=0$ at $\cB$, together with the matched boost vector $\xi^a=-\kappa_\beta u\,n^a$. The charge of \cref{sec:dyn_entropy_cd}
gives $\sq_\xi[\cC^-]=-\frac{\kappa_\beta}{2\pi}\cS^-[\cC^-]$, with
$\cS^-[\cC]=\frac14\int_\cC\boep^-_{d-2}(1-u\vartheta_-)$. The same
computation, with the density $s^-=\sfa_-(1-u\vartheta_-)$, gives
\beq
 \partial_u s^-=-u\,\partial_u^2\sfa_-
 =|u|\,\sfa_-\left(c_d\vartheta_-^2-\sigma^-_{AB}\sigma_-^{AB}
 -8\pi T_{ab}n^an^b\right).
\eeq
Towards the future, that is towards $\cB$, the expansion now
increases $\cS^-$, while shear and matter obeying the null energy
condition decrease it. For $d\ge4$ and regular corners, the entropy rises
from zero at $p^-$ to $A(\cB)/4$ at $\cB$ and falls back to zero at
$p^+$, consistently with $\sf_\xi[\cN^+]=-\sf_\xi[\cN^-]$.

\subsection{Presence of perturbative matter}

Consider a diffeomorphism invariant theory of gravity whose
Lagrangian may contain an arbitrary number of derivatives.
We restrict ourselves to the case when the only dynamical
field in the gravitational Lagrangian is the metric.
Let $g_{ab}$ be a vacuum solution admitting a causal diamond,
and let $\xi^a=\kappa_\beta v\ell^a$ preserve the boundary
structure.
We assume that a covariant WZ boundary potential has been
chosen and that the associated localised charge admits
the entropy relation
\beq
 \bos{\sq}_\xi
 =
 -\dfrac{\kappa_\beta}{2\pi}\bots.
\eeq
These ingredients have been constructed above for Einstein
gravity. For a higher-derivative theory, their existence
and the corresponding entropy form are additional
assumptions of the following statement.

Now, consider perturbations $\dg$ sourced by a matter
stress-energy tensor $\delta T_{ab}$ and satisfying the
linearised equations of motion. The perturbed matter
energy flux form is
\beq\label{eq:cd_matter_flux_form}
 \delta\bos{e}
 =
 -\xi^b\delta T_b^{\ c}\epsilon_c
 =
 -\delta T_{ab}\xi^a\ell^b\,\boep^+_{d-1},
\eeq
where $\epsilon_c$ is the spacetime
volume form with its first index displayed, and pullback to
$\cN^+$ is understood.
To write this in the two-$\xi$ notation of
\cite{Hollands:2024vbe}, define, for $v>0$,
\beq\label{eq:cd_HWZ_volume_form}
 \tilde\boep
 =
 -\dfrac{1}{\kappa_\beta v}\boep^+_{d-1}.
\eeq
Then
\beq
 \delta\bos{e}
 =
 \delta T_{ab}\xi^a\xi^b\tilde\boep,
 \qquad
 \xi\i\tilde\boep=-\boep^+_{d-2}.
\eeq
Although $\tilde\boep$ is singular at $v=0$, the energy flux form is regular there, as is manifest in \cref{eq:cd_matter_flux_form}.

\begin{theorem}[Physical process version of the first law]
\label{thm:phys_process_1stlaw}
Let $\cC_1\preceq\cC_2$ be cuts of the future null boundary of a
causal diamond, for instance its intersections with two slices
$\Sigma(t_1)$ and $\Sigma(t_2)$, and let $\cN^+_{12}$ be oriented
by the background form $\boep^+_{d-1}=\ed v\wedge\boep^+_{d-2}$, so
that $\partial\cN^+_{12}=\cC_2\cup(-\cC_1)$. Let $\dg$ satisfy the
linearised equations sourced by $\delta T_{ab}$, with the boundary
structure, $\ell^a$, $v$ and $\kappa_\beta$ held fixed. Define the
perturbed entropy difference by
\beq
 \Delta\delta\cS
 =
 \delta\cS[\cC_2]-\delta\cS[\cC_1],
\eeq
and the signed perturbed matter energy flux by
\beq
 \Delta\delta\cE
 =
 \int_{\cN^+_{12}}\delta\bos{e}
 =
 \int_{\cN^+_{12}}
 \delta T_{ab}\xi^a\xi^b\tilde\boep.
\eeq
Then
\beq\label{eqn:pp1st}
 -\dfrac{\kappa_\beta}{2\pi}\Delta\delta\cS
 =
 \Delta\delta\cE
 +
 \delta\sf_\xi[\cN^+_{12}].
\eeq
\end{theorem}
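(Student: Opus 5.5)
The plan is to run the covariant phase space identity for the linearised perturbation, now with a matter source, and integrate it over $\cN^+_{12}$. Since the background is a vacuum solution but $\dg$ is sourced by $\delta T_{ab}$, the symplectic current is no longer closed. First I would derive the fundamental identity in the sourced form
\beq
 \bom(g;\dg,\cL_\xi g)=\ed\big(\delta\boq_\xi-\xi\i\bth(g;\dg)\big)+\xi^a\delta\boe_{ab}{}^{c}\ldots ,
\eeq
that is, the Iyer--Wald identity $\bom(g;\dg,\cL_\xi g)=\ed(\delta\boq_\xi-\xi\i\bth)+\xi\i(\boe\,\dg)-\delta\boj$-type correction with the constraint term $\delta\boc_\xi$. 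Then I would impose the linearised equations $\delta\boe^{ab}=-\tfrac12\delta T^{ab}\boep$ to replace this term by $-\xi^b\delta T_b{}^c\epsilon_c$ pulled back to $\cN^+$, i.e.\ by $\delta\bos e$. The sign conventions must be tracked so that the constraint form reproduces exactly $\delta\bos e$ as in \cref{eq:cd_matter_flux_form}. I would check this sign against the Einstein case, where \cref{thm:exact_balance} already gives the unperturbed analogue.

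Second, I would add the Wald--Zoupas correction. Pulling back to $\cN^+$ and using $\ed(\xi\i\bTh)=\cL_\xi\bTh$ (Cartan, since $\bTh$ is top degree on $\cN^+$) together with $\ul{\bom}=\delta\bTh(\cdot;\cdot)$ antisymmetrised, I would obtain
\beq
 \ed\,\delta\bos{\sq}_\xi=\delta\big[\bTh(g;\cL_\xi g)\big]+\delta\bos e=\delta\bos{\sf}_\xi+\delta\bos e .
\eeq
This uses covariance of $\bTh$ under diffeomorphisms preserving $\fs$, and $\delta\xi^a=0$, which holds because $\ell^a$, $v$, $\kappa_\beta$ and $\fs$ are held fixed. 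Integrating over $\cN^+_{12}$ with $\partial\cN^+_{12}=\cC_2\cup(-\cC_1)$ and Stokes' theorem then gives $\delta\sq_\xi[\cC_2]-\delta\sq_\xi[\cC_1]=\Delta\delta\cE+\delta\sf_\xi[\cN^+_{12}]$.

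Finally, I would substitute the assumed entropy relation $\bos{\sq}_\xi=-\frac{\kappa_\beta}{2\pi}\bots$. Because $\kappa_\beta$ is fixed, varying gives $\delta\sq_\xi[\cC]=-\frac{\kappa_\beta}{2\pi}\delta\cS[\cC]$, which is \cref{eqn:pp1st}. The rewriting $\Delta\delta\cE=\int\delta T_{ab}\xi^a\xi^b\tilde\boep$ follows from \cref{eq:cd_HWZ_volume_form}. For non-uniform cuts I would note that $\delta$ commutes with integration on the fixed cut, since the cut location $v_\cC(x)$ is held fixed.

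The main obstacle is the first step: establishing the sourced identity with correct signs and orientation on $\cN^+$, with $\xi$ not Killing, so that no $\cL_\xi g$ term is discarded. That term survives as $\delta\sf_\xi$ rather than vanishing as in \cite{Hollands:2024vbe}. I would also check that the term $\xi\i\bos{\sfL}$ pulls back to zero on $\cN^+$, since $\xi$ is tangent to $\cN^+$ and $\bos{\sfL}$ is a top form.
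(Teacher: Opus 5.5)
Your proposal follows the paper's proof essentially step for step: the fundamental identity with the sourced constraint term, the Wald--Zoupas correction via Cartan's formula and covariance of $\bTh$ with $\delta\xi^a=0$, Stokes' theorem on $\cN^+_{12}$, and finally the charge--entropy relation. One sign to correct when you track conventions: to reach your displayed $\ed\,\delta\bos{\sq}_\xi=\delta\bos{\sf}_\xi+\delta\bos{e}$, the constraint term in the identity for $\ul{\bom}$ must be $\xi^a\delta\boc_a=-\delta\bos{e}$, as the paper has it, not $+\delta\bos{e}$ as your first paragraph states.
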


\begin{proof}
We begin with the fundamental identity
\cite{Hollands:2012sf,Hollands:2024vbe},
\beq
 \bom(g;\dg,\cL_\xi g) = \xi\i(\boe\,\dg) +\xi^a\delta\boc_a
+ \ed\left[ \delta\boq_\xi-\xi\i\bth(g;\dg) \right].
\eeq
Here, $\boe$ denotes the equations of motion and $\boc_a$
is the constraint $(d-1)$-form.
The background is on shell, and the sourced linearised
equations imply
\beq
 \xi^a\delta\boc_a=-\delta\bos{e}.
\eeq
Pulling back to $\cN^+$ therefore gives
\beq
 \ul{\bom}(g;\dg,\cL_\xi g)
 =
 -\delta\bos{e}
 +
 \ed\left[
  \delta\boq_\xi-\xi\i\bth(g;\dg)
 \right].
\eeq
Using the WZ charge variation,
\beq
 \delta\bos{\sq}_\xi
 =
 \delta\boq_\xi-\xi\i\bth+\xi\i\bTh,
\eeq
we obtain
\begin{align*}
 \ed\delta\bos{\sq}_\xi
 &=
 \delta\bos{e}
 +\ul{\bom}(g;\dg,\cL_\xi g)
 +\ed\left[\xi\i\bTh(g;\dg)\right]
 \\
 &=
 \delta\bos{e}
 +\ul{\bom}(g;\dg,\cL_\xi g)
 +\cL_\xi\bTh(g;\dg)
 \\
 &=
 \delta\bos{e}
 +\delta\!\left[\bTh(g;\cL_\xi g)\right].
\end{align*}
The second equality uses Cartan's formula on $\cN^+$,
and the last equality follows from covariance of the
boundary symplectic potential and $\delta\xi^a=0$.
Using \cref{sympectic_flux} and the charge-entropy
relation, we find
\beq\label{pplocal}
 \ed\delta\bos{\sq}_\xi
 =
 -\dfrac{\kappa_\beta}{2\pi}\ed\delta\bots
 =
 \delta\bos{e}+\delta\bos{\sf}_\xi.
\eeq
Since
$\partial\cN^+_{12}=\cC_2\cup(-\cC_1)$,
integration gives
\beq
 \int_{\cN^+_{12}}\ed\delta\bots
 =
 \delta\cS[\cC_2]-\delta\cS[\cC_1].
\eeq
Thus, integration of \cref{pplocal} proves
\cref{eqn:pp1st}.
\end{proof}
In Einstein gravity about a vacuum background, \cref{eqn:pp1st} is
also the first variation of the exact balance
\cref{eq:exact_balance}, because the variation commutes with the
$v$-integration at fixed $v$ and fixed generator labels. This gives
an independent proof, carried out explicitly in appendix
\ref{sec:dyn_entropy_raychaudhuri}.
The signed flux $\Delta\delta\cE$ should be distinguished
from the positive outward matter energy. In our convention,
\beq
 \Delta\delta\cE_{\rm out}
 :=
 \int_{\cN^+_{12}}
 \delta T_{ab}\xi^a\ell^b\,\boep^+_{d-1}
 =
 -\Delta\delta\cE.
\eeq
Consequently, the null energy condition
$\delta T_{ab}\ell^a\ell^b\ge0$ implies
$\Delta\delta\cE_{\rm out}\ge0$ and
$\Delta\delta\cE\le0$.

\begin{corollary}[Linearised second law on $\cN^+$]
\label{cor:linearised_second_law}
Under the assumptions of \cref{thm:phys_process_1stlaw}, define the
regular scalar density $\varphi_\delta$ on $\cN^+$ by
\beq\label{eq:linearised_flux_density}
 \delta\bos{e}+\delta\bos{\sf}_\xi
 =\varphi_\delta\,\boep^+_{d-1}.
\eeq
Due to \cref{eq:cd_matter_flux_form}, this is equivalent to the two-$\xi$
expression $\delta T_{ab}\xi^a\xi^b\boep+\delta\bos{\sf}_\xi$, but
$\varphi_\delta$ remains regular at $v=0$. Then,for cuts $\cC_1\preceq\cC_2$,
 \beq
  \int_{\cN^+_{12}}\varphi_\delta\,\boep^+_{d-1}\ge0
  \quad\Longleftrightarrow\quad
  \Delta\delta\cS\le0.
 \eeq
\end{corollary}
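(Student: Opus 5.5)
The corollary should follow from \cref{thm:phys_process_1stlaw}, in the same way that \cref{cor:exact_integrated} follows from \cref{thm:exact_balance}. The plan has three steps.

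\textbf{Step 1: show $\varphi_\delta$ is well defined and regular.} Both forms in \cref{eq:linearised_flux_density} are $(d-1)$-forms on $\cN^+$, and $\boep^+_{d-1}=\ed v\wedge\boep^+_{d-2}$ is nowhere vanishing on $\cN^+_{\rm reg}$. Hence there is a unique scalar $\varphi_\delta$ with $\delta\bos e+\delta\bos{\sf}_\xi=\varphi_\delta\,\boep^+_{d-1}$.

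For regularity, \cref{eq:cd_matter_flux_form} shows that the matter part equals $-\kappa_\beta v\,\delta T_{ab}\ell^a\ell^b$ times $\boep^+_{d-1}$. This is smooth and vanishes at $v=0$. The gravitational part $\delta\bos{\sf}_\xi=\delta[\bTh(g;\cL_\xi g)]$ is the variation of a form built locally from smooth fields and from $\xi^a=\kappa_\beta v\ell^a$, so it is smooth as well.

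The relation $\tilde\boep=-\boep^+_{d-1}/(\kappa_\beta v)$ from \cref{eq:cd_HWZ_volume_form} gives the equivalence with the two-$\xi$ form $\delta T_{ab}\xi^a\xi^b\tilde\boep+\delta\bos{\sf}_\xi$ for $v>0$. Only the $\boep^+_{d-1}$ version is manifestly regular up to $v=0$, which is why the corollary uses it.

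\textbf{Step 2: integrate.} Integrating the defining relation over $\cN^+_{12}$, oriented as in \ref{ass:cuts}, gives
\beq
 \int_{\cN^+_{12}}\varphi_\delta\,\boep^+_{d-1}=\Delta\delta\cE+\delta\sf_\xi[\cN^+_{12}].
\eeq
By \cref{eqn:pp1st}, the right-hand side equals $-\frac{\kappa_\beta}{2\pi}\Delta\delta\cS$.

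\textbf{Step 3: conclude.} Since $\kappa_\beta>0$ is a fixed constant, not varied by $\delta$, the integral is non-negative if and only if $\Delta\delta\cS\le0$. Strictness likewise transfers between the two sides.

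\textbf{Main obstacle.} The only delicate point is whether the integral is unambiguous. This needs two facts: that $\boep^+_{d-1}$ is the unperturbed background form used in \cref{thm:phys_process_1stlaw}, and that the cuts are held fixed in $(v,x^A)$ under $\delta$. Both hold because $\ell^a$, $v$ and the boundary structure are held fixed. With these, $\varphi_\delta$ is a density relative to a fixed positive measure, and the sign statement needs no further orientation convention.
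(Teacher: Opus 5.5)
Your proposal is correct and follows the same route as the paper. You integrate the defining relation for $\varphi_\delta$ over $\cN^+_{12}$, identify the result with $-\frac{\kappa_\beta}{2\pi}\Delta\delta\cS$ via the physical process first law, and use $\kappa_\beta>0$; the paper does this in one line, and your remarks on regularity at $v=0$ and on the fixed background measure spell out what it leaves implicit.
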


\begin{proof}
Follows from \cref{eqn:pp1st}, since $\kappa_\beta>0$.
\end{proof}
For a general nonstationary background, the gravitational flux variation has no definite sign. Therefore, the balance law alone does not establish a general linearised second law from the null energy condition; the decrease of
$\delta\cS$ requires the perturbation to strengthen the focusing of the generators by more than the matter energy it carries.
\vspace{.3cm}
\begin{remark}[Explicit linearised criterion]
Consider Einstein gravity about a shear-free vacuum background,
$\sigma_{AB}=T_{ab}=0$, which includes all spherical diamonds in
maximally symmetric spacetimes. Then
$\delta(\sfa\,\sigma_{AB}\sigma^{AB})=0$ at first order, and the local
condition $\varphi_\d \ge 0$ reads
\beq\label{eq:linearised_criterion}
 8\pi\,\sfa\,\delta T_{ab}\ell^a\ell^b
 \le c_d\,\delta\!\left(\sfa\vartheta^2\right)
 =c_d\left(\vartheta^2\delta\sfa+2\sfa\vartheta\,\delta\vartheta\right),
 \qquad
 \delta\sfa=\tfrac12 h\,\sfa,
 \quad
 \delta\vartheta=\tfrac12\cL_\ell h .
\eeq
The perturbation must therefore increase the focusing term
$\sfa\vartheta^2$ by more than the matter energy it carries. A thin
null shell crossing the Minkowski diamond, worked out in
\cref{app:wz-shell}, shows that this condition fails across the shell
and holds after it.
\end{remark}

\subsection{Vacuum perturbation}

For perturbations of a stationary black hole, the first-order entropy difference vanishes in vacuum, and the leading physical process law occurs in the second order \cite{Hollands:2024vbe}. For a causal diamond with a nonstationary boundary, the first-order gravitational flux variation need not vanish. In the following, we consider both first- and second order vacuum perturbations.

\subsubsection{First order vacuum perturbation}

The first-order physical process law follows directly from
\cref{thm:phys_process_1stlaw}. In the absence of matter,
$\delta T_{ab}=0$, and \cref{pplocal} reduces to
\beq\label{eq:cd_vacuum_first_order_local}
 -\dfrac{\kappa_\beta}{2\pi}\ed\delta\bots
 =
 \delta\bos{\sf}_\xi.
\eeq
Integrating between the cross-sections gives
\beq\label{eq:cd_vacuum_first_order_integrated}
 -\dfrac{\kappa_\beta}{2\pi}
 \left(
  \delta\cS[\cC_2]-\delta\cS[\cC_1]
 \right)
 =
 \delta\sf_\xi[\cN^+_{12}].
\eeq
This is the variation of the vacuum conservation law in
\cref{sec:dyn_entropy_cd}. A non-negative integrated flux
variation is equivalent to $\Delta\delta\cS\le0$, and
\cref{cor:linearised_second_law} gives the corresponding local
statement with $\varphi_\delta$ the density of
$\delta\bos{\sf}_\xi$.
For Einstein gravity, the entropy form and gravitational flux form are
\begin{subequations}
\beq\label{eq:cd_vacuum_entropy_form}
 \bots =  \dfrac14(1-v\vartheta)\boep^+_{d-2},
\eeq
and
\beq\label{eq:cd_vacuum_flux_form}
 \bos{\sf}_\xi = -\dfrac{\kappa_\beta v}{8\pi}\boep^+_{d-1} \left( \sigma_{AB}\sigma^{AB}-\dfrac{d-3}{d-2}\vartheta^2\right).
\eeq
\end{subequations}
The coefficient $(d-3)/(d-2)$ is $1/2$ in four dimensions.
With the affine structure and cross-sections fixed,
\beq
 \delta\boep^+_{d-2}  = \dfrac{h}{2}\boep^+_{d-2},
 \qquad
 \delta\vartheta=\dfrac12\cL_\ell h.
\eeq
Variation of \cref{eq:cd_vacuum_entropy_form} gives
\beq\label{eq:cd_entropy_first_variation}
 \delta\cS[\cC] = \dfrac18
 \int_\cC\boep^+_{d-2}
 \left[ h(1-v\vartheta)-v\cL_\ell h \right].
\eeq
Similarly, variation of \cref{eq:cd_vacuum_flux_form} gives
\beq\label{eq:cd_flux_first_variation}
 \delta\sf_\xi[\cN^+_{12}] = -\dfrac{\kappa_\beta}{8\pi}
\int_{\cN^+_{12}}\boep^+_{d-1}\,v
\Bigg[\dfrac{h}{2}
\left( \sigma_{AB}\sigma^{AB}-\dfrac{d-3}{d-2}\vartheta^2\right)
 +\delta\sigma_{AB}\sigma^{AB} +\sigma_{AB}\delta\sigma^{AB}  -\dfrac{d-3}{d-2} \vartheta\cL_\ell h \Bigg].
\eeq
The variation of $\sigma^{AB}$ includes the variation
of the inverse metric used to raise its indices.\\

\subsubsection{Second order vacuum perturbation}

Consider a one-parameter family of metrics
\beq
 g_{ab}(\lambda)
 =
 g_{ab}
 +\lambda\,\ud g_{ab}
 +\dfrac{\lambda^2}{2}\ud^2g_{ab}
 +\mathcal{O}(\lambda^3).
\eeq
Here, $\ud$ denotes differentiation with respect to
$\lambda$, evaluated at $\lambda=0$, rather than the
field-space exterior derivative $\delta$.
The boundary identification, cross-sections, affine
structure, and the symmetry generator are fixed along
the family. We assume that the vacuum equations hold
through second order.
The fundamental identity along this family is
\beq
 \bom(g;\ud g,\cL_\xi g) = \xi\i(\boe\,\ud g) +\xi^a\ud\boc_a
+ \ed\left[ \ud\boq_\xi-\xi\i\bth(g;\ud g) \right].
\eeq
When differentiating this identity, the dependence of $\bom$ on its background metric must also be varied. For fixed tensor arguments $h_{ab}$ and $k_{ab}$, define
\beq
 \ud_g\bom(g;h,k) :=
  \left. \dfrac{\ed}{\ed\lambda} \bom(g(\lambda);h,k) \right|_{\lambda=0}.
\eeq
Thus,
\beq\label{eq:cd_second_variation_omega}
 \ud\!\left[\bom(g;\ud g,\cL_\xi g)\right]
 = \ud_g\bom(g;\ud g,\cL_\xi g) + \bom(g;\ud^2g,\cL_\xi g)
 + \bom(g;\ud g,\cL_\xi\ud g).
\eeq
All quantities are evaluated at the reference metric after differentiation.\footnote{At each metric $g$, the symplectic current defines a bilinear map
\[
 \bom_g:\cT_g\cF\times \cT_g\cF
 \longrightarrow \mathbb{\Omega}^{(d-1;2)}(\cM\times\cF),
\]
where $\cT_g \cF$ denotes the tangent space at point $g$ in the field space $\cF$ and $\mathbb{\Omega}^{(d-1;2)}(\cM\times\cF)$ denotes space of $(d-1;2)$-forms. Using metric-field coordinates to identify tangent vectors along $g(\lambda)$, the term $\ud_g\bom$ is the derivative of this bilinear map with its arguments held fixed. It accounts for the metric dependence of the presymplectic pairing. The resulting expression is evaluated at the reference metric $g=g(0)$.}
Varying the fundamental identity gives
\beq
 \ud\!\left[\bom(g;\ud g,\cL_\xi g)\right]
 = \xi\i\left( \ud\boe\,\ud g+\boe\,\ud^2g \right) + \xi^a\ud^2\boc_a 
 + \ed\left[ \ud^2\boq_\xi-\xi\i\ud\bth(g;\ud g) \right].
 \eeq
On-shell, the terms containing the equation of motions and constraints vanish. Pulling back to $\cN^+$ therefore yields
\beq
 \ud\!\left[ \ul{\bom}(g;\ud g,\cL_\xi g) \right]
 = \ed\left[ \ud^2\boq_\xi-\xi\i\ud\bth(g;\ud g) \right].
\eeq
Pullbacks of the remaining boundary expressions are understood below. Recall that the chosen WZ prescription satisfies
\beq
 \bTh=\ul{\bth}-\delta\bos{\vr},
 \qquad
 \bos{\sq}_\xi=\boq_\xi-\xi\i\bos{\vr}.
\eeq
Within the one-parameter family, these relations imply
\beq
\begin{aligned}
 \ud\bTh(g;\ud g)
 &=
 \ud\bth(g;\ud g)-\ud^2\bos{\vr},
 \\
 \ud^2\bos{\sq}_\xi
 &=
 \ud^2\boq_\xi-\xi\i\ud^2\bos{\vr}.
\end{aligned}
\eeq
Adding the boundary term to the varied fundamental identity gives
\beq
\ud\!\left[ \ul{\bom}(g;\ud g,\cL_\xi g) \right]
 + \ed\left[ \xi\i\ud\bTh(g;\ud g) \right]
 = \ed\ud^2\bos{\sq}_\xi.
\eeq
We denote the complete second order gravitational flux form on the left hand side by
\beq\label{eq:cd_second_order_flux_current}
 \bog
 := \ud\!\left[  \ul{\bom}(g;\ud g,\cL_\xi g) \right]
 + \ed\left[ \xi\i\ud\bTh(g;\ud g) \right].
\eeq
For a nonstationary background, this expression contains
all three terms in \cref{eq:cd_second_variation_omega}.
It need not be independent of $\ud^2g_{ab}$ or quadratic
only in $\ud g_{ab}$.
When $\cL_\xi g=0$, the first two terms on the right hand side of \cref{eq:cd_second_variation_omega} vanish. Then \cref{eq:cd_second_order_flux_current} takes the modified canonical-energy form
\cite{Hollands:2012sf,Hollands:2024vbe},
\beq
 \bog
 =
 \ul{\bom}(g;\ud g,\cL_\xi\ud g)
 + \ed\left[ \xi\i\ud\bth(g;\ud g) - \xi\i\ud^2\bos{\vr} \right].
\eeq
Recall that, for every metric along the family, the pullback of the
symplectic current can be written in terms of the boundary symplectic
potential,
\beq
 \ul{\bom}\left(g;\ud g,\cL_\xi g\right)
 = \ud\!\left[\bTh(g;\cL_\xi g)\right] - \cL_\xi \bTh(g;\ud g). \nonumber
\eeq
This relation follows from the covariance of $\bTh$ and $\delta\xi^a=0$
and does not use the field equations. Differentiating it once more along
the family, and using $\cL_\xi\bTh=\ed(\xi\i\bTh)$ on $\cN^+$, yields
\beq
 \bog = \ud^2\!\left[\bTh(g;\cL_\xi g)\right]
 = \ud^2\bos{\sf}_\xi.
\eeq
Taking the second order variation of it and applying Cartan's formula yields
\beq
 \bog = \ud^2\!\left[\bTh(g;\cL_\xi g)\right]
 = \ud^2\bos{\sf}_\xi.
\eeq
Hence, the second order vacuum balance law is
\beq\label{eq:cd_second_order_local_balance}
 \bog = \ud^2\bos{\sf}_\xi = \ed\ud^2\bos{\sq}_\xi
 = -\dfrac{\kappa_\beta}{2\pi}\ed\ud^2\bots.
\eeq
Here, the charge-entropy relation is assumed to hold through second order; this is the case for the Einstein gravity functional used above.
Integration yields 
\beq\label{eq:cd_second_order_integrated_balance}
 -\dfrac{\kappa_\beta}{2\pi}
 \left(
  \ud^2\cS[\cC_2]-\ud^2\cS[\cC_1]
 \right)
 =
 \int_{\cN^+_{12}}\bog
 =
 \ud^2\sf_\xi[\cN^+_{12}].
\eeq
Consequently, with inequalities defined relative to
$\boep^+_{d-1}$,
\beq
 \bog\ge0
 \quad\Longrightarrow\quad \Delta\ud^2\cS \le 0.
\eeq
Conversely, since \cref{eq:cd_second_order_integrated_balance} holds
for arbitrary cuts, the localisation argument in the proof of
\cref{cor:exact_local} shows that $\bog\ge0$ on an open set $U$ if
and only if $\ud^2\cS[\cC']\le\ud^2\cS[\cC]$ for every pair of cuts
$\cC\preceq\cC'$ that coincide outside $U$.
For the explicit Einstein gravity variations, define
along the family
\beq
 h_{AB}(\lambda)=\ud q_{AB}(\lambda),
 \qquad
 h(\lambda)=q^{AB}(\lambda)h_{AB}(\lambda).
\eeq
In particular,
\beq
 \ud h=q^{AB}\ud^2q_{AB}-h_{AB}h^{AB}.
\eeq
The required variations of the area form and expansion are
\beq
\begin{aligned}
 \ud^2\boep^+_{d-2}
 &=
 \left(
  \dfrac{\ud h}{2}+\dfrac{h^2}{4}
 \right)\boep^+_{d-2},
 \\
 \ud\vartheta
 &=
 \dfrac12\cL_\ell h,
 \qquad
 \ud^2\vartheta
 =
 \dfrac12\cL_\ell\ud h.
\end{aligned}
\eeq
Thus, varying \cref{eq:cd_vacuum_entropy_form} twice gives
\beq\label{eq:cd_entropy_second_variation}
 \ud^2\cS[\cC] = \dfrac18\int_\cC\boep^+_{d-2}
 \left[ \left(\ud h+\dfrac{h^2}{2}\right)(1-v\vartheta)
  -v h\cL_\ell h  -v\cL_\ell\ud h \right].
\eeq
Similarly, varying \cref{eq:cd_vacuum_flux_form} twice gives
\beq\label{eq:cd_flux_second_variation}
\begin{aligned}
 \ud^2\sf_\xi[\cN^+_{12}]
 &= -\dfrac{\kappa_\beta}{8\pi} \int_{\cN^+_{12}}\boep^+_{d-1}\,v
 \Bigg[ \left( \dfrac{\ud h}{2}+\dfrac{h^2}{4} \right)
 \left( \sigma_{AB}\sigma^{AB} -\dfrac{d-3}{d-2}\vartheta^2 \right)
 + h\left( \ud\sigma_{AB}\sigma^{AB} +\sigma_{AB}\ud\sigma^{AB}
  -\dfrac{d-3}{d-2}\vartheta\cL_\ell h \right) \\
 &\qquad
 + \ud^2\sigma_{AB}\sigma^{AB} +\sigma_{AB}\ud^2\sigma^{AB}
 +2\ud\sigma_{AB}\ud\sigma^{AB} -\dfrac{d-3}{2(d-2)}(\cL_\ell h)^2
 -\dfrac{d-3}{d-2}\vartheta\cL_\ell\ud h \Bigg].
\end{aligned}
\eeq
All raised shear tensors are formed using
$q^{AB}(\lambda)$ before differentiation.

As a check, on a non-expanding, shear-free background,
and for vacuum perturbations obeying the horizon
boundary condition $\ud\vartheta=0$, the second order
flux reduces to
\beq
 \bog
 =
 -\dfrac{\kappa_\beta v}{4\pi}
 \boep^+_{d-1}\,
 \ud\sigma_{AB}\ud\sigma^{AB}
 \le0.
\eeq
Therefore, $\Delta\ud^2\cS\ge0$ in this restricted setting,
consistent with the stationary-horizon result expressed
in our orientation convention.

\section{Semiclassical entropy balance \texorpdfstring{$\And$}{} improved quantum null energy condition}
\label{sec:generalise_second_law_cd}

The physical process first law obtained in the previous section
relates the change in dynamical entropy to the matter and
gravitational fluxes through the null boundary. With our
conventions, positivity of the combined flux implies a decrease
of the perturbed dynamical entropy towards the future. In
semiclassical gravity, however, the expectation value of the
stress-energy tensor of quantum matter need not obey the
classical null energy condition. We therefore consider the
corresponding balance involving the dynamical gravitational
entropy and a contribution associated with the quantum matter
fields, and identify the condition under which their combined
first-order change is non-positive.

The generalised second law (GSL) was originally proposed in the
context of black hole thermodynamics
\cite{Bekenstein:1973ur,Bekenstein:1974ax} and has subsequently
been established under suitable assumptions for semiclassical
causal horizons \cite{Wall:2011gsl}. For first-order
perturbations of a stationary black hole, the quantum null
energy condition (QNEC) can be combined with the local physical
process first law to obtain a GSL involving the dynamical
entropy and a modified von Neumann entropy
\cite{Hollands:2024vbe}.\footnote{Recently a dual GSL has been proposed in \cite{Rignon-Bret:2026ttb}.} Here we examine the corresponding
construction for causal diamonds, retaining the gravitational
Wald-Zoupas flux.

We work in four spacetime dimensions, within semiclassical
Einstein gravity. Since the null boundary of a causal diamond need not have vanishing classical expansion, we consider the improved quantum null energy condition (INEC) proposed in \cite{Ben-Dayan:2023inz}. Its derivation from the quantum focusing conjecture (QFC) uses vanishing quantum expansion and a shear-free congruence. These conditions are not satisfied by a generic causal diamond. The field theoretic argument in \cite{BenDayan:2026inec} also requires an additional assumption. Accordingly, the argument below is conditional on the stated quantum inequality holding throughout the interval of cuts under consideration.

Let $\cC(v)$ denote cuts of the future null boundary $\cN^+$ generated by a uniform affine displacement along $\ell^a$. We consider a smooth portion of the boundary before its generators reach a caustic or the future corner, with $v\ge0$ measured from $\cB$. We write
\beq
 \sfa(v,y)=\sqrt{q(v,y)},
 \qquad
 \boep^+_2=\sfa\,d^2y,
 \qquad
 \vartheta=\frac{1}{\sfa}\frac{d\sfa}{dv}.
\eeq
The orientation conventions are
\beq
 \boep^+_3=dv\wedge\boep^+_2,
 \qquad
 \xi^a=\kappa_\beta v\,\ell^a,
 \qquad
 \kappa_\beta>0.
\eeq

Let $\cS_{vN}[v]$ be the entropy of the quantum fields on a consistently chosen exterior side of the cut $v=v(y)$. The regions associated with future deformations are chosen to be nested, with the remaining boundary data held fixed. Entropy differences and shape derivatives are understood in a common renormalisation prescription in which the quantum inequalities below are well defined. We denote the restriction to uniform cuts by
$\cS_{vN}(v)=\cS_{vN}[v(y)=v]$.
The quantum expansion is
\beq
\vartheta_\fq = \vartheta + \frac{4}{\sfa}
 \frac{\delta\cS_{vN}[v]}{\delta v(y)}.
\eeq
The four-dimensional INEC takes the form
\beq
 \sfa\,\langle T_{ab}\rangle\ell^a\ell^b
 \ge
 \frac{1}{2\pi}
 \left[
 \left(
 \frac{\delta^2\cS_{vN}[v]}{\delta v(y)^2}
 \right)_{\rm diag}
 -
 \frac{1}{2}\vartheta
 \frac{\delta\cS_{vN}[v]}{\delta v(y)}
 \right].
 \label{eq:INEC_cd}
\eeq
Here, the subscript denotes the diagonal second shape derivative. The additional term is proportional to the expansion distinguishes this expression from the usual QNEC. For an infinitesimal deformation of a narrow congruence of generators, the differential combination can be written as
\beq
 \frac{d^2\cS_{vN}}{dv^2}
 -
 \frac{1}{2}\vartheta\frac{d\cS_{vN}}{dv}
 =
 \sfa^{1/2}\frac{d}{dv}
 \left(
 \sfa^{-1/2}\frac{d\cS_{vN}}{dv}
 \right),
 \label{eq:INEC_derivative_identity}
\eeq
where the derivatives in this identity refer to that deformation.

For a uniform displacement, the full second derivative also
contains mixed shape derivatives on distinct generators.
Strong subadditivity makes their contribution non-positive
for the regions specified above \cite{Bousso:2015mna}, so that
\beq
 \frac{d^2\cS_{vN}(v)}{dv^2}
 \le
 \int_{\cC(v)}d^2y\,
 \left.
 \left(
 \frac{\delta^2\cS_{vN}[v]}{\delta v(y)^2}
 \right)_{\rm diag}
 \right|_{v(y)=v}.
 \label{eq:INEC_uniform_deformation}
\eeq
It is therefore useful to define
\beq
 \mathcal{I}_{\rm INEC}(v)
 \equiv
 \frac{d^2\cS_{vN}(v)}{dv^2}
 -
 \frac{1}{2}
 \int_{\cC(v)}d^2y\,
 \vartheta(v,y)
 \left.
 \frac{\delta\cS_{vN}[v]}{\delta v(y)}
 \right|_{v(y)=v}.
 \label{eq:I_INEC}
\eeq
Integrating \cref{eq:INEC_cd} and using
\cref{eq:INEC_uniform_deformation} yields
\beq
 \int_{\cC(v)}\boep^+_2\,
 \langle T_{ab}\rangle\ell^a\ell^b
 \ge
 \frac{1}{2\pi}\mathcal{I}_{\rm INEC}(v).
 \label{eq:integrated_INEC}
\eeq

We now apply this bound perturbatively. The quantum
inequality is evaluated on the fixed reference geometry,
and $\d$ acting on matter quantities denotes a variation
of the quantum state. We assume that the perturbation
satisfies
\beq
 \int_{\cC(v)}\boep^+_2\,
 \d\langle T_{ab}\rangle\ell^a\ell^b
 \ge
 \frac{1}{2\pi}\d\mathcal{I}_{\rm INEC}(v).
 \label{eq:pert_INEC}
\eeq
A sufficient condition is a differentiable one-sided family
of states, parametrised by $\lambda\ge0$, for which
\cref{eq:integrated_INEC} holds and is saturated at
$\lambda=0$ at every cut considered. The right derivative
of its non-negative gap then gives \cref{eq:pert_INEC}.
For a differentiable family admissible for both signs of
$\lambda$, saturation instead forces equality at first
order. These conditions must be checked when applying
the construction to a particular state.

As in the preceding section, the change
$\d\langle T_{ab}\rangle$ is treated as an external
first-order source for the gravitational perturbation.
The matter entropy derivatives above are evaluated before
including this backreaction, at leading semiclassical order.
The resulting change in gravitational entropy is determined
by the physical process first law.

Let $\cN^+_{v_0v}$ be the portion of $\cN^+$ between a fixed
reference cut $\cC(v_0)$ and $\cC(v)$. With our orientation,
the matter flux form is
\beq
 \d\bos{e}
 =
 -\kappa_\beta v\,
 \d\langle T_{ab}\rangle\ell^a\ell^b\,
 \boep^+_3.
 \label{eq:semiclassical_matter_flux}
\eeq
Consequently, \cref{eqn:pp1st} gives
\beq
 -\frac{\kappa_\beta}{2\pi}
 \frac{d}{dv}\d\cS
 =
 -\kappa_\beta v
 \int_{\cC(v)}\boep^+_2\,
 \d\langle T_{ab}\rangle\ell^a\ell^b
 +
 \frac{d}{dv}\d\sf_\xi[\cN^+_{v_0v}].
 \label{eq:local_pp_gsl}
\eeq
Combining this equality with \cref{eq:pert_INEC}, and using
$v\ge0$, we obtain
\begin{align}
 \frac{d}{dv}\d\cS
 +
 \frac{2\pi}{\kappa_\beta}
 \frac{d}{dv}\d\sf_\xi[\cN^+_{v_0v}]
 &=
 2\pi v
 \int_{\cC(v)}\boep^+_2\,
 \d\langle T_{ab}\rangle\ell^a\ell^b
 \nonumber\\
 &\ge
 v\,\d\mathcal{I}_{\rm INEC}(v).
 \label{eq:cd_INEC_pp}
\end{align}
We define a modified matter entropy along the chosen family of cuts, up to a $v$-independent constant, by
\beq
 \frac{d}{dv}\d\cS_q(v)
 \equiv
 -v\,\d\mathcal{I}_{\rm INEC}(v).
 \label{eq:Sq_definition}
\eeq
Equivalently,
\beq
 \d\cS_q(v) = \d\cS_q(v_0) - \int_{v_0}^{v}d\bar v\,
 \bar v\,\d\mathcal{I}_{\rm INEC}(\bar v).
 \label{eq:Sq_integrated}
\eeq
To retain the distinction between the quantum bound and
its saturation, we introduce its non-negative first-order
remainder,
\beq
 \mathsf{R}(v)
 \equiv
 2\pi
 \int_{\cC(v)}\boep^+_2\,
 \d\langle T_{ab}\rangle\ell^a\ell^b
 -
 \d\mathcal{I}_{\rm INEC}(v)
 \ge0.
 \label{eq:cd_INEC_remainder}
\eeq
Using \cref{eq:local_pp_gsl,eq:Sq_definition}, the combined
entropy then satisfies the exact first-order balance
\beq
 -\frac{\kappa_\beta}{2\pi}
 \frac{d}{dv}\left(\d\cS+\d\cS_q\right)
 =
 \frac{d}{dv}\d\sf_\xi[\cN^+_{v_0v}]
 -
 \frac{\kappa_\beta}{2\pi}\,
 v\,\mathsf{R}(v).
 \label{eq:cd_combined_entropy_balance}
\eeq
Integrating between two cuts with $v_2>v_1$ gives
\beq
 -\frac{\kappa_\beta}{2\pi}
 \left(\Delta\d\cS+\Delta\d\cS_q\right)
 =
 \d\sf_\xi[\cN^+_{v_1v_2}]
 -
 \frac{\kappa_\beta}{2\pi}
 \int_{v_1}^{v_2}dv\,v\,\mathsf{R}(v).
 \label{eq:integrated_cd_entropy_balance}
\eeq
Here $\Delta$ denotes the difference between the later and earlier cuts.

These identities give the semiclassical counterpart of the flux condition in \cref{sec:physical_process_1st_law}. The quantity whose positivity controls the decrease of the combined entropy is the gravitational flux minus the weighted remainder in the quantum energy bound.

\begin{corollary}[Conditional decrease of the combined entropy]
\label{cor:cd_combined_entropy_decrease}
Under the assumptions above, if
\beq
 \frac{d}{dv}\d\sf_\xi[\cN^+_{v_0v}]
 - \frac{\kappa_\beta}{2\pi}\, v\,\mathsf{R}(v) \ge0,
 \label{eq:cd_GSL_flux_condition}
\eeq
throughout the interval of cuts, then
\beq
 \frac{d}{dv}\left(\d\cS+\d\cS_q\right)\le 0.
 \label{eq:cd_entropy_decrease}
\eeq
For a specified pair of cuts, the necessary and sufficient integrated condition is
\beq
 \d\sf_\xi[\cN^+_{v_1v_2}] - \frac{\kappa_\beta}{2\pi}
 \int_{v_1}^{v_2}dv\,v\,\mathsf{R}(v) \ge 0,
 \label{eq:cd_integrated_flux_condition}
\eeq
which is equivalent to
\beq
 \Delta \d \cS + \Delta \d \cS_q \le 0.
\eeq
Strict positivity in \cref{eq:cd_integrated_flux_condition}
gives a strict decrease.
\end{corollary}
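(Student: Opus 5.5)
The corollary is a direct consequence of the exact first-order balance \cref{eq:cd_combined_entropy_balance} and its integrated form \cref{eq:integrated_cd_entropy_balance}. The plan is to rearrange these identities so that the combined entropy derivative appears alone, multiplied by the negative coefficient $-\kappa_\beta/(2\pi)$. The sign is then read off using $\kappa_\beta>0$.

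\emph{Local statement.} I multiply \cref{eq:cd_combined_entropy_balance} by $-2\pi/\kappa_\beta<0$, which gives
\begin{equation*}
 \frac{d}{dv}\left(\d\cS+\d\cS_q\right)
 =-\frac{2\pi}{\kappa_\beta}\left(\frac{d}{dv}\d\sf_\xi[\cN^+_{v_0v}]-\frac{\kappa_\beta}{2\pi}\,v\,\mathsf{R}(v)\right).
\end{equation*}
Under hypothesis \cref{eq:cd_GSL_flux_condition}, the bracket is non-negative, and \cref{eq:cd_entropy_decrease} follows at every $v$ in the interval.

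Some care is needed about where \cref{eq:cd_combined_entropy_balance} comes from. It rests on the physical process law \cref{eqn:pp1st}, differentiated with respect to the upper cut, as in \cref{eq:local_pp_gsl}. It also uses the definition \cref{eq:Sq_definition} of $\d\cS_q$ and the definition \cref{eq:cd_INEC_remainder} of $\mathsf{R}$. Substituting $2\pi\int\boep^+_2\,\d\langle T_{ab}\rangle\ell^a\ell^b=\mathsf{R}+\d\mathcal{I}_{\rm INEC}$ into \cref{eq:local_pp_gsl} reproduces the balance exactly. So the inequality \cref{eq:pert_INEC} is used only to guarantee $\mathsf{R}\ge0$; it is not needed for the identity itself. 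This matters because it shows the integrated criterion is a genuine equivalence and not just a sufficient condition.

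\emph{Integrated statement.} I integrate the displayed identity over $v\in[v_1,v_2]$. The total $v$-derivative of the flux integrates to $\d\sf_\xi[\cN^+_{v_0v_2}]-\d\sf_\xi[\cN^+_{v_0v_1}]=\d\sf_\xi[\cN^+_{v_1v_2}]$, by additivity of the flux integral over adjacent segments. This recovers \cref{eq:integrated_cd_entropy_balance}. Since $-\kappa_\beta/(2\pi)$ is a nonzero negative constant, the left-hand side $\Delta\d\cS+\Delta\d\cS_q$ is $\le0$ (respectively $<0$) exactly when the right-hand side \cref{eq:cd_integrated_flux_condition} is $\ge0$ (respectively $>0$). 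This gives the claimed equivalence and the strict version.

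The only real obstacle is regularity. Differentiating and integrating in $v$ requires $v\mapsto\d\cS(v)$, $\d\sf_\xi[\cN^+_{v_0v}]$ and $\d\mathcal{I}_{\rm INEC}(v)$ to be, say, absolutely continuous on the interval. I would take this from the standing smoothness assumptions: a smooth portion of $\cN^+$ before caustics or the corner, and a differentiable family of states. The $v$-independent constant in $\d\cS_q$ cancels in $\Delta$.
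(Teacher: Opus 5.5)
Your proposal is correct and follows the paper's proof, which reads off the signs from the local balance \cref{eq:cd_combined_entropy_balance} and its integrated form \cref{eq:integrated_cd_entropy_balance} using $\kappa_\beta>0$. Your added point that the balance is an identity, with \cref{eq:pert_INEC} needed only to guarantee $\mathsf{R}\ge0$, is accurate; it is what makes the integrated criterion an equivalence, and the paper leaves it implicit.
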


\begin{proof}
The statements follow directly from
\cref{eq:cd_combined_entropy_balance,eq:integrated_cd_entropy_balance}, since $\kappa_\beta>0$.
\end{proof}
In particular, suppose that the perturbative INEC is saturated throughout the interval, so that $\mathsf{R}(v)=0$. This includes the differentiable two-sided state families described above. Then
\cref{eq:integrated_cd_entropy_balance} reduces to
\beq
 -\frac{\kappa_\beta}{2\pi} \left(\Delta\d\cS+\Delta\d\cS_q\right)
 = \d\sf_\xi[\cN^+_{v_1v_2}],
\eeq
and hence
\beq
 \d \sf_\xi [\cN^+_{v_1v_2}] \ge 0
 \quad \Longleftrightarrow \quad
\Delta \d \cS + \Delta \d \cS_q \le 0.
 \label{eq:cd_GSL_saturated}
\eeq
Thus, when the quantum bound is saturated, positive gravitational flux implies a decrease of the combined perturbed entropy, directly paralleling the flux balance in the previous section. When the bound is not saturated, the gravitational flux must exceed the non-negative weighted remainder appearing in
\cref{eq:cd_integrated_flux_condition}.

The INEC alone also implies
\beq
 \frac{d}{dv}\left(\d\cS+\d\cS_q\right) \ge
 -\frac{2\pi}{\kappa_\beta} \frac{d}{dv}\d\sf_\xi[\cN^+_{v_0v}],
 \label{eq:generalise_second_law_cd}
\eeq
but this lower bound does not by itself determine the sign of the combined entropy change when the gravitational flux is positive. The decrease asserted in \cref{cor:cd_combined_entropy_decrease} therefore requires the stated condition on the full right-hand side of the balance law. On a nonstationary background, the
gravitational flux can contribute already at first order; its sign is not fixed by the second order vacuum analysis.

For comparison, in the stationary Killing limit, $\vartheta=0$, the quantum bound reduces to QNEC and \cref{eq:I_INEC} becomes
\beq
 \mathcal{I}_{\rm INEC}(v) = \frac{d^2\cS_{vN}(v)}{dv^2}.
\eeq
It follows from \cref{eq:Sq_definition} that
\beq
\d\cS_q = \d\cS_{vN} - v\,\frac{d}{dv}\d\cS_{vN} + \text{constant}.
 \label{eq:Sq_stationary}
\eeq
This agrees with the first variation of the modified von Neumann entropy introduced in \cite{Hollands:2024vbe}. The first-order gravitational flux vanishes for perturbations of a stationary Killing horizon in Einstein gravity, and \cref{eq:cd_combined_entropy_balance} then gives
\beq
 \frac{d}{dv} \left( \d \cS + \d \cS_q \right)
 = v\,\mathsf{R}(v) \ge 0.
 \label{eq:cd_GSL_zero_flux}
\eeq
The negative temperature $T_{(\beta)}=-\kappa_\beta/(2\pi)$ is consistent with both statements: the sign of the entropy change is determined by the full balance law, including the oriented matter flux and the remainder in the quantum inequality.

For nonzero expansion, \cref{eq:Sq_integrated} defines $\d\cS_q$ along the specified family of cuts. An independent statistical interpretation, and independence from a more general choice of deformation path, would require additional input. The monotonicity established here concerns the first-order change of this modified entropy sum under the stated flux condition. It does not determine the evolution of its nonstationary background value or establish the usual GSL for $A/4+\cS_{vN}$ on an arbitrary causal diamond. The black hole comparison is discussed in appendix \ref{sec:bh_gsl}.

\section{Spherically symmetric causal diamonds}
\label{sec:CD_MSS}

We now consider spherical causal diamonds in maximally
symmetric backgrounds and compare the results of
\cref{sec:dyn_entropy_cd,sec:physical_process_1st_law}
with the equilibrium relations reviewed in appendix
\ref{sec:cd_mss}.

Let $\S_t$ be the intersection of the diamond with a
constant-$t$ surface in the de Sitter static patch coordinates, 
\beq \label{eq:cd_mss_static_metric}
 ds^2 = -\left(1-\frac{r^2}{L^2}\right)dt^2
 + \frac{dr^2}{\left(1-\dfrac{r^2}{L^2}\right)} + r^2d\tom_{d-2}^2 
\eeq
with cross sections $\cC_t=\partial\S_t$ and de Sitter radius $L$. We focus on $t\ge0$, for which $\cC_t\subset\cN^+$. The maximal slice is $\S_0$ and
$\cC_0=\cB$. For the metric \eqref{eq:cd_mss_static_metric}, the radius of the
round cut is
\beq
 r_t=L\tanh\frac{R_*-t}{L},
 \qquad
 R=L\tanh\frac{R_*}{L}.
 \label{eq:VI_cut_radius}
\eeq
Similar expressions for flat and AdS spacetimes follow by the limits
specified in \cref{sec:cd_mss}. Throughout this
section, $K$ denotes the outward mean curvature of
the original edge $\cB$ within $\S_0$ (see appendix \ref{sec:cd_mss}).
The conformal Killing vector (CKV) $\z^a$ is given by \cite{Jacobson:2018ahi},
\beq\label{eq:cd_mss_ckv_static}
 \z = \frac{\kappa L^2}{R} \Bigg[
 \left( 1- \frac{\sqrt{1-R^2/L^2}}{\sqrt{1-r^2/L^2}}\cosh\frac{t}{L} \right)\partial_t
 -
 \frac{r}{L} \sqrt{\left(1-\frac{R^2}{L^2}\right)
 \left(1-\frac{r^2}{L^2}\right)} \sinh\frac{t}{L}\,\partial_r
 \Bigg].
\eeq
Its conformal factor and normal derivative on $\S_t$ are
\begin{align}
 \alpha
 &=
 -\frac{\kappa L}{R}
 \sqrt{\left(1-\frac{R^2}{L^2}\right)
       \left(1-\frac{r^2}{L^2}\right)}
 \sinh\frac{t}{L},
 \label{eq:VI_alpha}
 \\
 \dot{\alpha}
 \equiv u^a\nabla_a\alpha
 &=
 -\frac{\kappa}{R}\sqrt{1-\frac{R^2}{L^2}}
 \cosh\frac{t}{L}
 =
 -\frac{\kappa K}{d-2}\cosh\frac{t}{L},
 \label{eq:VI_alpha_dot}
\end{align}
where $u^a$ is the future directed unit normal to
$\S_t$, normalised as $u_au^a=-1$ \cite{Jacobson:2018ahi}.
In particular, $\dot{\alpha}$ is constant over each
of these slices, although $\alpha$ generally is not.
We write
\beq
 \begin{split}
 A_t&=\int_{\cC_t}\boep^+_{d-2},
 \qquad
 V_t=\int_{\S_t}dV,
 \qquad dV=u\i\boep,
 \\
 V_\z(t)&=\int_{\S_t}\z\i\boep
       =-\int_{\S_t}u_a\z^a\,dV.
 \end{split}
 \label{eq:VI_volumes}
\eeq
On a generic slice, the weight in $V_\z(t)$ is the
lapse $-u_a\z^a$. It equals $|\z|$ when $\z^a$ is
normal to the slice, as on $\S_0$.

\begin{figure}[htbp]
 \centering
 \begin{tikzpicture}[x=0.75pt,y=0.75pt,yscale=-.6,xscale=.6]
  \draw (320.75,12) -- (471.5,149) --
        (320.75,286) -- (170,149) -- cycle;
  \draw[dashed]
   (170,149)
   .. controls (170,143.63) and (237.21,139.28)
   .. (320.13,139.28)
   .. controls (403.04,139.28) and (470.26,143.63)
   .. (470.26,149)
   .. controls (470.26,154.37) and (403.04,158.72)
   .. (320.13,158.72)
   .. controls (237.21,158.72) and (170,154.37)
   .. (170,149) -- cycle;
  \draw
   (234.65,90.37)
   .. controls (234.65,85.26) and (273.2,81.12)
   .. (320.75,81.12)
   .. controls (368.3,81.12) and (406.85,85.26)
   .. (406.85,90.37)
   .. controls (406.85,95.49) and (368.3,99.63)
   .. (320.75,99.63)
   .. controls (273.2,99.63) and (234.65,95.49)
   .. (234.65,90.37) -- cycle;
  \node at (320,174) {$\S_0$};
  \node at (320,112) {$\S_t$};
 \end{tikzpicture}
 \caption{The maximal slice $\S_0$ of a spherical
 causal diamond, shown dashed, and a later spacelike
 slice $\S_t$, whose boundary lies on $\cN^+$.}
 \label{causal_diamond}
\end{figure}

There are two relevant generators. On $\cN^+$, we choose the affine parameter $v=R-r$, increasing along $\cN^+$, so that
\beq
 \ell^a\partial_a
 =
 \frac{1}{1-r^2/L^2}\partial_t-\partial_r,
 \qquad
 \ell^a\nabla_a v=1.
 \label{eq:VI_affine_generator}
\eeq
Restricting the CKV to this boundary gives
\beq
 \z^a\big|_{\cN^+}
 =
 \kappa v\left(1-\frac{v}{R}\right)\ell^a,
 \qquad
 \alpha\big|_{\cN^+}=-\frac{\kappa v}{R}.
 \label{eq:VI_CKV_on_boundary}
\eeq
The boost generator used in
\cref{eq:WZ_charge_dyn_entropy}, with the matching
normalisation $\kappa_\beta=\kappa$, is instead
\beq
 \xi^a\big|_{\cN^+}=\kappa v\ell^a.
 \label{eq:VI_boost_generator}
\eeq
For the fixed affine boundary structure, a generator
$f\ell^a$ must satisfy $\partial_v^2f=0$
\cite{Chandrasekaran:2018aop,Chandrasekaran:2019ewn}.
Thus $\xi^a$ is an allowed boundary symmetry, whereas
the CKV generally does not preserve this structure.
The generators have the same boost normalisation at
the edge, but differ on later cuts. We use $\z^a$ for
the conformal comparison law and $\xi^a$ for the WZ
entropy and physical process law.

\subsection{First law}

\subsubsection{Comparison first law}

Consider a metric perturbation $h_{ab}=\d g_{ab}$, with $h=g^{ab}h_{ab}$ and
$h^{ab}=g^{ac}g^{bd}h_{cd}$. The generators and the embeddings of the slices and cuts are held fixed under the perturbation. The perturbed geometry need not admit the background CKV.
Substituting $\cL_\z g_{ab}=2\alpha g_{ab}$ into the
Einstein symplectic current gives
\cite{Burnett:1990tww,Jacobson:2018ahi}
\begin{subequations}
 \label{symp_current_t}
 \beq
 \bom(g;h,\cL_\z g)
 =
 \frac{d-2}{16\pi}\epsilon_a
 \left[
 \alpha\left(\nabla^a h-\nabla_bh^{ab}\right)
 +
 \left(h^{ab}-hg^{ab}\right)\nabla_b\alpha
 \right].
 \label{eq:VI_CKV_current}
 \eeq
 Here $\epsilon_a$ is the spacetime volume form with its first index displayed,
so that $V^a\epsilon_a=V\i\boep$ for any vector $V^a$.
On $\S_t$ it restricts to
$\epsilon_a|_{\S_t}=-u_a\,dV$.
Let $D_a\alpha=(\delta_a{}^b+u_a u^b)\nabla_b\alpha$.
Using
$\epsilon_a|_{\S_t}=-u_a\,dV$ and
$\d(dV)=\frac12(g^{ab}+u^a u^b)h_{ab}\,dV$,
its pullback becomes
 \beq
 \bom(g;h,\cL_\z g)\big|_{\S_t} = \frac{d-2}{8\pi}\dot{\alpha}\,\d(dV)
 -(d-2)\alpha\,\bth(g;h)\big|_{\S_t}
 - \frac{d-2}{16\pi} u_a h^{ab}D_b\alpha\,dV.
 \label{eq:VI_CKV_current_pullback}
 \eeq
\end{subequations}
The last term accounts for the spatial dependence of $\alpha$. It vanishes, for example, in a gauge where the mixed normal-tangential components of $h_{ab}$ vanish on the slice. The second and third terms on the right hand side vanish on $\S_0$, where $\alpha=D_a\alpha=0$. The symplectic form is given by,
\beq
 \Omega_{\S_t}(g;h,\cL_\z g) = \int_{\S_t}\bom(g;h,\cL_\z g).
 \label{eq:VI_symplectic_pairing}
\eeq
For the constant-$t$ slices,
\cref{eq:VI_alpha_dot,eq:VI_CKV_current_pullback}
give
\beq
 \Omega_{\S_t}(g;h,\cL_\z g) =
 -\frac{\kappa K}{8\pi} \cosh\frac{t}{L}\,\d V_t
 -(d-2)\int_{\S_t}\alpha\,\bth(g;h)
- \frac{d-2}{16\pi} \int_{\S_t}u_a h^{ab}D_b\alpha\,dV.
 \label{eq:VI_gravitational_pairing}
\eeq
 Here, $K$ is the trace of the extrinsic curvature of the edge, not that of $\cC_t$.
Let $\tilde m$ denote perturbative matter other than the cosmological constant, with vanishing background stress tensor, and define
\beq
 \d H_\eta^{\tilde m}[\S_t]
 =
 \int_{\S_t}
 \d(T_a{}^b)^{\tilde m}\eta^a u_b\,dV,
 \qquad
 V_\eta(t)=\int_{\S_t}\eta\i\boep,
 \label{eq:VI_matter_variation}
\eeq
for a fixed vector field $\eta^a$. Integrating the fundamental CPS identity with the linearised Einstein constraints imposed, gives
\beq
 \int_{\cC_t} \left[\d\boq_\z-\z\i\bth(g;h)\right] = 
 \Omega_{\S_t}(g;h,\cL_\z g) +\d H_\z^{\tilde m}[\S_t] + \frac{V_\z(t)}{8\pi}\,\d\Lambda.
 \label{fund_identity}
\eeq
For fixed $\Lambda$ and vacuum perturbations, only the symplectic pairing remains on the right. This is the comparison identity on $\S_t$. Its interpretation as the variation of an integrated Hamiltonian requires the integrability condition discussed in \cref{eq:global_charge_nonintegrability}.
At the edge, $\z^a=0$ and $\int_{\cB}\d\boq_\z=-\kappa\d A_0/(8\pi)$. Consequently, \cref{fund_identity} reduces to
\beq
 \d H_\z^{\tilde m}[\S_0] + \frac{V_\z(0)}{8\pi}\,\d\Lambda
 = \frac{\kappa K}{8\pi}\,\d V_0 - \frac{\kappa}{8\pi}\,\d A_0,
 \label{eq:VI_maximal_comparison}
\eeq
in agreement with \cref{eq:cd_equilibrium_firstlaw}. Away from the edge, the surface integral in \cref{fund_identity} includes both the variation of the Noether charge and the symplectic potential term. It should therefore be retained in this form.

For comparison with the entropy defined in \cref{eq:dyn_entropy_cd}, use the admissible generator $\xi^a$ and the WZ identity
\beq
 \int_{\cC_t}
 \left[\d\boq_\xi-\xi\i\bth(g;h)\right] =
 -\frac{\kappa}{2\pi}\d\cS[\cC_t] - \int_{\cC_t}\xi\i\bTh(g;h).
 \label{eq:VI_WZ_comparison_boundary}
\eeq
Integrating the CPS identity yields
\beq
 -\frac{\kappa}{2\pi}\d\cS[\cC_t] = 
 \Omega_{\S_t}(g;h,\cL_\xi g) +\d H_\xi^{\tilde m}[\S_t]
 +\frac{V_\xi(t)}{8\pi}\,\d\Lambda + \int_{\cC_t}\xi\i\bTh(g;h).
 \label{eq:VI_WZ_comparison}
\eeq
A smooth extension of $\xi^a$ into the interior, compatible with the fixed boundary structure, is understood. The complete right-hand side is fixed by the boundary data. Since $\xi^a$ is generally not the CKV, its symplectic pairing is not given by \cref{eq:VI_gravitational_pairing}.
At $\cB$, the boundary-potential term vanishes and $\cS[\cB]=A_0/4$, so the entropy term agrees with that of the equilibrium comparison law.

\subsubsection{Physical process first law and entropy balance}

For evolution along the future boundary, we keep $\Lambda$ fixed and use $\xi^a=\kappa v\ell^a$ with the orientation $\boep^+_{d-1}=\ed v\wedge\boep^+_{d-2}$. For two cuts with $v_2>v_1$, the result of
\cref{sec:physical_process_1st_law} gives
\beq
 -\frac{\kappa}{2\pi}\Delta\d\cS
 =
 \Delta\d\cE
 +
 \d\sf_\xi[\cN^+_{12}],
 \label{eq:VI_physical_process}
\eeq
where
\beq
 \Delta\d\cE
 =
 -\int_{\cN^+_{12}}
 \d T_{ab}\xi^a\ell^b\,\boep^+_{d-1}.
 \label{eq:VI_signed_matter_flux}
\eeq
Thus, the signed matter term is negative for positive outgoing null energy in this orientation. The corresponding entropy criterion
(\cref{cor:linearised_second_law}) is
\beq
 \Delta\d\cE+\d\sf_\xi[\cN^+_{12}]\ge0
 \quad\Longleftrightarrow\quad
 \Delta\d\cS\le0.
 \label{eq:VI_conditional_second_law}
\eeq
In vacuum, this condition reduces to $\d\sf_\xi[\cN^+_{12}]\ge0$. Positivity of this variation is an additional condition; it does not
follow merely from positivity of the background flux.

The unperturbed spherical diamond provides an explicit check. With $r=R-v$, its null geometry is
\beq
 A(v)=\tom_{d-2}(R-v)^{d-2},
 \qquad
 \vartheta=-\frac{d-2}{R-v},
 \qquad
 \sigma_{AB}=0.
 \label{eq:VI_spherical_null_geometry}
\eeq
The dynamical entropy therefore is
\beq
 \cS(v)
 =
 \frac{\tom_{d-2}}{4}
 (R-v)^{d-3}\big[R+(d-3)v\big],
 \label{eq:VI_spherical_entropy}
\eeq
and hence
\beq
 \frac{d\cS}{dv}
 =
 -\frac{\tom_{d-2}}{4}
 (d-2)(d-3)v(R-v)^{d-4}.
 \label{eq:VI_background_entropy_derivative}
\eeq
For $d>3$, this entropy decreases from the edge towards the corner $p^+$ and vanishes there, as shown in \cref{cor:global_decrease}. For $d=3$, it is constant. In four dimensions, $\cS(v)=\pi(R^2-v^2)$. The associated vacuum flux is
\beq
 \frac{d}{dv}\sf_\xi[\cN^+_{0v}]
 =
 \frac{\kappa v}{8\pi}
 \frac{d-3}{d-2}
 \int_{\cC(v)}\vartheta^2\boep^+_{d-2}
 =
 -\frac{\kappa}{2\pi}\frac{d\cS}{dv}
 \ge0.
 \label{eq:VI_background_flux}
\eeq
This exact background relation is consistent with the sign of \cref{eq:VI_physical_process}. The linearised law concerns the variation of this balance, and its monotonicity requires the condition in \cref{eq:VI_conditional_second_law}.

\subsection{Smarr relation}

We first evaluate the Noether identity for the CKV on the maximally symmetric background. For Einstein gravity with cosmological constant,
\beq
 \boj_\z
 =
 -\frac{d-1}{8\pi}\epsilon_a\nabla^a\alpha
 -
 \frac{\Lambda}{4\pi(d-2)}\,\z\i\boep,
 \qquad
 \boj_\z=\ed\boq_\z.
 \label{eq:VI_CKV_Noether_current}
\eeq
Its integral over $\S_t$ is
\beq
 \int_{\S_t}\boj_\z
 =
 \frac{d-1}{8\pi}
 \int_{\S_t}\dot{\alpha}\,dV
 -
 \frac{\Lambda}{4\pi(d-2)}V_\z(t).
 \label{eq:VI_bulk_Noether_integral}
\eeq
On a round cut of $\cN^+$, the Noether charge is given by
\beq
 \int_{\cC_t}\boq_\z
 =
 -\frac{\kappa+\alpha_t}{8\pi}A_t,
 \qquad
 \alpha_t=\alpha|_{\cC_t}
 =-\frac{\kappa v_t}{R},
 \qquad
 v_t=R-r_t.
 \label{eq:VI_CKV_cut_charge}
\eeq
The factor $\kappa+\alpha_t$ follows by contracting the antisymmetric derivative of $\z^a$ with the cut binormal, using $\nabla_a (\zeta^2) =-2\kappa\zeta_a$ and $\nabla_{(a}\zeta_{b)}=\alpha g_{ab}$. The conformal surface gravity $\kappa$ is distinct from the inaffinity:
$\zeta^a\nabla_a\zeta^b=(\kappa+2\alpha)\zeta^b$ \cite{Jacobson:2018ahi}.
Equating \cref{eq:VI_bulk_Noether_integral} and
\cref{eq:VI_CKV_cut_charge} gives
\beq
 (d-2)(\kappa+\alpha_t)A_t
 =
 2\Lambda V_\z(t)
 -
 (d-1)(d-2)
 \int_{\S_t}\dot{\alpha}\,dV.
 \label{eq:VI_Smarr_general_slice}
\eeq
For the constant-$t$ slices this becomes
\beq
 (d-2)\kappa\frac{r_t}{R}A_t
 =
 2\Lambda V_\z(t)
 +
 (d-1)\kappa K
 \cosh\frac{t}{L}\,V_t.
 \label{eq:VI_Smarr_static_slices}
\eeq
At $t=0$, $r_t=R$ and $\alpha_t=0$, recovering \cref{eq:cd_smarr}. This reduction uses the vanishing of $\z^a$ at $\cB$, together with
$\alpha|_{\S_0}=0$.
To relate this identity to the dynamical entropy, let $\bob_{\cN}$ denote the boundary form entering $\bTh=\underline{\bth}-\d\bob_{\cN}$. The improved charge of the admissible generator is
\beq
 \bos{\sq}_\xi
 =
 \boq_\xi-\xi\i\bob_{\cN},
 \qquad
 \int_{\cC_t}\bos{\sq}_\xi
 =
 -\frac{\kappa}{2\pi}\cS[\cC_t].
 \label{eq:VI_improved_charge}
\eeq
The corresponding current identity on the vacuum null boundary is
\beq
 \ed\bos{\sq}_\xi
 =
 \underline{\boj_\xi}
 -
 \ed(\xi\i\bob_{\cN}).
 \label{improved_smarr_identity}
\eeq
Thus, improving the charge also changes its current.
Keeping both the change of generator and the boundary correction gives
\beq
 \int_{\cC_t}\bos{\sq}_\xi = \int_{\S_t}\boj_\z +
 \int_{\cC_t} \left[ \boq_\xi-\boq_\z-\xi\i\bob_{\cN} \right].
 \label{eq:VI_Noether_WZ_relation}
\eeq
Consequently, the Smarr identity expressed through the dynamical entropy is
\beq
 4(d-2)\kappa\,\cS[\cC_t] =
 2\Lambda V_\z(t) -(d-1)(d-2)\int_{\S_t}\dot{\alpha}\,dV
- 8\pi(d-2) \int_{\cC_t} \left[ \boq_\xi-\boq_\z-\xi\i\bob_{\cN} \right].
 \label{eq:VI_dynamical_Smarr}
\eeq
For spherically symmetric cuts, the last integral is explicitly
\beq
 \int_{\cC_t}
 \left[ \boq_\xi-\boq_\z-\xi\i\bob_{\cN} \right]
 =
 -\frac{\kappa A_t}{8\pi}
 \left[
 \frac{v_t}{R}
 +
 \frac{(d-2)v_t}{R-v_t}
 \right].
 \label{eq:VI_Smarr_boundary_correction}
\eeq
It vanishes at the edge, where the two generators have the same boost charge and $\xi^a=0$. Using $\cS[\cB]=A_0/4$, the dynamical entropy
expression then reduces to the equilibrium Smarr formula derived in \cite{Jacobson:2018ahi}.








\section{Einstein-scalar gravity}\label{sec:EH-scalar_gravity}
Now, we shift our focus to Einstein-scalar (ES) gravity. In this section, we will explore the symmetry of a null surface in the ES gravity, the corresponding charges and fluxes on it, and, finally, apply it to causal diamonds.
As in the rest of the paper, we consider a $4$-dimensional spacetime, with a Lagrangian $4$-form given by \cite{Faulkner:2010fh, Anabalon:2023efw}
\beq\label{eq:lagrangian_ES_gravity}
  \sfL_{abcd} = \dfrac{1}{16\pi} \epsilon_{abcd}\left( \cR - (\nabla \P)^2 - 2 U (\P) \right)  ,
\eeq
where $\P$ is the scalar field coupled to gravity.

\subsection{Structure of a null surface and the configuration space} \label{sec:ES-null-configuration}
Let $\cN$ be a fixed smooth hypersurface in the four-dimensional spacetime $\cM$. The dynamical fields of the theory are the pair
$\p=(g_{ab},\P)$, with the scalar potential $U(\P)$ held fixed. Since the scalar in \cref{eq:lagrangian_ES_gravity} is minimally coupled, the kinematical definition of the null surface boundary structure is the same as in \cref{sec:null_surface_review}. Following \cite{Chandrasekaran:2018aop}, we fix the equivalence class
\begin{equation}
 \fs=[\ell^a,\k,\hat{\ell}_a],
 \label{eq:ES-boundary-structure}
\end{equation}
where $\ell^a$ is a nowhere vanishing vector tangent to the generators of $\cN$, $\hat{\ell}_a$ is a nowhere vanishing normal
covector, and $\hat{\ell}_a\ell^a\nulleq0$. Representatives
are related by
\begin{equation}
 (\ell^a,\k,\hat{\ell}_a)
 \sim
 \left(
 e^\gamma\ell^a,\,
 e^\gamma(\k+\cL_\ell\gamma),\,
 e^\gamma\hat{\ell}_a
 \right),
 \label{eq:ES-boundary-rescaling}
\end{equation}
for a smooth function $\gamma$ on $\cN$.
An admissible metric realises this structure
through
\begin{equation}
 g_{ab}\ell^b\nulleq\hat{\ell}_a,
 \qquad
 \ell^a\nabla_a\ell^b\nulleq\k\ell^b.
 \label{eq:ES-metric-compatibility}
\end{equation}
The hypersurface and the chosen representative of $\fs$ are held fixed under field-space variations. The additional intrinsic field is the scalar restriction
\begin{equation}
 \P_{\cN}:=\P|_{\cN}.
\end{equation}
Its derivatives along the generators and along a cross-section are, respectively,
\begin{equation}
 \cL_\ell\P_{\cN},
 \qquad
 D_A\P_{\cN}=\pa_A\P_{\cN},
\end{equation}
where $D_A$ is the Levi-Civita derivative of $q_{AB}$. The scalar has zero weight under \cref{eq:ES-boundary-rescaling}, whereas $\cL_\ell\P$ transforms as
\begin{equation}
 \P\longmapsto\P,
 \qquad
 \cL_\ell\P\longmapsto
 e^\gamma\cL_\ell\P.
 \label{eq:ES-scalar-rescaling}
\end{equation}
The transverse derivative $n^a\nabla_a\P$, for an auxiliary null vector satisfying $n^a\ell_a=-1$, depends on the bulk field and is not determined by $\P_{\cN}$ alone.
The unrestricted configuration space associated with the fixed geometric boundary structure is 
\begin{equation}
 \cF_{\fs}^{\rm ES} := \left\{
 (g_{ab},\P):
 \begin{aligned}
 &g_{ab}\ \text{is smooth and Lorentzian},\\
 &\P\in C^\infty(\cM),\\
 &g_{ab}\ell^b\nulleq\hat{\ell}_a,\quad
   \ell^a\nabla_a\ell^b\nulleq\k\ell^b
 \end{aligned}
 \right\}.
 \label{eq:ES-configuration-space}
\end{equation}
Here $q_{AB}$, $\vartheta$, $\sigma_{AB}$, and $\P_{\cN}$ are allowed to vary. In particular, fixing $\fs$ does not fix the
scalar profile or require $\cL_\ell\P=0$. Off shell, and in the absence
of additional scalar boundary conditions, $\cF_{\fs}^{\rm ES}$ is the product of the metric configuration space of \cref{sec:null_surface_review} with the space of smooth scalar fields.
A tangent vector to $\cF_{\fs}^{\rm ES}$ is a pair
\begin{equation}
 \delta\p=(h_{ab},\delta\P),
 \qquad h_{ab}=\delta g_{ab}.
\end{equation}
The metric perturbations satisfy the conditions given in \cref{perturbation_conditions},
\begin{equation}
 h_{ab}\ell^b\nulleq0,
 \qquad
 \nabla_c(h_{ab}\ell^a\ell^b)\nulleq0.
 \label{eq:ES-tangent-conditions}
\end{equation}
No additional restriction on $\delta\P$ follows from preservation of the geometric boundary structure. Any restriction on the scalar perturbation must instead come from a separately specified scalar boundary condition.
The covariant phase space is the subset
\begin{equation}
 \cP_{\fs}^{\rm ES}
 :=
 \left\{
 (g_{ab},\P)\in\cF_{\fs}^{\rm ES}: E_{ab}=0,\quad E_\P=0 \right\},
 \label{eq:ES-solution-space}
\end{equation}
where, with the normalisation of
\cref{eq:lagrangian_ES_gravity},
\begin{align}
 E_{ab}  &:=  G_{ab}-\nabla_a\P\nabla_b\P + \frac12g_{ab}(\nabla\P)^2
 +g_{ab}U(\P),\\
 E_\P &:=  \nabla^a\nabla_a\P-U'(\P).
 \label{eq:ES-equations-configuration}
\end{align}
A tangent vector to $\cP_{\fs}^{\rm ES}$ satisfies \cref{eq:ES-tangent-conditions}, together with the coupled linearised equations
\begin{equation}
 \d E_{ab}=0, \qquad \d E_\P=0.
\end{equation}
Thus the metric and scalar perturbations are independent kinematical variables but are coupled on-shell.
The scalar also modifies the null constraint equations through its backreaction. In particular, contraction of the Einstein equation with $\ell^a\ell^b$ yields
\begin{equation}
 \cR_{ab}\ell^a\ell^b
 =(\cL_\ell\P)^2.
 \label{eq:ES-null-einstein-constraint}
\end{equation}
Consequently, the twist-free Raychaudhuri equation takes the form
\begin{equation}
 \cL_\ell\vartheta
 =\k\vartheta-\frac12\vartheta^2
  -\sigma_{AB}\sigma^{AB}
  -(\cL_\ell\P)^2.
 \label{eq:ES-null-focusing}
\end{equation}
The potential does not contribute directly to this contraction because $\ell^a$ is null. The boundary geometry therefore has the same
kinematical definitions as in general relativity, while its evolution is constrained by the scalar field. In particular, an
on-shell boundary with $\vartheta=\sigma_{AB}=0$ throughout a
portion of $\cN$ must also satisfy $\cL_\ell\P=0$ there. For completeness, we discuss the symmetry of a null surface in ES gravity in \cref{sec:ES-null-symmetry}.

\subsection{Charges and fluxes in Einstein-scalar gravity}
\label{sec:ES-charges-fluxes}

We consider the configuration space \cref{eq:ES-configuration-space}, allowing the scalar boundary data to vary. The generators are held fixed under variations and need not preserve a particular
scalar profile. The additional restrictions of \cref{sec:ES-null-symmetry} apply when one chooses the corresponding scalar stabiliser.  For the ES action as given in
\cref{eq:lagrangian_ES_gravity},
\beq
  \sfL_{abcd} = \dfrac{1}{16\pi} \epsilon_{abcd}\left( \cR - (\nabla \P)^2 - 2 U (\P) \right)  , \nonumber
\eeq
the physical scalar stress tensor is
\beq
 T^{\P}_{ab}
 =
 \frac{1}{8\pi}
 \left[
 \nabla_a\P\nabla_b\P
 -\frac12g_{ab}(\nabla\P)^2
 -g_{ab}U(\P)
 \right],
 \qquad
 G_{ab}=8\pi T^{\P}_{ab}.
 \label{eq:ES-stress-normalisation}
\eeq
This convention agrees with the matter stress tensor
used in \cref{sec:physical_process_1st_law}.
In particular,
$8\pi T^{\P}_{ab}\ell^a\ell^b=(\cL_\ell\P)^2$,
as in \cref{eq:ES-null-einstein-constraint}.
Writing $h_{ab}=\d g_{ab}$ and $h=g^{ab}h_{ab}$,
variation of the Lagrangian gives
\beq
 \d\bos{\sfL}
 =
 -\frac{1}{16\pi}E^{ab}h_{ab}\boep
 +\frac{1}{8\pi}E_\P\,\d\P\,\boep
 +\ed\bth_{\rm ES}(\p;\d\p),
 \label{eq:ES-Lagrangian-variation}
\eeq
where $E_{ab}$ and $E_\P$ are the equations defined in \cref{eq:ES-equations-configuration}. The symplectic potential is
\beq
 \bth_{\rm ES}(\p;\d\p)
 =
 \frac{1}{16\pi}\epsilon_a
 \left[
 \nabla_bh^{ab}-\nabla^a h
 -2\nabla^a\P\,\d\P
 \right].
 \label{symp_potential_scalar-gravity}
\eeq
Here $\epsilon_a$ denotes the spacetime volume form with its first index displayed, as in \cref{sec:CD_MSS}, and the indices on $h_{ab}$ are raised with the background metric. The relative signs follow directly from varying \cref{eq:lagrangian_ES_gravity} in the Iyer-Wald convention \cite{Iyer:1994ys}.
The symplectic current contains both contributions,
\beq
 \bom_{\rm ES}=\bom_{\rm g}+\bom_\P,
 \label{eq:ES-symplectic-current}
\eeq
where
\beq
 \bom_\P(\p;\d_1\p,\d_2\p)
 =
 -\frac{1}{8\pi}
 \left[
 \d_1(\epsilon_a\nabla^a\P)\,\d_2\P
 -
 \d_2(\epsilon_a\nabla^a\P)\,\d_1\P
 \right].
 \label{eq:ES-scalar-symplectic-current}
\eeq
The variations in parentheses act on the metric,
the volume form, and the scalar field.

The Noether current satisfies
\beq
 \boj_\xi
 =
 \ed\boq_\xi+\xi^a\boc_a,
 \qquad
 \boc_a
 =
 \left(
 \frac{1}{8\pi}G_a{}^b-T^{\P}_a{}^b
 \right)\epsilon_b
 =
 \frac{1}{8\pi}E_a{}^b\epsilon_b.
 \label{eq:ES-Noether-constraint}
\eeq
Thus, the full constraint vanishes on solutions of
the coupled theory. The Noether charge two-form is
\beq
 (Q_\xi)_{ab}
 =
 -\frac{1}{16\pi}\epsilon_{abcd}\nabla^c\xi^d.
 \label{eq:ES-Noether-charge}
\eeq
Minimal coupling leaves this expression unchanged. The scalar contributes to the symplectic current through the symplectic potential,
\beq
 \bom_{\rm ES}(\p;\d\p,\cL_\xi\p) \approx \ed\left(\d\boq_\xi-\xi\i\bth_{\rm ES}(\p;\d\p)\right),
 \label{eq:ES-surface-charge-variation}
\eeq
whose integrability is subject to the condition discussed in \cref{sec:cps_review}. Using the same geometric boundary term as in the preceding sections, written with this orientation,
\beq
 \bob_{\cN}\equiv\bos{\vr} = -\frac{\vartheta}{8\pi}\boep_3,
 \qquad
 \bTh_{\rm ES} = \underline{\bth_{\rm ES}}-\d\bob_{\cN},
 \label{eq:ES-boundary-counterterm}
\eeq
we obtain
\beq
 \bTh_{\rm ES}(\p;\d\p) = -\frac{\boep_3}{16\pi}
 \left[ \left(\sigma^{AB}-\frac12\vartheta q^{AB}\right)
 h_{AB} + 2(\cL_\ell\P)\,\d\P \right].
 \label{eq:ES-boundary-potential}
\eeq
Here, $h_{AB}$ is the pullback of $h_{ab}$ to the cuts. The gravitational part is the potential of \cite{Chandrasekaran:2018aop}; the scalar part follows from \cref{symp_potential_scalar-gravity}. On a stationary on-shell portion of the boundary, $\vartheta=\sigma_{AB}=0$ implies $\cL_\ell\P=0$ by \cref{eq:ES-null-focusing}. Consequently, $\bTh_{\rm ES}$ vanishes there for every allowed perturbation, as required by the WZ stationarity condition.

For a field-independent generator, the WZ variation becomes
\begin{align}
 \d\bos{\sq}_\xi
 &=
 \d\boq_\xi-\xi\i\underline{\bth_{\rm ES}}
 +\xi\i\bTh_{\rm ES}
 \nonumber\\
 &=
 \d\left(\boq_\xi-\xi\i\bob_{\cN}\right).
 \label{eq:ES-WZ-variation}
\end{align}
With the charge normalisation used in
\cref{sec:dyn_entropy_cd}, this gives
\beq
 \bos{\sq}_\xi
 =
 \boq_\xi-\xi\i\bob_{\cN}
 =
 \frac{\boep_2}{8\pi}
 \left(
 n_b\ell^a\nabla_a\xi^b
 -\vartheta\xi^a n_a
 \right).
 \label{eq:ES-WZ-charge}
\eeq
The charge therefore retains the geometric form
of \cref{WZ_charge_2}. Its scalar dependence through
the metric is determined by the coupled equations.

The flux is obtained by evaluating the boundary
potential on the symmetry variation:
\beq
 \begin{split}
 \bos{\sf}^{\rm ES}_\xi
 &=
 \bTh_{\rm ES}(\p;\cL_\xi\p)
 \\
 &=
 \bos{\sf}^{\rm g}_\xi
 -
 \frac{\boep_3}{8\pi}
 (\cL_\ell\P)(\cL_\xi\P).
 \end{split}
 \label{eq:ES-total-flux}
\eeq
In particular, for $\xi^a=f\ell^a$,
\beq
 \bos{\sf}^{\rm ES}_\xi
 =
 -\frac{f}{8\pi}\boep_3
 \left[
 \sigma_{AB}\sigma^{AB}
 -\frac12\vartheta^2
 +(\cL_\ell\P)^2
 \right].
 \label{eq:ES-longitudinal-flux}
\eeq
On shell, the corresponding balance law is
\beq
 \ed\bos{\sq}_\xi=\bos{\sf}^{\rm ES}_\xi,
 \qquad
 \sq_\xi[\cC_2]-\sq_\xi[\cC_1]
 =
 \sf_\xi^{\rm ES}[\cN_{12}].
 \label{eq:ES-charge-balance}
\eeq

The absence of an explicit scalar term in the
localised charge is consistent with its presence
in the flux. The scalar terms in the two potentials
cancel in \cref{eq:ES-WZ-variation}, while the
boundary potential itself retains the scalar
contribution. The evolution of the geometric
charge includes the scalar backreaction through
\cref{eq:ES-null-focusing}. If the generator belongs
to a scalar stabiliser, $\cL_\xi\P=0$, its explicit
scalar flux in \cref{eq:ES-total-flux} vanishes.

\subsection{Causal diamonds in Einstein-scalar gravity}
\label{sec:ES-causal-diamonds}

We consider a causal diamond in a smooth solution
$(g_{ab},\P)$ of ES gravity. Its geometric construction
is the same as in \cref{sec:dyn_entropy_cd}: the null
boundaries $\cN^\pm$ intersect at the spacelike
codimension-two surface $\cB$, and we retain the same
smoothness assumptions, affine parametrisation, and
normalisation of their null generators. Minimal
coupling introduces no additional geometric boundary
structure, although the scalar stress tensor affects
the boundary geometry through the coupled equations.

The scalar restrictions on $\cN^\pm$ have matching
values at $\cB$, inherited from the smooth bulk field.
We allow these data to fluctuate and transform as
$\d_\xi\P=\cL_\xi\P$. The geometric boundary symmetry
algebra and the matching conditions for its generators
therefore remain those of \cref{sec:dyn_entropy_cd}
\cite{Chandrasekaran:2019ewn}.

Preservation of the edge removes the affine
supertranslation, so that $\varsigma=0$. On $\cN^+$,
choose the constant boost
\beq
 \xi^a=\kappa_\beta v\ell^a,
 \qquad
 \kappa_\beta=-\beta>0,
 \qquad
 v|_{\cB}=0.
 \label{eq:ES-CD-boost}
\eeq
This choice does not impose scalar-profile
preservation. Within a scalar stabiliser it would
additionally require
\beq
 \cL_\xi\P
 =
 \kappa_\beta v\,\cL_\ell\P=0.
 \label{eq:ES-CD-stabiliser-condition}
\eeq
Thus, a nontrivial pure boost is excluded from that
stabiliser on a region where $\cL_\ell\P\ne0$.
The following entropy and flux relations use the
configuration space in which the scalar data may
vary.

Substituting \cref{eq:ES-CD-boost} into
\cref{eq:ES-WZ-charge} gives
\beq
 \sq_\xi[\cC]
 =
 -\frac{\kappa_\beta}{8\pi}
 \int_{\cC}(1-v\vartheta)\boep_2^+
 =
 -\frac{\kappa_\beta}{2\pi}\cS[\cC],
 \label{eq:ES-CD-entropy-charge}
\eeq
where
\beq
 \cS[\cC]
 =
 \frac14\int_{\cC}(1-v\vartheta)\boep_2^+.
 \label{eq:ES-CD-entropy}
\eeq
The dynamical entropy has the same functional form
as in \cref{eq:dyn_entropy_cd}, and
$\cS[\cB]=A(\cB)/4$.
For two cuts on the future boundary, the exact balance is
\beq
 -\frac{\kappa_\beta}{2\pi}
 \left(\cS[\cC_2]-\cS[\cC_1]\right)
 =
 \sf_\xi^{\rm ES}[\cN^+_{12}],
 \label{eq:ES-CD-exact-balance}
\eeq
with
\beq
 \sf_\xi^{\rm ES}[\cN^+_{12}]
 =
 -\frac{\kappa_\beta}{8\pi}
 \int_{\cN^+_{12}}v\,\boep_3^+
 \left[
 \sigma_{AB}\sigma^{AB}
 -\frac12\vartheta^2
 +(\cL_\ell\P)^2
 \right].
 \label{eq:ES-CD-total-flux}
\eeq
A direct check follows from
\cref{eq:ES-null-focusing}: for the uniform cuts
$\cC(v)$,
\beq
 \frac{d\cS}{dv}
 =
 \frac{v}{4}\int_{\cC(v)}\boep_2^+
 \left[
 \sigma_{AB}\sigma^{AB}
 +(\cL_\ell\P)^2
 -\frac12\vartheta^2
 \right].
 \label{eq:ES-CD-entropy-evolution}
\eeq
This reproduces both the scalar coefficient and the
sign in \cref{eq:ES-CD-exact-balance}.

For perturbations satisfying the coupled linearised
equations, with the boundary structure and generator
held fixed, variation gives the physical process law
\beq
 -\frac{\kappa_\beta}{2\pi}\Delta\d\cS
 =
 \d\sf_\xi^{\rm ES}[\cN^+_{12}].
 \label{eq:ES-CD-linear-balance}
\eeq
The scalar is already included in the total flux.
Equivalently, one may write
\begin{align}
 \sf_\xi^\P[\cN^+_{12}]
 &=
 -\int_{\cN^+_{12}}
 T^\P_{ab}\xi^a\ell^b\,\boep_3^+,
 \label{eq:ES-CD-scalar-flux}
 \\
 \Delta\d\cE_\P
 &\equiv
 \d\sf_\xi^\P[\cN^+_{12}],
 \nonumber\\
 -\frac{\kappa_\beta}{2\pi}\Delta\d\cS
 &=
 \Delta\d\cE_\P
 +
 \d\sf_\xi^{\rm g}[\cN^+_{12}].
 \label{eq:ES-CD-split-balance}
\end{align}
This is the split used in
\cref{sec:physical_process_1st_law}. When the
background scalar flux is nonzero, the variation
in $\Delta\d\cE_\P$ acts on the full integral,
including its volume form.

The conditional linearised entropy statement is
\beq
 \d\sf_\xi^{\rm ES}[\cN^+_{12}]\ge0
 \quad\Longleftrightarrow\quad
 \Delta\d\cS\le0.
 \label{eq:ES-CD-conditional-law}
\eeq
Equivalently, the condition is
$\Delta\d\cE_\P+\d\sf_\xi^{\rm g}\ge0$.
Although $T^\P_{ab}\ell^a\ell^b\ge0$, this does not
establish the required sign of the total flux or
its first variation. The expansion, shear, and
scalar terms must be combined with the signs in
\cref{eq:ES-CD-total-flux}. Indeed, since
$8\pi T^\P_{ab}\ell^a\ell^b=(\cL_\ell\P)^2$,
\cref{thm:exact_balance} and its corollaries as in \cref{sec:physical_process_1st_law} apply verbatim with
$T_{ab}=T^\P_{ab}$. The scalar raises the entropy along $\cN^+$, as shear does, and only the focusing term lowers it. In particular, \cref{cor:global_decrease} continues to hold for a regular corner.

For comparison with \cref{sec:CD_MSS}, the
spacelike comparison identity uses the full
Einstein-scalar symplectic current:
\beq
 -\frac{\kappa_\beta}{2\pi}\d\cS[\cC]
 =
 \int_{\S}\bom_{\rm ES}
       (\p;\d\p,\cL_\xi\p)
 +\int_{\cC}\xi\i\bTh_{\rm ES}(\p;\d\p).
 \label{eq:ES-CD-comparison}
\eeq
Here, $\xi^a$ is extended smoothly into the interior,
compatible with its boundary data, and the potential
$U$ is fixed. This identity follows from the coupled
linearised constraints and the WZ charge variation.
The maximally symmetric limit relevant to
\cref{sec:cd_mss} is obtained when
\beq
 \P=\P_0=\text{constant},
 \qquad
 U'(\P_0)=0,
 \qquad
 \Lambda_{\rm eff}=U(\P_0),
 \label{eq:ES-MSS-limit}
\eeq
so that $G_{ab}+\Lambda_{\rm eff}g_{ab}=0$.
The geometric comparison and the Smarr relations then
reduce to those of the preceding section and \cref{sec:cd_mss}, with the same distinction between the
CKV $\z^a$ and the boundary boost $\xi^a$.
Since $U$ is fixed,
$\d\Lambda_{\rm eff}=U'(\P_0)\d\P_0=0$ at first
order; varying a parameter of the potential would
require an extended comparison law.

More generally, on any background with
$\cL_\ell\P=0$, the scalar contribution to the
first-order null flux vanishes. For an ordinary
one-parameter family of scalar perturbations, its
leading possible contribution is quadratic:
\beq
 \left.\d^2\bos{\sf}^{\P}_\xi
 \right|_{\cL_\ell\P=0}
 =
 -\frac{\kappa_\beta v}{4\pi}
 \left(\cL_\ell\d\P\right)^2\boep_3^+.
 \label{eq:ES-scalar-second order-flux}
\eeq
It must be combined with the gravitational
contribution in the second order balance.

\section{Conclusion}
\label{sec:conclusion}

We have studied the entropy of causal diamonds through the
Wald-Zoupas charges and fluxes on their null boundaries
\cite{Wald:1999wa,Chandrasekaran:2018aop,Chandrasekaran:2019ewn}. For
the chosen boundary symplectic potential, to every cut of the future boundary is assigned an entropy consisting of the area term and an expansion correction. This functional has the same form as
the dynamical black hole entropy \cite{Hollands:2024vbe,Visser:2024pwz}. It reduces to $A(\cB)/4$ at the edge, in agreement with the equilibrium analysis of \cite{Jacobson:2018ahi}, and its evolution along the boundary is governed by a balance law.

In Einstein gravity, this balance is exact. The expansion correction
removes the first derivative of the area, so the entropy responds only
to the focusing of the generators. The Raychaudhuri equation then gives
the entropy difference between any two cuts, as the
integral of a signed flux density. The entropy therefore decreases, both
between two cuts and under every local deformation of a cut, if and only
if this flux is non-negative. In four or
higher dimensions, and the entropy
falls from $A(\cB)/4$ at the edge to zero at a regular corner $p^+$, irrespective of the matter content. The same balance explains why the entropy of a finite spherical diamond decreases while the perturbed entropy of a
Rindler or de Sitter horizon increases \cite{Zhao:2025zny}. These two
differ in the background expansion and the resulting gravitational flux,
not in the sign of the temperature.

At linear order, the covariant phase space identity gives the physical
process first law for any solution of general relativity with the charge-entropy
relation, and the same equivalence between non-negative flux and entropy
decrease holds. Unlike the stationary case
\cite{Hollands:2024vbe,Visser:2024pwz}, the boundary boost is generally
not a Killing field, so the gravitational flux contributes already at
first order in vacuum. At second order, the balance depends on the
second order metric perturbation and on the variation of the symplectic
current as a background dependent bilinear map. We showed with an
explicit Minkowski family that the latter term cannot be dropped, since
the boost is not Killing even in flat space. The modified canonical
energy of \cite{Hollands:2012sf,Hollands:2024vbe} is recovered only
when the boundary generator is Killing.

At the semiclassical level, the assumed improved quantum null energy
condition \cite{Ben-Dayan:2023inz,BenDayan:2026inec} leads to a balance
for the sum of the gravitational entropy and an expansion dependent
matter entropy. This parallels the generalised second law obtained from
the quantum null energy condition for stationary black holes
\cite{Wall:2011gsl,Hollands:2024vbe,Bousso:2015wca,Balakrishnan:2017bjg}.
The combined entropy decreases when the gravitational flux exceeds the
weighted remainder of the quantum inequality; when the inequality is
saturated, positivity of the gravitational flux alone suffices. The
result remains conditional on the applicability of the INEC to the
diamond boundary and on its perturbative implementation.

For spherical diamonds in maximally symmetric spacetimes, the boundary
boost and the diamond-preserving conformal Killing field differ away
from the edge. Keeping both, we obtained comparison and Smarr identities
on generic slices, which reduce to the equilibrium relations of Jacobson
and Visser \cite{Jacobson:2018ahi} on the maximal slice. In minimally
coupled Einstein-scalar gravity \cite{Faulkner:2010fh,Anabalon:2023efw},
the entropy keeps its geometric form, the scalar enters the symplectic
flux, and its energy is counted once in the balance law.

Several avenues are still to be explored. First, one can aim to describe the nonstationary diamonds and the classes of perturbations for which the needed sign of the flux is guaranteed by geometric or dynamical criteria. Second, it is important to understand how the entropy depends on the chosen boundary symplectic potential and boundary structure, an issue already apparent in the ambiguities of Noether charge entropy \cite{Iyer:1994ys}, in anomalies in null boundary charges \cite{Chandrasekaran:2020wwn, Odak:2022ndm, Odak:2023pga}, and in related entropy balance laws on null hypersurfaces \cite{Rignon-Bret:2023fjq,Shajiee:2026coz}. Generalising the construction to higher derivative theories and to non-minimal couplings would link it to the linearised and effective field theory versions of the second law \cite{Wall:2015raa, Hollands:2022fkn}. From the quantum perspective, providing a field theoretic account of the expansion dependent matter entropy, and relating the dynamical entropy on later cuts to entanglement entropy via entanglement equilibrium \cite{Jacobson:2015hqa} or holographic entanglement \cite{Ryu:2006bv,Hubeny:2007xt,Faulkner:2013ica}, would help determine when the conditional balance becomes a generalised second law for causal diamonds.


\begin{acknowledgments}
AC and PKD gratefully acknowledge financial support from the DST-INSPIRE Fellowship (IF200258, IF200253) of the Department of Science and Technology, Government of India. MM’s work is funded through DST-FIST Grants No. SR/FST/LS-I/2020/621 and No. SR/FST/PSI-225/2016.
\end{acknowledgments}

\appendix

\section{Thermodynamics of spherical causal diamonds in maximally symmetric spacetimes}
\label{sec:cd_mss}

In this appendix, we briefly review the geometry and
equilibrium thermodynamics of spherical causal diamonds
in Einstein gravity with a cosmological constant,
following
\cite{Jacobson:2015hqa,Jacobson:2018ahi,JacobsonVisser:2019negative}.
We use the coordinates and conformal Killing normalisation of \cite{Jacobson:2018ahi}, writing $A$, $K$, and $V$ for the edge area, its outward mean curvature, and the proper volume of the maximal slice.
Our $K$ is the quantity denoted by $k$ in that reference. We display $G$ and $\hbar$ explicitly; setting $G=\hbar=1$ recovers the units used in the main text.

For positive cosmological constant, the static coordinates of the maximally symmetric background are given by \cref{eq:cd_mss_static_metric}
\beq
 ds^2 = -\left(1-\frac{r^2}{L^2}\right)dt^2
 + \frac{dr^2}{\left(1-\frac{r^2}{L^2}\right)}
 + r^2d\tom_{d-2}^2,
 \qquad
 \Lambda=\frac{(d-1)(d-2)}{2L^2}. \nonumber
\eeq
The expressions for AdS follow by $L\rightarrow iL$, while the flat limit is $L\rightarrow\infty$. Let the diamond be the domain of dependence of the
ball $\S=\{t=0,\,0\le r\le R\}$, with $R<L$ in de Sitter space. This is the maximal slice, denoted by $\S_0$ in the main text, and its edge is
$\cB=\partial\S$. The corners lie at $r=0$, $t=\pm R_*$, where $R=L\tanh(R_*/L)$.

The diamond-preserving conformal Killing vector is (\cref{eq:cd_mss_ckv_static})
\beq
 \z
 =
 \frac{\kappa L^2}{R}
 \Bigg[  \left(
 1- \frac{\sqrt{1-R^2/L^2}}{\sqrt{1-r^2/L^2}} \cosh\frac{t}{L} \right)\partial_t
 - \frac{r}{L}
 \sqrt{ \left(1-\frac{R^2}{L^2}\right) \left(1-\frac{r^2}{L^2}\right)}
 \sinh\frac{t}{L}\,\partial_r \Bigg]. \nonumber
\eeq
It is future directed and timelike in the interior,
tangent to the null boundaries, and vanishes at the
edge and the corners. The conformal Killing equation is
\beq
 \cL_\z g_{ab}=2\alpha g_{ab},
 \qquad
 \alpha=\frac{1}{d}\nabla_a\z^a.
 \label{eq:cd_ckv}
\eeq
Following \cite{Jacobson:2018ahi}, surface gravity is
defined by the gradient of the squared norm on the
conformal Killing horizons:
\beq
 \nabla_a(\z^b\z_b)\big|_{\cN^\pm}
 =
 \mp2\kappa\z_a,
 \qquad
 \kappa>0.
 \label{eq:cd_mss_surface_gravity}
\eeq
The explicit vector in \cite{Jacobson:2018ahi} has
$\kappa=1$; the factor $\kappa$ in
\cref{eq:cd_mss_ckv_static} allows a constant rescaling.

Let $u^a$ be the future directed unit normal to the maximal slice. Here, for simplicity, we refer to the maximal slice $\S_0$ in the main text as $\S$. 
On this slice,
\beq
 u^a\partial_a
 =
 \left(1-\frac{r^2}{L^2}\right)^{-1/2}\partial_t,
 \qquad
 \alpha\big|_{\S}=0.
\eeq
Thus $\z^a$ is Killing on $\pa\S$,
although it does not vanish in the interior of
this slice. The edge has outward mean curvature
\beq
 K
 =
 \frac{d-2}{R}\sqrt{1-\frac{R^2}{L^2}}.
 \label{eq:cd_mss_edge_curvature}
\eeq
Here $K$ is the trace of the extrinsic curvature of
$\partial\S$ within $\S$. A useful consequence of
the conformal Killing geometry is
\beq
 \dot{\alpha}\big|_{\S}
 \equiv
 u^a\nabla_a\alpha\big|_{\S}
 =
 -\frac{\kappa}{R}\sqrt{1-\frac{R^2}{L^2}}
 =
 -\frac{\kappa K}{d-2},
 \label{alpha_dot}
\eeq
which is constant over $\S$
\cite[section~2]{Jacobson:2018ahi}.
The area and proper volume are
\beq
 \begin{split}
 A&=\int_{\partial\S}dA
   =\tom_{d-2}R^{d-2},
 \\
 V&=\int_{\S}dV
   =\tom_{d-2}\int_0^R
   \frac{r^{d-2}\,dr}{\sqrt{1-r^2/L^2}},
 \end{split}
 \label{eq:cd_area_volume}
\eeq
where $\tom_{d-2}$ is the area of the unit
$(d-2)$-sphere. The thermodynamic volume is the
conformal Killing weighted volume
\beq
 V_\z
 \equiv
 \int_{\S}|\z|\,dV,
 \qquad
 |\z|=\sqrt{-\z^a\z_a}.
 \label{eq:cd_thermodynamic_volume}
\eeq
Since $\z^a=|\z|u^a$ on $\S$, this also equals
$\int_{\S}\z\i\boldsymbol{\epsilon}$, with
$dV=u\i\boldsymbol{\epsilon}$ chosen positive.
Explicitly,
\beq
 V_\z
 =
 \frac{\kappa L^2}{R}
 \left[
 \frac{\tom_{d-2}R^{d-1}}{d-1}
 -
 \sqrt{1-\frac{R^2}{L^2}}\,V
 \right].
 \label{eq:cd_mss_thermodynamic_volume}
\eeq
The quantities $V$ and $V_\z$ are distinct; the
latter depends on the normalisation of $\z$
\cite[section~3.1]{Jacobson:2018ahi}.

\subsection*{Smarr relation and comparison first law}
\label{sec:cdthermodynamics}

The integrated on-shell Noether identity is
\beq
 \int_{\S}\boj_\z
 =
 \int_{\partial\S}\boq_\z.
 \label{smarr_identity}
\eeq
With the edge oriented outward from the ball, its two sides evaluate to
\cite[section~3.1]{Jacobson:2018ahi}
\begin{align}
 \int_{\partial\S}\boq_\z
 &=
 -\frac{\kappa A}{8\pi G},
 \label{eq:cd_noether_edge}
 \\
 \int_{\S}\boj_\z
 &=
 -\frac{(d-1)\kappa K V+2\Lambda V_\z}
 {(d-2)8\pi G}.
 \label{eq:cd_bulk_noether}
\end{align}
Consequently, the Smarr relation is
\beq
 (d-2)\kappa A
 =
 (d-1)\kappa K V+2\Lambda V_\z.
 \label{eq:cd_smarr}
\eeq
Using \cref{alpha_dot}, it can equivalently be written as
\beq
 (d-2)\kappa A
 =
 2\Lambda V_\z
 -
 (d-1)(d-2)\dot{\alpha}\,V,
 \label{eq:cd_smarr_alphadot}
\eeq
where $\dot{\alpha}$ is evaluated on $\S$. At the edge, any regular boundary correction proportional to $\z^a$ vanishes, so the Noether charge expression agrees there with the equilibrium limit of the entropy construction in the main text.

The comparison first law concerns linearised variations away from this maximally symmetric background, with $\z^a$ and the hypersurface $\S$ held fixed. The perturbations satisfy the linearised gravitational constraints; the perturbed geometry need not remain maximally symmetric \cite[section~3.2]{Jacobson:2018ahi}.
Following that reference, $\tilde m$ denotes matter other than the cosmological constant. Its conformal Hamiltonian variation is
\beq
 \d H_\z^{\tilde m} =
 \int_{\S} \d\left(T_a{}^b\right)^{\tilde m} \z^a u_b\,dV.
 \label{eq:cd_matter_hamiltonian}
\eeq
The reference stress tensor of this matter vanishes. The cosmological and gravitational contributions are
\begin{align}
 \d H_\z^\Lambda
 &=
 \frac{V_\z}{8\pi G}\,\d\Lambda,
 \label{eq:cd_lambda_hamiltonian}
 \\
 \d H_\z^{\rm g}
 &=
 -\frac{\kappa K}{8\pi G}\,\d V.
 \label{eq:cd_grav_hamiltonian}
\end{align}
The gravitational term is present because $\z^a$ is conformal Killing rather than Killing. All coefficients in these first-order expressions are evaluated on the reference geometry.
Combining the three contributions gives
\beq
 \d H_\z^{\tilde m} + \frac{V_\z}{8\pi G}\,\d\Lambda
 = \frac{\kappa K}{8\pi G}\,\d V - \frac{\kappa}{8\pi G}\,\d A.
 \label{eq:cd_equilibrium_firstlaw}
\eeq
This is the comparison first law given in \cite{Jacobson:2018ahi}. In terms of the total conformal Hamiltonian variation,
\beq
 \d H_\z
 \equiv \d H_\z^{\tilde m} + \d H_\z^\Lambda + \d H_\z^{\rm g}
 = -\frac{\kappa}{8\pi G}\,\d A.
 \label{eq:cd_total_hamiltonian_firstlaw}
\eeq
The thermodynamic identification is
\beq
 S_{\rm BH}=\frac{A}{4\hbar G},
 \qquad
 T=-\frac{\hbar\kappa}{2\pi},
 \label{eq:cd_equilibrium_TS}
\eeq
so that $\d H_\z=T\,\d S_{\rm BH}$
\cite{Jacobson:2018ahi,JacobsonVisser:2019negative}.
Keeping the matter Hamiltonian separate, the first
law instead reads
\beq
 \d H_\z^{\tilde m}
 =
 T\,\d S_{\rm BH}
 +
 \frac{\kappa K}{8\pi G}\,\d V
 -
 \frac{V_\z}{8\pi G}\,\d\Lambda.
 \label{eq:cd_thermodynamic_firstlaw}
\eeq
The volume-work term must therefore be retained
unless the variation fixes $V$. In particular,
\beq
 \left.\d H_\z^{\tilde m}\right|_{V,\Lambda}
 =
 -\frac{\kappa}{8\pi G}
 \left.\d A\right|_{V,\Lambda}.
 \label{eq:cd_fixed_volume_firstlaw}
\eeq
Positive matter conformal energy thus decreases
the edge area at fixed volume and cosmological
constant. This is the comparison-law basis for
the negative-temperature interpretation. It
compares nearby configurations and does not
determine the ordering of entropy on successive
future null cuts.

The cosmological term may also be expressed through
$p=-\Lambda/(8\pi G)$:
\beq
 \d H_\z^{\rm g}
 +
 \d H_\z^{\tilde m}
 =
 T\,\d S_{\rm BH}
 +
 V_\z\,\d p.
 \label{eq:cd_pressure_firstlaw}
\eeq
This is the enthalpy form discussed in
\cite[section~4.1]{Jacobson:2018ahi}; the volume
conjugate to $p$ is $V_\z$. Two limits illustrate these relations.
In flat spacetime,
\beq
 \begin{split}
 \z_{\rm flat}
 &=
 \frac{\kappa}{2R}
 \left[
 (R^2-t^2-r^2)\partial_t
 -
 2tr\,\partial_r
 \right],
 \\
 K&=\frac{d-2}{R},
 \qquad
 V_\z=\frac{\kappa R}{d+1}V.
 \end{split}
 \label{eq:cd_mss_flat_limit}
\eeq
The Smarr formula reduces to
$(d-2)A=(d-1)KV$.
For the full de Sitter static patch,
$R\rightarrow L$, one instead has
$\z\rightarrow\kappa L\partial_t$ and
$K\rightarrow0$. The generator becomes Killing,
the gravitational volume term vanishes, and
the first law is
\beq
 \d H_\z^{\tilde m}
 =
 -\frac{\kappa}{8\pi G}\,\d A
 -
 \frac{V_\z}{8\pi G}\,\d\Lambda.
 \label{eq:cd_mss_static_patch_firstlaw}
\eeq
With the usual static-time normalisation
$\z=\partial_t$, the surface gravity is
$\kappa=1/L$.

Finally, the semiclassical interpretation connects
these relations to entanglement equilibrium
\cite{Jacobson:2015hqa,Jacobson:2018ahi}.
For conformal matter, variations of the vacuum
state on the fixed background obey
\beq
 \d\langle H_\z^{\tilde m}\rangle
 =
 T_{\rm H}\,\d S_{\tilde m},
 \qquad
 T_{\rm H}=\frac{\hbar\kappa}{2\pi}=-T>0.
 \label{eq:cd_matter_entanglement_firstlaw}
\eeq
Substitution into the gravitational first law yields
\beq
 -\frac{\kappa K}{8\pi G}\,\d V
 +
 \frac{V_\z}{8\pi G}\,\d\Lambda
 =
 T\,\d S_{\rm gen},
 \qquad
 S_{\rm gen}=S_{\rm BH}+S_{\tilde m}.
 \label{eq:cd_equilibrium_generalised_entropy}
\eeq
Thus
\beq
 \left.\d S_{\rm gen}\right|_{V,\Lambda}=0.
 \label{eq:cd_entanglement_equilibrium}
\eeq
For nonconformal matter, the corresponding
small-diamond argument requires the additional
entropy assumptions described in
\cite{Jacobson:2015hqa,Jacobson:2018ahi}.
The positive matter temperature $T_{\rm H}$
and the negative gravitational temperature $T$
enter different first laws and are compatible
with this relation.

Equilibrium may also be expressed through the
free conformal energy
\beq
 F=H_\z-T S_{\rm BH},
 \qquad
 \left.\d F\right|_T=0,
 \label{eq:cd_free_conformal_energy}
\eeq
with the matter Hamiltonian replaced by its
expectation value in the semiclassical setting
\cite{Jacobson:2018ahi,JacobsonVisser:2019negative}.
These are first-order stationarity statements
about variations away from the reference diamond.
Establishing an entropy maximum requires further
information beyond first order. The physical
process balance between null cuts is addressed
separately in the main text.

\section{Dynamical entropy from Raychaudhuri equation}
\label{sec:dyn_entropy_raychaudhuri}

This appendix contains an independent check of the flux balance in Einstein
gravity using the null Raychaudhuri equation, following the approach of
\cite{Visser:2024pwz} for black holes. We keep the cosmological
constant $\Lambda$ fixed and use
\beq
 G_{ab}+\Lambda g_{ab}=8\pi T_{ab}.
\eeq
The null generators are affinely parametrised,
$\ell^a\nabla_a\ell^b=0$, and hypersurface orthogonal.
We use the curvature convention
$[\nabla_a,\nabla_b]V^c=\cR^c{}_{dab}V^d$.

On a smooth portion of $\cN^+$, let $v$ increase towards the
future, with $v=0$ at the edge, and choose coordinates on the
cross-sections transported along $\ell^a=(\partial_v)^a$.
The volume forms satisfy
\beq
 \boep^+_{d-1}=\ed v\wedge\boep^+_{d-2},
 \qquad
 \ell\i\boep^+_{d-1}=\boep^+_{d-2},
 \qquad
 \underline{\epsilon_a}=\ell_a\boep^+_{d-1},
\eeq
where $\boep^+_{d-2}$ is the positive area form on each
cross-section. Thus
$\partial\cN^+_{12}=\cC_2-\cC_1$ for $v_2>v_1$.
The boundary generator is
$\xi^a=\kappa_\beta v\ell^a$, with constant $\kappa_\beta>0$.

Let
\begin{align}
 q_{ab}&=g_{ab}+\ell_a n_b+n_a\ell_b,
 \qquad \ell^a n_a=-1,
 \nonumber\\
 K_{AB}&=\frac12\cL_\ell q_{AB}
       =\frac{\vartheta}{d-2}q_{AB}+\sigma_{AB},
 \qquad q^{AB}\sigma_{AB}=0.
 \label{eq:ray-optical-decomposition}
\end{align}
The twist vanishes because $\ell^a$ is normal to a null hypersurface. Using an affine extension of the generators, the Ricci identity gives
\begin{align}
 \cL_\ell\vartheta
 &=\nabla_a(\ell^b\nabla_b\ell^a)
   -(\nabla_a\ell^b)(\nabla_b\ell^a)
   -\cR_{ab}\ell^a\ell^b
 \nonumber\\
 &=-K_{AB}K^{AB}-\cR_{ab}\ell^a\ell^b
 \nonumber\\
 &=-\frac{\vartheta^2}{d-2}
   -\sigma_{AB}\sigma^{AB}
   -8\pi T_{ab}\ell^a\ell^b.
 \label{eq:ray-focusing}
\end{align}
The cosmological term drops out because $\ell^a$ is null.
Writing $\boep^+_{d-2}=\sfa(v,x)\ed^{d-2}x$, we have
\beq
 \partial_v\sfa=\sfa\vartheta.
\eeq
Combining this relation with \cref{eq:ray-focusing} gives
\beq
 \partial_v^2\sfa
 =\sfa\left[
    \frac{d-3}{d-2}\vartheta^2
    -\sigma_{AB}\sigma^{AB}
    -8\pi T_{ab}\ell^a\ell^b
   \right].
 \label{eq:ray-area-acceleration}
\eeq
For the constant boost normalisation used in the main text,
the charge entropy is
\begin{align}
 \cS(v)
 &=\frac14\int_{\cC(v)}\boep^+_{d-2}(1-v\vartheta)
 \nonumber\\
 &=\frac14\int\ed^{d-2}x\,
       \left(\sfa-v\partial_v\sfa\right).
 \label{eq:ray-entropy}
\end{align}
Its derivative therefore satisfies
\begin{align}
 \frac{\ed\cS}{\ed v}
 &=-\frac{v}{4}\int\ed^{d-2}x\,
                         \partial_v^2\sfa
 \nonumber\\
 &=\frac{v}{4}\int_{\cC(v)}\boep^+_{d-2}
 \left[
  \sigma_{AB}\sigma^{AB}
  -\frac{d-3}{d-2}\vartheta^2
  +8\pi T_{ab}\ell^a\ell^b
 \right].
 \label{eq:ray-entropy-evolution}
\end{align}
In four dimensions the expansion coefficient is $1/2$,
as in the Wald-Zoupas flux in the main text.
With the above orientation, the signed matter flux form is
\beq
 \bos{e}
 =-\xi^a T_a{}^b\underline{\epsilon_b}
 =-\kappa_\beta v\,T_{ab}\ell^a\ell^b\boep^+_{d-1}.
\eeq
The matter contribution and the gravitational flux on a strip
are consequently
\begin{align}
 \Delta\cE
 &=-\kappa_\beta
   \int_{v_1}^{v_2}\ed v\,v
   \int_{\cC(v)}\boep^+_{d-2}\,
                           T_{ab}\ell^a\ell^b,
 \label{eq:ray-signed-matter}\\
 \sf_\xi[\cN^+_{12}]
 &=-\frac{\kappa_\beta}{8\pi}
   \int_{v_1}^{v_2}\ed v\,v
   \int_{\cC(v)}\boep^+_{d-2}
   \left[
    \sigma_{AB}\sigma^{AB}
    -\frac{d-3}{d-2}\vartheta^2
   \right].
 \label{eq:ray-gravitational-flux}
\end{align}
In particular, $T_{ab}\ell^a\ell^b\geq0$ gives
$\Delta\cE\leq0$ for $v_2>v_1\geq0$.
Multiplying \cref{eq:ray-entropy-evolution} by
$-\kappa_\beta/(2\pi)$ and integrating yields
\beq
 -\frac{\kappa_\beta}{2\pi}
       \left(\cS[\cC_2]-\cS[\cC_1]\right)
 =\Delta\cE+\sf_\xi[\cN^+_{12}].
 \label{eq:ray-exact-balance}
\eeq
This verifies both the coefficient of the expansion term
and the relative signs in the charge balance. It is the
uniform-cut case of \cref{thm:exact_balance}. Varying
\cref{eq:ray-exact-balance} gives
\beq
 -\frac{\kappa_\beta}{2\pi}\Delta\d\cS
 =\Delta\d\cE+\d\sf_\xi[\cN^+_{12}],
 \label{eq:ray-linear-balance}
\eeq
which is in agreement with \cref{eqn:pp1st}. Here, the perturbed matter energy flux is given by
\beq
 \Delta\d\cE
 =-\kappa_\beta\int_{v_1}^{v_2}\ed v\,v
   \int_{\cC(v)}\boep^+_{d-2}\,
                    \d T_{ab}\ell^a\ell^b,
\eeq
with the background area measure. The variation of
\cref{eq:ray-gravitational-flux} acts on both the area measure
and the full metric contractions in its integrand.
For a non-vacuum background, the same balance follows by
varying the full matter integral in \cref{eq:ray-signed-matter},
including its area measure.
Writing $h_{AB}=\d q_{AB}$ and $h=q^{AB}h_{AB}$, we have
\beq
 \d\boep^+_{d-2}=\frac12h\boep^+_{d-2},
 \qquad
 \d\vartheta=\frac12\cL_\ell h.
\eeq
Consequently, the perturbed dynamical entropy is
\beq
 \d\cS
 =\frac18\int_{\cC(v)}\boep^+_{d-2}
       \left[h(1-v\vartheta)-v\cL_\ell h\right].
 \label{eq:ray-linear-entropy}
\eeq
The linearised entropy criterion is therefore
\beq
 \Delta\d\cE+\d\sf_\xi[\cN^+_{12}]\geq0
 \quad\Longleftrightarrow\quad
 \Delta\d\cS\leq0,
 \label{eq:ray-conditional-law}
\eeq
in agreement with \cref{cor:linearised_second_law}.
The sign condition concerns the combined signed flux; it
does not follow from the null energy condition alone on a
nonstationary background.

For comparison, if the background boundary is stationary,
$\vartheta=\sigma_{AB}=0$, then $\d\sf_\xi=0$ and
\beq
 \frac{\ed}{\ed v}\d\cS
 =2\pi v\int_{\cC(v)}\boep^+_{d-2}\,
                  \d T_{ab}\ell^a\ell^b\geq0
 \label{eq:ray-stationary-limit}
\eeq
under the linearised null energy condition. This agrees with
the stationary background result of
\cite{Hollands:2024vbe,Visser:2024pwz}. In our convention the matter contribution is then non-positive, so this result is consistent with \cref{eq:ray-linear-balance}. On a nonstationary background, the variation of the gravitational flux must also be retained.

Finally, with the normalisation used for the scalar field in \cref{sec:EH-scalar_gravity},
\beq
 8\pi T^{\Phi}_{ab}\ell^a\ell^b=(\cL_\ell\Phi)^2.
\eeq
Substitution into \cref{eq:ray-entropy-evolution} reproduces
the scalar contribution to the entropy evolution. The sum of the gravitational flux and this scalar matter flux is the full ES symplectic flux, so the scalar contribution is counted only once in the balance law.

\section{Examples}
\label{app:wz-examples}
In this appendix, we compute the WZ charge and flux for some specific examples in four dimensional spacetime. Recall, on $\cN^+$,
\begin{equation}
 \ell^a=(\partial_v)^a,\qquad
 \ell^a n_a=-1,\qquad
 \xi^a=\kappa_\beta v\ell^a,
\end{equation}
with constant $\kappa_\beta>0$. The localised charge form is
\begin{align}
 \bos{\sq}_\xi
 &=\frac{\boep^+_2}{8\pi}
   \left(n_b\ell^a\nabla_a\xi^b
         -\vartheta\xi^a n_a\right)
 \nonumber\\
 &=-\frac{\kappa_\beta}{8\pi}
     \boep^+_2(1-v\vartheta).
 \label{eq:examples-charge}
\end{align}
Here we have used affine parametrisation to obtain
$n_b\ell^a\nabla_a\xi^b=-\kappa_\beta$.
The boundary symplectic potential is, with
$\boep^+_3=\ed v\wedge\boep^+_2$,
\beq
 \bTh(g;h)
 =-\frac{\boep^+_3}{16\pi}
       \left(h_i{}^j\chi_j{}^i-h\vartheta\right).
\eeq
The corresponding flux is
\beq
 \bos{\sf}_\xi
 =\bTh(g;\cL_\xi g)
 =-\frac{\kappa_\beta v}{8\pi}\boep^+_3
       \left(\sigma_{AB}\sigma^{AB}
             -\frac12\vartheta^2\right).
 \label{eq:examples-flux}
\eeq
These are the only geometrical ingredients needed below.

\subsection{A finite spherical diamond in Minkowski spacetime}
\label{app:wz-minkowski}
For the first example, we consider a finite causal diamond in Minkowski spacetime. Consider the diamond whose maximal spatial section is
$t=0$, $r\leq R$, in
\begin{equation}
 \ed s^2=-\ed t^2+\ed r^2+r^2\ed\tom_2^2.
\end{equation}
Introduce double-null coordinates $(u,v,x^A)$ adapted to the future boundary, leaving the angular coordinates unchanged:
\begin{equation}
 t=v+\frac{u}{2},\qquad
 r=R-v+\frac{u}{2},
\end{equation}
so that
\begin{equation}
 \ed s^2=-2\ed u\ed v+
       \left(R-v+\frac{u}{2}\right)^2\ed\tom_2^2.
\end{equation}
Equivalently, $u=t+r-R$ and $v=(t-r+R)/2$. The future boundary $\cN^+$ is given by $u=0$, where $v=t=R-r$ is an affine parameter along its generators. The (bifurcation) edge $\cB$ lies at $u=v=0$, while the future corner is
approached as $v\to R$. On the null boundaries,
\beq
 \ell^a=(\partial_v)^a,\qquad
 n^a=(\partial_u)^a,\qquad
 q_{AB}=(R-v)^2\gamma_{AB},
\eeq
where $\gamma_{AB}$ is the metric on the unit two-sphere.
The coordinate $v$ is affine, and direct differentiation
of the induced metric gives
\beq
 K_{AB}=-(R-v)\gamma_{AB},\qquad
 \vartheta=-\frac{2}{R-v},\qquad
 \sigma_{AB}=0,\qquad
 A(v)=4\pi(R-v)^2.
\eeq
The boundary boost can be written as
\begin{equation}
 \xi=\kappa_\beta
       \left(v\partial_v-u\partial_u\right).
\end{equation}
It preserves the required boundary structure, although
it is not a Killing field of the finite diamond metric.
Substitution into \cref{eq:examples-charge} yields
\begin{align}
 \sq_\xi[\cC(v)]
 &=-\frac{\kappa_\beta}{8\pi}
     4\pi(R-v)^2
        \left(1+\frac{2v}{R-v}\right)
 \nonumber\\
 &=-\frac{\kappa_\beta}{2}(R^2-v^2).
\end{align}
The associated entropy is therefore
\begin{equation}
 \cS(v)=\pi(R^2-v^2).
 \label{eq:examples-flat-entropy}
\end{equation}
In particular, the entropy of the maximal slice is $\cS_0 = A (\cB)/4 =\pi R^2$. The gravitational flux follows independently from \cref{eq:examples-flux}. Using
$\boep^+_3=\ed v\wedge\boep^+_2$,
\begin{align}
 \sf_\xi[\cN^+_{12}]
 &=\frac{\kappa_\beta}{16\pi}
     \int_{v_1}^{v_2}\ed v\,v
     \int_{\cC(v)}\boep^+_2\,\vartheta^2
 \nonumber\\
 &=\kappa_\beta\int_{v_1}^{v_2}v\,\ed v
  =\frac{\kappa_\beta}{2}(v_2^2-v_1^2).
 \label{eq:examples-flat-flux}
\end{align}
Thus, we have
\begin{equation}
 \sq_\xi[\cC_2]-\sq_\xi[\cC_1]
 =\sf_\xi[\cN^+_{12}]
 =-\frac{\kappa_\beta}{2\pi}
       \left(\cS(v_2)-\cS(v_1)\right).
\end{equation}
This is a direct verification of the vacuum charge balance. The flux is nonzero in flat spacetime because the boundary has nonzero expansion; it is not a flux of gravitational radiation. Although the expansion diverges towards the corner $p^+$, which is a regular point in the spacetime, the charge and integrated flux have finite limits\footnote{It can be shown by using the Riemann normal coordinates in a small neighbourhood of the corner as discussed in section $3.2$ of \cite{Chandrasekaran:2019ewn}.}
\begin{equation}
 \lim_{v\to R}\cS(v)=0,\qquad
 \lim_{v\to R}\sf_\xi[\cN^+_{0v}]
       =\frac{\kappa_\beta R^2}{2}.
\end{equation}
These limits are the four-dimensional case of
\cref{cor:global_decrease}.

\subsection{A thin null shell through the Minkowski diamond}
\label{app:wz-shell}
To illustrate \cref{cor:linearised_second_law} and the criterion
\cref{eq:linearised_criterion}, perturb the diamond considered in the preceeding example by a spherical shell of null dust that
crosses $\cN^+$ at $v=v_p\in(0,R)$, with
\beq
 \delta T_{ab}\ell^a\ell^b=\mu\,\delta(v-v_p),
 \qquad \mu>0,
\eeq
so that the null energy condition holds. Let
$\sfa=(R-v)^2=r^2$ denote the area density per unit solid angle, and
write $r_p=R-v_p$. We fix the data at the edge,
$\delta\sfa=\partial_v\delta\sfa=0$ at $v=0$. Linearising
\cref{eq:ray-area-acceleration} gives
\beq
 \partial_v^2\delta\sfa
 -\frac{\partial_v\sfa}{\sfa}\,\partial_v\delta\sfa
 +\frac{(\partial_v\sfa)^2}{2\sfa^2}\,\delta\sfa
 =-8\pi\,\sfa\,\delta T_{ab}\ell^a\ell^b .
\eeq
The homogeneous solutions are $r$ and $r^2$. The shell produces a jump
$-8\pi\mu r_p^2$ in $\partial_v\delta\sfa$, so that for $v>v_p$
\beq
 \delta\sfa=8\pi\mu\,r_p\,r\,(r-r_p)\le0 ;
\eeq
the shell focuses the generators. With
$\delta\cS=\pi(\delta\sfa-v\,\partial_v\delta\sfa)$, and with the signed
fluxes measured from $\cB$, one finds
\begin{align}
 \delta\cS(v)&=8\pi^2\mu\,r_p\left(Rv_p-v^2\right)\Theta(v-v_p),
 \label{eq:shell_entropy}\\
 \Delta\delta\cE\big|_0^v&=-4\pi\kappa_\beta\,\mu\,v_p r_p^2\,\Theta(v-v_p)
 \;(\le0),
 \qquad
 \delta\sf_\xi\big|_0^v=4\pi\kappa_\beta\,\mu\,r_p\left(v^2-v_p^2\right)
 \Theta(v-v_p)\;(\ge0),
 \label{eq:shell_fluxes}\\
 \Delta\delta\cE+\delta\sf_\xi\big|_0^v
 &=4\pi\kappa_\beta\,\mu\,r_p\left(v^2-Rv_p\right)\Theta(v-v_p).
\end{align}
These satisfy \cref{eqn:pp1st} identically. As the matter obeying the
null energy condition crosses $\cN^+$, the entropy jumps up by
$8\pi^2\mu v_p r_p^2$.
Afterwards, the additional focusing drives it down. The combined
flux from $\cB$ is non-negative, and the net entropy change
non-positive, exactly when $v\ge\sqrt{Rv_p}$.

The hypothesis of \cref{cor:linearised_second_law} is therefore
non-trivial. It fails across the shell, where the sphere-integrated
density $-4\pi\kappa_\beta v_p r_p^2\mu\,\delta(v-v_p)$ is negative, and
holds after it, where the density $8\pi\kappa_\beta\mu r_p v$ is
positive. At the tip, the linearised value
$\delta\cS(R)=-8\pi^2\mu Rr_p^2$ does not vanish. This reflects the
first-order displacement of the tip, near which perturbation theory is
not uniform, rather than a failure of \cref{cor:global_decrease}.
\subsection{The Rindler wedge} \label{app:wz-rindler}
In Minkowski spacetime with inertial coordinates $(t,x,y,z)$, define the null coordinates
\begin{equation}
 u=\frac{t-x}{\sqrt{2}},\qquad
 v=\frac{t+x}{\sqrt{2}}.
\end{equation}
 On the future horizon $u=0$, $v>0$ is an affine parameter along the
generators, with $v=0$ at the bifurcation surface. In terms of $(u,v,y,z)$, the line element takes the form
\begin{equation}
 \ed s^2=-2\,\ed u\,\ed v+\ed y^2+\ed z^2.
\end{equation}
The right wedge has $u<0$, $v>0$, and its future horizon is $u=0$. The boost Killing field is
\begin{equation}
 \xi=\kappa_\beta \left(x\partial_t + t\partial_x\right)
    =\kappa_\beta \left(v\partial_v-u\partial_u\right).
\end{equation}
On the horizon choose $\ell=\partial_v$ and
$n=\partial_u$. Since both expansion and shear vanish,
\begin{equation}
 q_{AB}\ed x^A\ed x^B=\ed y^2+\ed z^2,
 \qquad K_{AB}=0.
\end{equation}
Let $A$ denote a regulated transverse area. Then, the charge and flux are given by
\begin{equation}
 \sq_\xi=-\frac{\kappa_\beta}{8\pi} A,
 \qquad
 \cS_0=\frac{A}{4},
 \qquad
 \sf_\xi=0.
\end{equation}
The identity \eqref{eqn:pp1st} therefore gives
\begin{align}
 -\frac{\kappa_\beta}{2\pi}\Delta\delta\cS
 &=\Delta\delta\cE=-\delta \cE_{out},\\
 \frac{\kappa_\beta}{2\pi}\Delta\delta\cS
 &=\int_{v_1}^{v_2}\ed v
       \int_{\cC(v)}\boep^+_2\,
                  \delta T_{ab}\xi^a\ell^b.
 \label{eq:examples-rindler-ppfl}
\end{align}
Here, $\cE_{out}$ denotes the matter flux going outward from the Rindler wedge. This relation holds for arbitrary horizon cross-sections
when the dynamical entropy is used. At first order,
\begin{equation}
 \delta\cS(v)=\frac14 \big(1-v\pa_v\big)\d A(v).
\end{equation}
If the endpoint correction $v\partial_v\delta A$ vanishes
at both ends, then
\begin{equation}
 \delta \cE_{out}
 =\frac{\kappa_\beta}{8\pi}\Delta\delta A.
\end{equation}
This reproduces the usual Rindler physical process first
law, including its positive area coefficient
\cite{Jacobson:2018ahi}.

\subsection{The de Sitter static patch}\label{app:wz-desitter}
Consider four-dimensional de Sitter static patch metric,
\begin{equation}
 \ed s^2=-F(r)\ed t^2+\frac{\ed r^2}{F(r)}
                   +r^2\ed\tom_2^2,\qquad
 F(r)=1-\frac{r^2}{L^2},\qquad
 \Lambda=\frac{3}{L^2},
\end{equation}
where $L$ is the de Sitter radius. The static patch is the region $0\leq r<L$, and $r=L$ is the cosmological horizon. We normalise $\xi=\partial_t$ to have unit norm at $r=0$. Its positive surface gravity on the future cosmological horizon is
\begin{equation}
  \kappa_{dS}=\frac{1}{L}.
\end{equation}
To construct coordinates regular at the horizon, it is useful to introduce the tortoise coordinate $r_*$, defined by
$\ed r_*/\ed r=F(r)^{-1}$ and $r_*(0)=0$:
\begin{equation}
 r_*=\frac{L}{2}\log\frac{L+r}{L-r},
\end{equation}
and regular null coordinates
\begin{equation}
 u=-\sqrt{2}L\,e^{-(t+r_*)/L},
 \qquad
 v=\sqrt{2}L\,e^{(t-r_*)/L}.
\end{equation}
The coordinates $(u,v,x^A)$ are Kruskal-type double-null coordinates, with the angular coordinates unchanged. In the static region, $u<0$ and $v>0$. The future cosmological horizon is $u=0$, where $v$ is an affine parameter along its generators; its origin $v=0$ is the bifurcation sphere. The dimensionless combination $\tilde{u}=uv/(2L^2)$ brings the metric to the form
\begin{equation}
 d s^2
 =-\frac{2\ed u\ed v}{(1-\tilde{u})^2}
  +L^2\left(\frac{1+\tilde{u}}{1-\tilde{u}}\right)^2
        \ed\tom_2^2.
 \label{eq:examples-ds-regular-metric}
\end{equation}
The future horizon is $u=0$, and its bifurcation sphere
is at $u=v=0$. On the future horizon,
\begin{equation}
 \ell=\partial_v,\qquad n=\partial_u,\qquad
 q_{AB}=L^2\gamma_{AB},\qquad
 \vartheta=0,\qquad\sigma_{AB}=0,
\end{equation}
where $\gamma_{AB}$ is the metric on the unit two-sphere.
The coordinate $v$ is affine there: in particular, $\Gamma^v{}_{vv}=u/[L^2(1-\tilde{u})]$ vanishes at $u=0$. Moreover,
\begin{equation}
 \xi=\partial_t
 =\frac{1}{L} \left(v\partial_v-u\partial_u\right),
 \qquad
 \xi\big|_{\cN^+}=\frac{v}{L}\ell.
\end{equation}
Thus, the static Killing field has exactly the boundary boost normalisation required by the charge construction. Evaluating \cref{eq:examples-charge,eq:examples-flux} gives
\begin{equation}
 A_0=4\pi L^2,\qquad
 \sq_\xi=-\frac{L}{2},\qquad
 \cS_0=\pi L^2,\qquad
 \sf_\xi=0.
 \label{eq:examples-ds-background}
\end{equation}
For a perturbation at fixed $\Lambda$, the first-order gravitational flux again vanishes, $\delta\sf_\xi=0$. The sourced charge identity becomes
\begin{equation}
 -\frac{1}{2\pi L}\Delta\delta\cS
 =\Delta\delta\cE=-\delta \cE_{out}.
\end{equation}
Equivalently,
\begin{align}
 \frac{1}{2\pi L}\Delta\delta\cS
 &=\delta \cE_{out}
 \nonumber\\
 &=\int_{v_1}^{v_2}\ed v
       \int_{\cC(v)}\boep^+_2\,
                      \delta T_{ab}\xi^a\ell^b
 \nonumber\\
 &=\frac{1}{L} \int_{v_1}^{v_2}\ed v\,v
       \int_{\cC(v)}\boep^+_2\,
                      \delta T_{ab}\ell^a\ell^b.
 \label{eq:examples-ds-ppfl}
\end{align}
This is the physical process first law for the dynamical cosmological horizon entropy \cite{Zhao:2025zny}.
The change of the matter Killing energy remaining inside the static patch satisfies
\begin{equation}
 \Delta\delta M=-\delta \cE_{out},
\end{equation}
provided the indicated horizon flux accounts for the energy exchange. Hence, the same law can be written as
\begin{equation}
 \Delta\delta M
 =-\frac{1}{2\pi L}\Delta\delta\cS.
 \label{eq:examples-ds-interior-law}
\end{equation}
The positive coefficient in
\cref{eq:examples-ds-ppfl} and the negative coefficient in \cref{eq:examples-ds-interior-law} are compatible: the former uses positive outgoing energy, while the latter uses the decrease of the interior energy. These are the two forms of the de Sitter physical
process law discussed in \cite{Zhao:2025zny}.
For arbitrary cross-sections,
\begin{equation}
 \delta\cS(v)=\frac14 \big(1-v\pa_v\big)\d A(v).
 \label{eq:examples-ds-dynamical entropy}
\end{equation}
If the endpoint expansion corrections vanish,
\cref{eq:examples-ds-ppfl} reduces to
\begin{equation}
 \delta \cE_{out}
 =\frac{1}{8\pi L}\Delta\delta A,
 \qquad
 \Delta\delta M
 =-\frac{1}{8\pi L}\Delta\delta A.
\end{equation}
More generally, no such endpoint assumption is needed when \cref{eq:examples-ds-dynamical entropy} is retained.
Finally, differentiating the integrated charge identity with respect to its upper endpoint gives
\begin{equation}
 \frac{\ed}{\ed v}\delta\cS
 =2\pi v\int_{\cC(v)}\boep^+_2\,
                     \delta T_{ab}\ell^a\ell^b
 \geq0
 \label{eq:examples-ds-second-law}
\end{equation}
under the null energy condition. Thus, positive energy leaving the static patch increases the perturbed dynamical horizon entropy, while decreasing the energy remaining inside the patch.

\subsection{Entanglement wedge in three-dimensional Poincar\'e AdS}
\label{sec:poincare_entanglement_wedge}

Consider three-dimensional Einstein gravity with
$\Lambda=-1/L_{AdS}^2$, in units $G=\hbar=1$. The Poincar\'e
AdS metric is
\begin{equation}
 ds^2=\frac{L_{AdS}^2}{z^2}
 \left(-\ed t^2+\ed z^2+\ed x^2\right),
 \qquad z>0.
 \label{eq:ew3_poincare_metric}
\end{equation}
For the boundary interval $B=\{t=0,-R<x<R\}$, the
vacuum Ryu--Takayanagi surface and entanglement wedge
are
\begin{align}
 \gamma_B &: \quad t=0,\qquad z^2+x^2=R^2,
 \label{eq:ew3_rt_surface}\\
 \cW_B &: \quad |t|+\sqrt{z^2+x^2}\leq R.
 \label{eq:ew3_wedge}
\end{align}
The curve $\gamma_B$ is the bifurcation surface of the
wedge-preserving Killing horizon. Its length computes
the interval entanglement entropy at leading classical
order \cite{Ryu:2006bv,Casini:2011kv}.

We introduce Gaussian null coordinates
$(u,v,\upsilon)$, where $\upsilon$ is dimensionless and
labels the null generators. Define the auxiliary
functions
\begin{equation}
 b(u,v)=1-\frac{uv}{L_{AdS}^2},
 \qquad
 \mathsf{D}(u,v,\upsilon)
 =
 L_{AdS}b\cosh\upsilon+\frac{(b+1)v-u}{2}.
 \label{eq:ew3_auxiliary_functions}
\end{equation}
The transformation to Poincar\'e coordinates is
\begin{align}
 t&=\frac{R\big[(b+1)v+u\big]}{2\mathsf{D}},
 \nonumber\\
 x&=\frac{RL_{AdS}b\sinh\upsilon}{\mathsf{D}},
 \qquad
 z=\frac{RL_{AdS}}{\mathsf{D}}.
 \label{eq:ew3_coordinate_transformation}
\end{align}
We work in a neighbourhood of the horizon with
$b>0$ and $\mathsf{D}>0$, so that the coordinates are regular
and remain in the Poincar\'e patch. Direct substitution
into \cref{eq:ew3_poincare_metric} gives
\begin{equation}
 ds^2
 =
 -2\,\ed u\,\ed v
 -\frac{u^2}{L_{AdS}^2}\ed v^2
 +L^2\left(1-\frac{uv}{L_{AdS}^2}\right)^2
       \ed\upsilon^2.
 \label{eq:ew3_gnc_metric}
\end{equation}
Thus
\begin{equation}
 g_{uu}=g_{u\upsilon}=0,
 \qquad
 g_{uv}=-1,
 \qquad
 g_{v\upsilon}=0,
 \qquad
 g_{vv}=O(u^2),
 \label{eq:ew3_gnc_conditions}
\end{equation}
which is the affine Gaussian null form adapted to
the future horizon $u=0$.
At $u=v=0$, the transformation reduces to
\begin{equation}
 t=0,
 \qquad
 x=R\tanh\upsilon,
 \qquad
 z=R\,\operatorname{sech}\upsilon,
 \label{eq:ew3_bifurcation_coordinates}
\end{equation}
so that this cut is precisely $\gamma_B$.
On $u=0$, it becomes
\begin{equation}
 t=\frac{Rv}{L_{AdS}\cosh\upsilon+v},
 \qquad
 x=\frac{RL_{AdS}\sinh\upsilon}{L_{AdS}\cosh\upsilon+v},
 \qquad
 z=\frac{RL_{AdS}}{L_{AdS}\cosh\upsilon+v}.
 \label{eq:ew3_future_horizon_coordinates}
\end{equation}
These expressions satisfy
\begin{equation}
 t+\sqrt{z^2+x^2}=R.
 \nonumber
\end{equation}
Hence $u=0$, $v\geq0$ describes the future wedge
horizon $\cN_B^+$. Similarly, $v=0$, $u\leq0$
describes the past horizon. The coordinate $u$ is
transverse to $\cN_B^+$; away from that horizon its
constant-value hypersurfaces need not be null.
On $\cN_B^+$, 
\begin{equation}
 \ell^a=(\partial_v)^a,
 \qquad
 n^a=(\partial_u)^a.
 \label{eq:ew3_null_normals}
\end{equation}
They obey
$\ell^a\ell_a=n^an_a=0$ and $\ell^an_a=-1$ there.
Moreover, \cref{eq:ew3_gnc_metric} gives
\begin{equation}
 \Gamma^a{}_{vv}\big|_{u=0}=0,
 \qquad
 \ell^b\nabla_b\ell^a\nulleq0.
 \label{eq:ew3_affineness}
\end{equation}
Thus, $v$ is affine and vanishes at $\gamma_B$,
while $\upsilon$ is constant along the generators.
The transverse vector $\partial_u$ is also affinely
parametrised, since $\Gamma^a{}_{uu}=0$.
The wedge-preserving Killing vector takes the simple
form
\begin{equation}
 \xi
 =
 \kappa_\beta
 \left(v\partial_v-u\partial_u\right),
 \qquad
 \kappa_\beta=-\beta>0.
 \label{eq:ew3_killing_vector}
\end{equation}
Indeed, the metric is invariant under
$v\mapsto e^s v$, $u\mapsto e^{-s}u$.
In Poincar\'e coordinates, the same vector is
\begin{equation}
 \xi=
 \frac{\kappa_\beta}{2R}
 \left[
 (R^2-t^2-x^2-z^2)\partial_t
 -2t(x\partial_x+z\partial_z)
 \right].
 \label{eq:ew3_killing_poincare}
\end{equation}
It vanishes on $\gamma_B$, and its restriction to
the future horizon is
\begin{equation}
 \xi^a\nulleq
 \kappa_\beta v\ell^a=-\beta v\ell^a.
 \label{eq:ew3_horizon_boost}
\end{equation}
The choice $\kappa_\beta=2\pi$ gives the vacuum
modular-flow normalisation \cite{Casini:2011kv}.
The constant-$v$ cuts $\cC(v)$ of $\cN_B^+$ have
induced metric and positively oriented length form
\begin{equation}
 q_{\upsilon\upsilon}=L_{AdS}^2,
 \qquad
 \boep_1^+=L_{AdS}\,\ed\upsilon.
 \label{eq:ew3_cut_geometry}
\end{equation}
Consequently,
\begin{equation}
 K_{\upsilon\upsilon}
 =
 \frac12\cL_\ell q_{\upsilon\upsilon}=0,
 \qquad
 \vartheta
 =
 \frac12q^{\upsilon\upsilon}
       \partial_v q_{\upsilon\upsilon}=0.
 \label{eq:ew3_expansion}
\end{equation}
The shear vanishes identically for a one-dimensional
cut, since its null second fundamental form has no
trace-free part. The absence of expansion here is
therefore manifest directly in Gaussian null
coordinates.

The cut length requires a regulator. We introduce
$z=z_{\mathrm{UV}}$ on $\gamma_B$ and transport
the resulting endpoints along the generators.
By \cref{eq:ew3_bifurcation_coordinates}, this fixes
\begin{equation}
 |\upsilon|\leq\upsilon_{\max},
 \qquad
 \cosh\upsilon_{\max}=\frac{R}{z_{\mathrm{UV}}}.
 \label{eq:ew3_cutoff}
\end{equation}
The same range of $\upsilon$ is used on every cut,
and hence
\begin{equation}
 A[\cC(v)]
 =
 \int_{-\upsilon_{\max}}^{\upsilon_{\max}}
 L_{AdS}\,\ed\upsilon
 =
 2L_{AdS}\,\operatorname{arcosh}
 \left(\frac{R}{z_{\mathrm{UV}}}\right).
 \label{eq:ew3_length}
\end{equation}
Here $A$ denotes the codimension-two area, which
is a length in three dimensions. A fixed Poincar\'e
cutoff on every cut would instead change the
retained generators. The transported regulator
avoids this additional contribution to the length
variation.

The three-dimensional version of the dynamical
entropy functional is
\begin{equation}
 \cS[\cC(v)]
 =
 \frac14\int_{\cC(v)}
 \boep_1^+(1-v\vartheta).
 \label{eq:ew3_dynamical_entropy}
\end{equation}
Using \cref{eq:ew3_expansion,eq:ew3_length}, we obtain
\begin{align}
 \cS[\cC(v)]
 &=
 \frac{L_{AdS}}{2}\operatorname{arcosh}
 \left(\frac{R}{z_{\mathrm{UV}}}\right)
 \nonumber\\
 &=
 \frac{L_{AdS}}{2}\log\left(\frac{2R}{z_{\mathrm{UV}}}\right)
 +O\left(\frac{L_{AdS}z_{\mathrm{UV}}^2}{R^2}\right)
 \nonumber\\
 &=
 \cS_{\mathrm{EE}}^{\mathrm{vac}}(B),
 \label{eq:ew3_vacuum_entropy}
\end{align}
where the last equality is understood at leading
classical holographic order with matched regulators.
Thus the dynamical entropy agrees with the vacuum
interval entropy on every cut of this non-expanding
horizon. This equality transports the entropy of
$\gamma_B$ along the horizon; it does not imply that
the later cuts are themselves extremal curves.

To identify the role of the expansion correction,
consider a smooth perturbation about this background,
in an adapted GNC which fixes the
generator labels, the affine parameter and its
origin. In particular, on the identified perturbed
horizon we retain
$\ell=\partial_v$ and affine parametrisation.
All variations below are evaluated on the AdS
background, and the regulator is transported with
the generators. Since $\vartheta_0=0$,
\begin{align}
 \d\cS[\cC(v)]
 &=
 \frac14\left[
 \d A(v)
 -v\int_{\cC(v)}\boep_{1,0}^+\,\d\vartheta
 \right]
 \nonumber\\
 &=
 \frac14\left(
 1-v\frac{\ed}{\ed v}
 \right)\d A(v),
 \label{eq:ew3_entropy_variation}
\end{align}
where
\begin{equation}
 \frac{\ed\,\d A}{\ed v}
 =
 \int_{\cC(v)}\boep_{1,0}^+\,\d\vartheta.
 \nonumber
\end{equation}
This is the Einstein gravity form of the
Hollands-Wald-Zhang expansion correction
\cite{Hollands:2024vbe}.
In three dimensions the affine null constraint is
\begin{equation}
 \partial_v\vartheta
 =
 -\vartheta^2-8\pi T_{ab}\ell^a\ell^b.
 \label{eq:ew3_null_constraint}
\end{equation}
There is no shear term, and the cosmological term
vanishes upon contraction with $\ell^a\ell^b$.
Linearising about the vacuum horizon gives
\begin{equation}
 \partial_v\d\vartheta
 =
 -8\pi\,\d T_{ab}\ell^a\ell^b.
 \label{eq:ew3_linearised_constraint}
\end{equation}
Therefore,
\begin{align}
 \frac{\ed^2\,\d A}{\ed v^2}
 &=
 -8\pi\int_{\cC(v)}
 \boep_{1,0}^+\,\d T_{ab}\ell^a\ell^b,
 \nonumber\\
 \frac{\ed\,\d\cS}{\ed v}
 &=
 -\frac{v}{4}\frac{\ed^2\,\d A}{\ed v^2}
 \nonumber\\
 &=
 2\pi v\int_{\cC(v)}
 \boep_{1,0}^+\,\d T_{ab}\ell^a\ell^b.
 \label{eq:ew3_entropy_balance}
\end{align}
For $v_2>v_1\geq0$, this yields
\begin{align}
 &\frac{\kappa_\beta}{2\pi}
 \left(
 \d\cS[\cC(v_2)]-\d\cS[\cC(v_1)]
 \right)
 \nonumber\\
 &\qquad =
 \int_{v_1}^{v_2}\ed v
 \int_{\cC(v)}
 \boep_{1,0}^+\,\d T_{ab}\xi^a\ell^b.
 \label{eq:ew3_integrated_balance}
\end{align}
The right-hand side uses the positive
future directed flux measure. With the signed WZ
charge convention
$\sq_\xi=-\kappa_\beta\cS/(2\pi)$, the corresponding
change in $\d\sq_\xi$ is its negative.

At the reference cut $v=0$, the expansion correction
vanishes. With fixed boundary anchoring and the
usual extremal-surface boundary conditions,
displacement of the reference extremal curve
does not contribute to its first-order length
variation. In the classical holographic regime
\cite{Hubeny:2007xt}, one consequently obtains
\begin{equation}
 \d\cS[\gamma_B]
 =
 \frac14\d A(\gamma_B)
 =
 \d\cS_{\mathrm{EE}}(B)
 =
 \d\langle K_B\rangle,
 \label{eq:ew3_entanglement_first_law}
\end{equation}
where
\begin{equation}
 K_B=
 2\pi\int_{-R}^{R}\ed x\,
 \frac{R^2-x^2}{2R}\,
 T_{tt}^{\mathrm{CFT}}
 +\mathrm{constant}
 \label{eq:ew3_modular_hamiltonian}
\end{equation}
is the modular Hamiltonian of the reference vacuum
state \cite{Casini:2011kv,Faulkner:2013ica}.
Bulk quantum entropy corrections are not included.

On later cuts, the dynamical entropy has the
gravitational interpretation expressed by
\cref{eq:ew3_integrated_balance}: its first-order
change measures matter boost-energy flux through
the horizon. In particular, a flux-free perturbation
with $\d A(v)=a_0+a_1v$ has $\d\cS=a_0/4$.
The expansion correction removes this homogeneous
linear length evolution. An identification with
boundary von Neumann entropy on a general dynamical
cut would require an additional prescription,
because such a cut need not be extremal.

\section{GSL for black holes with INEC}
\label{sec:bh_gsl}

In \cref{sec:generalise_second_law_cd}, we used the improved
quantum null energy condition (INEC) to formulate a
semiclassical entropy balance for the future null boundary
of a causal diamond. In this appendix, we discuss the
corresponding construction for a black hole horizon. The
comparison makes explicit the relation to the generalised
second law obtained by Hollands, Wald and Zhang (HWZ)
\cite{Hollands:2024vbe}, including the role of the remainder
in the quantum energy bound. We work in four-dimensional
semiclassical Einstein gravity and set $G_N=\hbar=1$.

Consider the future event horizon $\cH^+$ of a stationary
black hole with a bifurcate Killing horizon. Let $k^a$ be
its future directed, affinely parametrised null generator,
and choose the affine parameter $v$ to vanish at the
bifurcation surface. On $\cH^+$,
\beq
 \xi^a=\kappa v\,k^a,
 \qquad
 \kappa>0,
 \label{eq:bh_killing_generator}
\eeq
where $\xi^a$ is the horizon-generating Killing vector.
The Hawking temperature is $T_{\rm BH}=\kappa/(2\pi)$.
We use uniform cuts $\cC(v)$ with $v\ge0$, and write
\beq
 \sfa=\sqrt{q},
 \qquad
 \boep_2=\sfa\,d^2y,
 \qquad
 \vartheta=\frac{1}{\sfa}\frac{d\sfa}{dv}.
 \label{eq:bh_area_expansion}
\eeq
Here $\boep_2$ is the positive area form on a cut.
The background expansion and shear vanish. The quantities
$k^a$, $v$, $\xi^a$, $\kappa$, and the identification of the
cuts are held fixed under the perturbation.

Let $\cS_{vN}[v]$ denote the entropy of the quantum fields
outside the horizon cut $v=v(y)$. We use nested exterior
regions and the same assumptions concerning renormalisation
and shape derivatives as in
\cref{sec:generalise_second_law_cd}. On uniform cuts we write
$\cS_{vN}(v)=\cS_{vN}[v(y)=v]$. The quantum expansion is
\beq
 \vartheta_\fq
 =
 \vartheta+\frac{4}{\sfa}
 \frac{\delta\cS_{vN}[v]}{\delta v(y)}.
 \label{eq:bh_quantum_expansion}
\eeq

For comparison with the causal diamond calculation, we
first retain the expansion dependent form of the quantum
bound,
\beq
 \sfa\,\langle T_{ab}\rangle k^ak^b
 \ge
 \frac{1}{2\pi}
 \left[
 \left(
 \frac{\delta^2\cS_{vN}[v]}{\delta v(y)^2}
 \right)_{\rm diag}
 -
 \frac{1}{2}\vartheta
 \frac{\delta\cS_{vN}[v]}{\delta v(y)}
 \right].
 \label{eq:bh_INEC}
\eeq
Away from a stationary horizon, this is the conditional
INEC discussed in \cref{sec:generalise_second_law_cd},
whose derivation uses vanishing quantum expansion and
a shear-free congruence \cite{Ben-Dayan:2023inz}.
For the stationary background considered here, the required
bound is the ordinary QNEC, which we assume directly on the
Killing horizon; no condition of vanishing quantum expansion
is imposed on the perturbed state.

For a deformation of a single infinitesimal pencil of
generators, the differential combination in
\cref{eq:bh_INEC} obeys
\beq
 \frac{d^2\cS_{vN}}{dv^2}
 -
 \frac{1}{2}\vartheta\frac{d\cS_{vN}}{dv}
 =
 \sfa^{1/2}\frac{d}{dv}
 \left(
 \sfa^{-1/2}\frac{d\cS_{vN}}{dv}
 \right).
 \label{eq:bh_INEC_identity}
\eeq
For uniform displacements of the whole cut, we instead
use the full second derivative and define
\beq
 \mathcal{I}_{\rm INEC}^{\rm BH}(v)
 \equiv
 \frac{d^2\cS_{vN}(v)}{dv^2}
 -
 \frac{1}{2}
 \int_{\cC(v)}d^2y\,\vartheta(v,y)
 \left.
 \frac{\delta\cS_{vN}[v]}{\delta v(y)}
 \right|_{v(y)=v}.
 \label{eq:bh_I_INEC}
\eeq
The mixed shape derivatives on distinct generators
contribute non-positively by strong subadditivity
\cite{Bousso:2015mna}. Thus, as in
\cref{sec:generalise_second_law_cd}, integration of the
local bound gives
\beq
 \int_{\cC(v)}\boep_2\,
 \langle T_{ab}\rangle k^ak^b
 \ge
 \frac{1}{2\pi}\mathcal{I}_{\rm INEC}^{\rm BH}(v).
 \label{eq:bh_integrated_INEC}
\eeq

We evaluate the quantum inequality on the fixed background
geometry and perturb the quantum state. We assume that
the first-order perturbation satisfies
\beq
 \int_{\cC(v)}\boep_2\,
 \d\langle T_{ab}\rangle k^ak^b
 \ge
 \frac{1}{2\pi}\d\mathcal{I}_{\rm INEC}^{\rm BH}(v).
 \label{eq:bh_pert_INEC}
\eeq
As in the causal diamond discussion, a sufficient condition
is a differentiable one-sided family of states satisfying
\cref{eq:bh_integrated_INEC}, with the reference state
saturating it at every cut considered. For a differentiable
family admissible for both signs of the perturbation
parameter, saturation forces equality at first order.
The gravitational backreaction is incorporated by treating
$\d\langle T_{ab}\rangle$ as an external first-order source;
the matter entropy derivatives are evaluated on the
reference geometry at leading semiclassical order.

The background satisfies
\beq
 \cL_\xi g_{ab}=0.
 \label{eq:bh_killing_condition}
\eeq
Together with the stationarity property of the
Wald-Zoupas potential, this implies that the first-order
gravitational flux vanishes. The local physical process
first law for the HWZ dynamical entropy $\cS_{\rm dyn}$
therefore reads \cite{Hollands:2024vbe}
\beq
 \frac{\kappa}{2\pi}
 \frac{d}{dv}\d\cS_{\rm dyn}
 =
 \int_{\cC(v)}\boep_2\,
 \d\langle T_{ab}\rangle\xi^ak^b.
 \label{eq:bh_local_ppfl}
\eeq
Using \cref{eq:bh_killing_generator,eq:bh_pert_INEC},
we obtain
\begin{align}
 \frac{\kappa}{2\pi}
 \frac{d}{dv}\d\cS_{\rm dyn}
 &=
 \kappa v
 \int_{\cC(v)}\boep_2\,
 \d\langle T_{ab}\rangle k^ak^b
 \nonumber\\
 &\ge
 \frac{\kappa v}{2\pi}\,
 \d\mathcal{I}_{\rm INEC}^{\rm BH}(v),
 \label{eq:bh_ppfl_INEC}
\end{align}
and hence
\beq
 \frac{d}{dv}\d\cS_{\rm dyn}
 \ge
 v\,\d\mathcal{I}_{\rm INEC}^{\rm BH}(v).
 \label{eq:bh_entropy_INEC}
\eeq
We define the modified matter contribution by
\beq
 \frac{d}{dv}\d\cS_q^{\rm BH}(v)
 \equiv
 -v\,\d\mathcal{I}_{\rm INEC}^{\rm BH}(v).
 \label{eq:bh_Sq_definition}
\eeq
Equivalently,
\beq
 \d\cS_q^{\rm BH}(v)
 =
 \d\cS_q^{\rm BH}(v_0)
 -
 \int_{v_0}^{v}d\bar v\,
 \bar v\,\d\mathcal{I}_{\rm INEC}^{\rm BH}(\bar v).
 \label{eq:bh_Sq_integrated}
\eeq
The corresponding non-negative first-order remainder is
\beq
 \mathsf{R}_{\rm BH}(v)
 \equiv
 2\pi\int_{\cC(v)}\boep_2\,
 \d\langle T_{ab}\rangle k^ak^b
 -
 \d\mathcal{I}_{\rm INEC}^{\rm BH}(v)
 \ge0.
 \label{eq:bh_INEC_remainder}
\eeq

\begin{theorem}[Linearised GSL for black holes]
\label{thm:bh_linearised_GSL}
For first-order perturbations of the stationary black hole
satisfying \cref{eq:bh_pert_INEC}, the sum of the dynamical
gravitational entropy and the modified matter contribution
obeys
\beq
 \frac{d}{dv}
 \left(\d\cS_{\rm dyn}+\d\cS_q^{\rm BH}\right)
 \ge0.
 \label{eq:bh_GSL_INEC}
\eeq
Consequently, for $\cC(v_2)$ to the future of $\cC(v_1)$,
\beq
 \Delta\d\cS_{\rm dyn}
 +
 \Delta\d\cS_q^{\rm BH}
 \ge0.
 \label{eq:bh_integrated_GSL}
\eeq
\end{theorem}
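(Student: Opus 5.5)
The plan is to combine the local physical process first law \cref{eq:bh_local_ppfl} with the definitions \cref{eq:bh_Sq_definition} and \cref{eq:bh_INEC_remainder}, in direct parallel with the derivation of \cref{eq:cd_combined_entropy_balance} in \cref{sec:generalise_second_law_cd}. The key simplification relative to the diamond is that the background satisfies \cref{eq:bh_killing_condition}. The stationarity of the Wald-Zoupas potential then makes the first-order gravitational flux vanish. As a result, no flux term survives in the combined balance.

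The first step is to rewrite \cref{eq:bh_local_ppfl} using $\xi^a=\kappa v k^a$ from \cref{eq:bh_killing_generator}. This gives
\beq
 \frac{d}{dv}\d\cS_{\rm dyn}
 =2\pi v\int_{\cC(v)}\boep_2\,\d\langle T_{ab}\rangle k^ak^b .
\eeq
The second step is to substitute the definition of the remainder \cref{eq:bh_INEC_remainder}. This replaces $2\pi\int\boep_2\,\d\langle T_{ab}\rangle k^ak^b$ by $\d\mathcal{I}^{\rm BH}_{\rm INEC}+\mathsf R_{\rm BH}$. Adding \cref{eq:bh_Sq_definition} then cancels the $v\,\d\mathcal{I}^{\rm BH}_{\rm INEC}$ term, leaving the exact identity
\beq
 \frac{d}{dv}\left(\d\cS_{\rm dyn}+\d\cS_q^{\rm BH}\right)=v\,\mathsf R_{\rm BH}(v).
\eeq
Since $v\ge0$ on $\cH^+$ and $\mathsf R_{\rm BH}\ge0$ by the assumed perturbative bound \cref{eq:bh_pert_INEC}, this establishes \cref{eq:bh_GSL_INEC}. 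The integrated statement \cref{eq:bh_integrated_GSL} follows by integrating from $v_1$ to $v_2$. This relies on \cref{eq:bh_Sq_integrated}, whose reference constant cancels in the difference.

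The argument is short, and the only point needing care is whether the hypotheses actually hold in this setting. First, one must confirm that $\d\sf_\xi=0$ for the fixed generator and boundary structure. This holds because $\bTh$ vanishes on stationary solutions, so $\d\bTh(g;\cL_\xi g)=0$ when $\cL_\xi g=0$. Second, one must check that the $\vartheta$-dependent term in \cref{eq:bh_I_INEC} is harmless: it vanishes on the background, where $\vartheta=0$, and the inequality is evaluated on that fixed background. The bound therefore reduces to the QNEC, which is assumed on the Killing horizon. With these two points settled, the remaining steps are pure algebra. Finally, one should note that the sign here is opposite to that in the diamond case because the temperature $\kappa/(2\pi)$ is positive.
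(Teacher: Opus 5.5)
Your proof is correct and follows the paper's own argument: inserting $\xi^a=\kappa v k^a$ into \cref{eq:bh_local_ppfl}, rewriting the matter term through \cref{eq:bh_INEC_remainder}, and adding \cref{eq:bh_Sq_definition} gives exactly the balance $\frac{d}{dv}\big(\delta\cS_{\rm dyn}+\delta\cS_q^{\rm BH}\big)=v\,\mathsf{R}_{\rm BH}(v)\ge0$, which is then integrated. Your closing remark is misattributed, however: with zero flux the diamond balance has the same sign (\cref{eq:cd_GSL_zero_flux}), so the diamond's opposite behaviour comes from its first-order gravitational flux (nonzero background expansion), not from the sign of the temperature, which the paper treats as a convention of the charge normalisation.
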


\begin{proof}
Combining \cref{eq:bh_local_ppfl,eq:bh_Sq_definition}
and using \cref{eq:bh_INEC_remainder} gives the exact
first-order balance
\beq
 \frac{\kappa}{2\pi}
 \frac{d}{dv}
 \left(\d\cS_{\rm dyn}+\d\cS_q^{\rm BH}\right)
 =
 \frac{\kappa}{2\pi}\,
 v\,\mathsf{R}_{\rm BH}(v).
 \label{eq:bh_combined_entropy_balance}
\eeq
The right-hand side is non-negative since
$\kappa>0$, $v\ge0$, and $\mathsf{R}_{\rm BH}\ge0$.
Integration yields
\beq
 \frac{\kappa}{2\pi}
 \left(
 \Delta\d\cS_{\rm dyn}
 +
 \Delta\d\cS_q^{\rm BH}
 \right)
 =
 \frac{\kappa}{2\pi}
 \int_{v_1}^{v_2}dv\,v\,\mathsf{R}_{\rm BH}(v)
 \ge0,
 \label{eq:bh_integrated_entropy_balance}
\eeq
which proves the result.
\end{proof}
To express the same statement in terms of the matter
energy flux, define
\beq
 \Delta\d\cE_{\rm BH}
 \equiv
 \kappa
 \int_{v_1}^{v_2}dv\,v
 \int_{\cC(v)}\boep_2\,
 \d\langle T_{ab}\rangle k^ak^b.
 \label{eq:bh_matter_energy_flux}
\eeq
The integrated balance can then be written as
\beq
 T_{\rm BH}
 \left(
 \Delta\d\cS_{\rm dyn}
 +
 \Delta\d\cS_q^{\rm BH}
 \right)
 =
 \Delta\d\cE_{\rm BH}
 +
 T_{\rm BH}\Delta\d\cS_q^{\rm BH}
 \ge0.
 \label{eq:bh_entropy_flux_condition}
\eeq
Since $T_{\rm BH}>0$, non-negativity of the full
right-hand side is equivalent to a non-negative combined
entropy change. The assumed quantum bound ensures this
positivity through \cref{eq:bh_integrated_entropy_balance}.
Strict positivity of the weighted remainder integral
gives a strict increase.

\begin{corollary}[Saturation of the quantum energy bound]
\label{cor:bh_saturated_GSL}
If \cref{eq:bh_pert_INEC} is saturated throughout the
interval, then $\mathsf{R}_{\rm BH}(v)=0$ and
\beq
 \frac{d}{dv}
 \left(\d\cS_{\rm dyn}+\d\cS_q^{\rm BH}\right)=0.
 \label{eq:bh_saturated_GSL}
\eeq
In particular,
$\Delta\d\cS_{\rm dyn}+\Delta\d\cS_q^{\rm BH}=0$.
\end{corollary}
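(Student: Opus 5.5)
The corollary follows by specialising the exact first-order balance already obtained in the proof of \cref{thm:bh_linearised_GSL}, namely \cref{eq:bh_combined_entropy_balance}. That identity equates $\frac{\kappa}{2\pi}\frac{d}{dv}\left(\d\cS_{\rm dyn}+\d\cS_q^{\rm BH}\right)$ with the weighted remainder $\frac{\kappa}{2\pi}\,v\,\mathsf{R}_{\rm BH}(v)$. It is an equality, not an inequality: it comes from the local physical process law \cref{eq:bh_local_ppfl}, with $\xi^a=\kappa v k^a$ substituted, combined with the definition \cref{eq:bh_Sq_definition} of $\d\cS_q^{\rm BH}$ and the definition \cref{eq:bh_INEC_remainder} of $\mathsf{R}_{\rm BH}$. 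The inequality \cref{eq:bh_pert_INEC} enters only to guarantee $\mathsf{R}_{\rm BH}\ge0$, so the balance itself remains valid when the bound is saturated.

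The first step is to make precise what saturation means. Saturation of \cref{eq:bh_pert_INEC} at a cut $\cC(v)$ is the equality $2\pi\int_{\cC(v)}\boep_2\,\d\langle T_{ab}\rangle k^ak^b=\d\mathcal{I}^{\rm BH}_{\rm INEC}(v)$, which by \cref{eq:bh_INEC_remainder} is exactly $\mathsf{R}_{\rm BH}(v)=0$. Saturation throughout the interval therefore gives $\mathsf{R}_{\rm BH}\equiv0$ on $[v_1,v_2]$. Substituting this into \cref{eq:bh_combined_entropy_balance} and dividing by $\kappa/(2\pi)>0$ yields \cref{eq:bh_saturated_GSL} pointwise in $v$.

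For the integrated statement, integrate \cref{eq:bh_saturated_GSL} from $v_1$ to $v_2$. Equivalently, set $\mathsf{R}_{\rm BH}=0$ in \cref{eq:bh_integrated_entropy_balance}, whose right-hand side $\int_{v_1}^{v_2}dv\,v\,\mathsf{R}_{\rm BH}(v)$ then vanishes.

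The main obstacle is regularity in $v$. Both $\d\cS_{\rm dyn}$ and $\d\cS_q^{\rm BH}$ must be differentiable along the family of uniform cuts, so that the fundamental theorem of calculus applies to \cref{eq:bh_Sq_integrated}. This is part of the standing assumptions on shape derivatives and the smoothness of the perturbation, which I would invoke explicitly. It also helps to note that $\d\cS_q^{\rm BH}$ is defined only up to a $v$-independent constant, and such a constant drops out of every difference $\Delta$ appearing in the statement.
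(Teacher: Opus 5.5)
Your proposal is correct and follows the paper's own argument. Saturation of \cref{eq:bh_pert_INEC} is precisely $\mathsf{R}_{\rm BH}(v)=0$ by \cref{eq:bh_INEC_remainder}. Substituting this into the exact balance \cref{eq:bh_combined_entropy_balance} and its integrated form gives the result, and the paper likewise says it follows immediately from that balance.
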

This follows immediately from
\cref{eq:bh_combined_entropy_balance}. The individual
gravitational and matter entropy changes need not vanish;
they cancel in their sum at first order.

On the stationary background used throughout this
derivation, $\vartheta=0$, so the quantum inequality is
the QNEC,
\beq
 \sfa\,\langle T_{ab}\rangle k^ak^b
 \ge
 \frac{1}{2\pi}
 \left(
 \frac{\delta^2\cS_{vN}[v]}{\delta v(y)^2}
 \right)_{\rm diag}.
 \label{eq:bh_QNEC}
\eeq
Correspondingly, \cref{eq:bh_I_INEC} becomes
\beq
 \mathcal{I}_{\rm INEC}^{\rm BH}(v)
 =
 \frac{d^2\cS_{vN}(v)}{dv^2}.
 \label{eq:bh_I_QNEC}
\eeq
It follows from \cref{eq:bh_Sq_definition} that
\beq
 \d\cS_q^{\rm BH}
 =
 \d\cS_{vN}
 -
 v\,\frac{d}{dv}\d\cS_{vN}
 +
 \text{constant}.
 \label{eq:bh_modified_vN}
\eeq
Thus the modified contribution agrees, to first order,
with the HWZ modified von Neumann entropy. Substitution
into \cref{eq:bh_GSL_INEC} gives
\beq
 \frac{d}{dv}
 \left[
 \d\cS_{\rm dyn}
 +
 \d\left(
 \cS_{vN}-v\,\frac{d\cS_{vN}}{dv}
 \right)
 \right]
 \ge0,
 \label{eq:bh_HWZ_GSL}
\eeq
recovering the HWZ result \cite{Hollands:2024vbe}.
At this order, the expansion dependent term in INEC
therefore supplies no additional correction on the
stationary background.

The comparison with the causal diamond is most directly
made at the level of the exact entropy balances.
For the future boundary of the diamond,
\cref{sec:generalise_second_law_cd} gives
\beq
 -\frac{\kappa_\beta}{2\pi}
 \left(\Delta\d\cS+\Delta\d\cS_q\right)
 =
 \d\sf_\xi[\cN^+_{v_1v_2}]
 -
 \frac{\kappa_\beta}{2\pi}
 \int_{v_1}^{v_2}dv\,v\,\mathsf{R}(v).
 \label{eq:bh_cd_entropy_comparison}
\eeq
Positivity of the full right-hand side implies a decrease of the combined perturbed entropy. When the diamond's quantum bound is saturated, this condition reduces to positivity of $\d\sf_\xi[\cN^+_{v_1v_2}]$. For the stationary black hole, the right-hand side of \cref{eq:bh_integrated_entropy_balance} is non-negative and the temperature is positive, so the combined entropy
increases or remains constant. Saturation makes its first-order change vanish because there is no first-order gravitational flux.

The comparison concerns the full balances, including their oriented matter contributions and the gravitational flux present for a generic diamond.
Extending the black hole argument to a nonstationary background would require both an applicable quantum energy bound and the corresponding physical process balance with its first-order gravitational flux retained. The zero-flux first law used here applies to perturbations of the stationary background. The second order vacuum flux discussed in \cref{sec:physical_process_1st_law}
belongs to a different perturbative order and does not supply an additional term in the first-order GSL above.

\section{Symmetry of a null surface in Einstein-scalar gravity} \label{sec:ES-null-symmetry}
In this appendix we briefly discuss the symmetry of a null surface in the Einstein-scalar theory.
The minimally coupled scalar as considered in \cref{sec:EH-scalar_gravity} does not change the
kinematical definition of the null-boundary structure.
Accordingly, we start with the boundary-preserving
diffeomorphisms introduced in
\cref{eq:symm_generator_GNC}. In affine coordinates
$(v,x^A)$ on $\cN$, their generators are restrict to
\begin{equation}
 \xi\big|_{\cN}
 =f\ell+X^A\pa_A,\qquad
 f=\varsigma-v\beta,\qquad
 \ell=\pa_v,
 \label{eq:ES-symmetry-generator}
\end{equation}
where
\begin{equation}
 \pa_v\varsigma=0,\qquad
 \pa_v\beta=0,\qquad
 \pa_vX^A=0.
\end{equation}
These conditions imply
\begin{equation}
 [\xi,\ell]=\beta\ell.
 \label{eq:ES-generator-commutator}
\end{equation}
A boundary preserving diffeomorphism generally acts on both dynamical fields,
\begin{equation}
 \delta_\xi g_{ab}=\cL_\xi g_{ab},
 \qquad
 \delta_\xi\P=\cL_\xi\P.
\end{equation}
The presence of the scalar does not by itself require $\delta_\xi\P=0$. In the following, we impose this additional condition and determine the subalgebra preserving a prescribed scalar profile on $\cN$. Thus, we consider the scalar stabiliser within the null-boundary symmetry algebra, rather than the isometries of a complete Einstein-scalar solution.
The scalar-preservation condition is
\begin{equation}
 \cL_\xi\P
 =(\varsigma-v\beta)\pa_v\P
   +X^A\pa_A\P=0.
 \label{eq:ES-scalar-stabiliser}
\end{equation}
\\

\paragraph{Case I:}\label{sec:ES_symmetry_i} First, we choose that the scalar field is constant along the null generators, i.e.,
\begin{equation}
 \cL_\ell\P=\pa_v\P=0.
 \label{eq:ES-caseI}
\end{equation}
Then the scalar restricted to $\cN$ has the form
\begin{equation}
 \P(v,x^A)=\P_0(x^A).
\end{equation}
Substituting into \cref{eq:ES-scalar-stabiliser}
gives
\begin{equation}
 X^A\pa_A\P_0=0.
 \label{eq:ES-caseI-restriction}
\end{equation}
There is no additional restriction on $\varsigma$
or $\beta$. The affine supertranslations and the rescalings of the generators therefore survive, whereas the transverse diffeomorphisms must preserve the scalar profile.
If $\P_0$ is constant over the cross-section, \cref{eq:ES-caseI-restriction} is automatic and the complete null surface algebra is retained. If $\P_0$ is nonconstant, the allowed transverse fields form the subalgebra
\begin{equation}
 \mathfrak{diff}(\cZ)_{\P_0} =
 \left\{  X\in\mathfrak{diff}(\cZ): X^A\pa_A\P_0=0 \right\}.
 \label{eq:ES-transverse-stabiliser}
\end{equation}
Geometrically, these fields are tangent to the level sets of $\P_0$. On a patch where $\ed\P_0\neq0$, one may choose coordinates $(p,z)$ with $p=\P_0$.
Then
\begin{equation}
 X=Y(p,z)\pa_z,
\end{equation}
where $Y$ is arbitrary. This local description must extend smoothly across critical points of $\P_0$ when a globally defined symmetry is required.
Closure of the algebra is straightforward,
\begin{equation}
 [X_1,X_2]\P_0
 =X_1(X_2\P_0)-X_2(X_1\P_0)=0.
\end{equation}
The resulting algebra is consequently
\begin{equation}
 \mathfrak g_{\P_0} =
 \mathfrak{diff}(\cZ)_{\P_0}  \ltimes
 \left(\mathfrak w\ltimes\mathfrak t_0\right),
 \label{eq:ES-caseI-algebra}
\end{equation}
where $\mathfrak t_0$ denotes affine supertranslations and $\mathfrak w$ denotes the Weyl rescalings. The combined longitudinal sector is nonabelian: for $X_1=X_2=0$,
\begin{equation}
 [\xi_1,\xi_2]
 =\left(\varsigma_2\beta_1 -\varsigma_1\beta_2\right)\ell.
\end{equation}
Only the affine supertranslation sector is abelian by itself.
\vspace{.5cm}
\paragraph{Case II:}\label{sec:ES_symmetry_ii} Now we consider the scalar field varying along the null generators as follows
\begin{equation}
 \cL_\ell\P=\pa_v\P=\fa\,\P.
 \label{eq:ES-caseII}
\end{equation}
Here $\fa$ denotes the scalar evolution rate, distinct
from the area density $\sfa$ used in the GSL discussion.
Initially, we allow $\fa$ to depend on both $v$ and $x^A$. 
The scalar profile is
\begin{equation}
 \P(v,x^A)=\P_0(x^A) \exp\left[ \int_0^v\ed\bar v\,\fa(\bar v,x^A) \right].
 \label{eq:ES-caseII-solution}
\end{equation}
The condition $\cL_\xi\P=0$ now reads
\begin{equation}
 \fa(\varsigma-v\beta)\P + X^A\pa_A\P=0.
 \label{eq:ES-caseII-preservation}
\end{equation}
Compatibility with the evolution equation gives a second useful relation. Using \cref{eq:ES-generator-commutator} and $\cL_\xi\P=0$, we find
\begin{align}
 \cL_\xi(\cL_\ell\P)  &=\cL_\ell(\cL_\xi\P) +\cL_{[\xi,\ell]}\P
 =\beta\fa\P,\\
 \cL_\xi(\fa\P) &= (\cL_\xi\fa)\P.
\end{align}
Therefore,
\begin{equation}
 \left(\cL_\xi\fa-\beta\fa\right)\P=0.
 \label{eq:ES-caseII-compatibility}
\end{equation}
On a region where $\P\neq0$, this becomes
\begin{equation}
 (\varsigma-v\beta)\pa_v\fa + X^A\pa_A\fa=\beta\fa.
 \label{eq:ES-rate-condition}
\end{equation}
The rate is therefore not generally invariant under $\xi$: it transforms with the rescaling weight of the null generator.
On the same patch, scalar preservation on the initial cut requires
\begin{equation}
 \fa(0,x)\varsigma + X^A\pa_A\ln|\P_0| = 0.
 \label{eq:ES-caseII-initial-condition}
\end{equation}
Together, \cref{eq:ES-rate-condition,eq:ES-caseII-initial-condition}
are necessary and sufficient for scalar preservation along the generators. Indeed,
\begin{equation}
 \pa_v\left(\frac{\cL_\xi\P}{\P}\right) = \cL_\xi\fa-\beta\fa,
\end{equation}
so the compatibility condition propagates the initial scalar preservation condition. An explicit classification is obtained when the rate is constant along each generator,
\begin{equation}
 \pa_v\fa=0,\qquad \fa=\fa(x^A).
\end{equation}
In this case,
\begin{equation}
 \P=e^{\fa v}\P_0,
\end{equation}
and direct substitution gives
\begin{equation}
 \cL_\xi\P = 
 e^{\fa v} \left[  \fa\,\varsigma\P_0+X^A\pa_A\P_0
  + v\P_0\left(X^A\pa_A\fa-\fa\beta\right)  \right].
\end{equation}
Since this must vanish for every $v$, we get
\begin{align}
 \fa\varsigma\P_0+X^A\pa_A\P_0&=0,
 \label{eq:ES-caseII-varsigma}\\
 \P_0\left(X^A\pa_A\fa-\fa\beta\right)&=0.
 \label{eq:ES-caseII-beta}
\end{align}
These equations remain valid at zeros of $\P_0$ or $\fa$, without division by either quantity.
Where $\fa\P_0\neq0$, they imply
\begin{equation}
 \varsigma = -\frac{X^A\pa_A\ln|\P_0|}{\fa},
 \qquad
 \beta=\frac{X^A\pa_A\fa}{\fa}.
 \label{eq:ES-caseII-parameters}
\end{equation}
Thus, the allowed generator is
\begin{equation}
 \xi\big|_{\cN} =X^A\pa_A
 -\frac{ X^A\pa_A\ln|\P_0| + vX^A\pa_A\fa}{\fa}\,\ell.
 \label{eq:ES-caseII-generator}
\end{equation}
The longitudinal displacement compensates for the change of the scalar produced by the transverse diffeomorphism. Unlike case $I$, there are no independent longitudinal transformations in this region: $X^A=0$ implies $\varsigma=\beta=0$.
The local algebra becomes especially clear when expressed in the following coordinates,
\begin{equation}
 p=\ln|\P|=\ln|\P_0|+\fa v,\quad \text{and} \quad x^A.
\end{equation}
Since $\pa_vp=\fa\neq0$, these coordinates are locally
admissible, and \cref{eq:ES-caseII-generator} becomes
\begin{equation}
 \xi(X)=X^A(x)\left.\pa_A\right|_p.
\end{equation}
Hence,
\begin{equation}
 [\xi(X_1),\xi(X_2)]
 =\xi([X_1,X_2]).
\end{equation}
The stabiliser is locally isomorphic to $\mathfrak{diff}(\cZ)$, embedded in the original boundary algebra through \cref{eq:ES-caseII-parameters}. Global regularity and preservation of any specified endpoint cuts may impose further restrictions.

If $\fa$ is a nonzero constant over $\cN$, \cref{eq:ES-caseII-parameters} simplifies to
\begin{equation}
 \beta=0,\qquad
 \varsigma = -\frac{1}{\fa}X^A\pa_A\ln|\P_0|.
 \label{eq:ES-caseII-constant-rate}
\end{equation}
If $\P_0$ is also a nonzero constant, then $\varsigma=0$, and only the transverse diffeomorphisms remain.
Preserving the cut $v=0$ additionally requires $\varsigma=0$. Where $\fa\P_0\neq0$, the remaining conditions are
\begin{equation}
 X^A\pa_A\P_0=0,\qquad
 \beta=\frac{X^A\pa_A\fa}{\fa}.
\end{equation}
For a nonzero constant rate, these reduce to $\beta=0$ and
$X\in\mathfrak{diff}(\cZ)_{\P_0}$. The relation $\pa_v\P=\fa\P$ specifies the behaviour of the scalar on the boundary; it does not assert an internal scaling symmetry of the Einstein-scalar action. Moreover, the above classification is kinematical. An admissible on-shell configuration must also satisfy the coupled Einstein and scalar equations.
Finally, the stabiliser is defined for a prescribed scalar profile. Its generators can therefore be held fixed under the variations used in the charge construction. If instead the scalar profile is varied and the generators are recomputed from \cref{eq:ES-caseII-parameters}, they become field dependent, and their charge variations and
brackets require the corresponding modifications.

\let\c\oldc \let\d\oldd \let\i\oldi
\bibliography{ref}

\end{document}